\documentclass[11pt]{amsart}

\usepackage[margin=1in]{geometry}
\usepackage{amsmath,amssymb,amsthm,mathtools,mathrsfs}
\usepackage{enumitem}
\usepackage{hyperref}
\hypersetup{hidelinks}
\usepackage{microtype}
\usepackage{booktabs}

\usepackage{aliascnt}

\newtheorem{theorem}{Theorem}[section]

\newaliascnt{proposition}{theorem}
\newtheorem{proposition}[proposition]{Proposition}
\aliascntresetthe{proposition}

\newaliascnt{lemma}{theorem}
\newtheorem{lemma}[lemma]{Lemma}
\aliascntresetthe{lemma}

\newaliascnt{corollary}{theorem}
\newtheorem{corollary}[corollary]{Corollary}
\aliascntresetthe{corollary}

\theoremstyle{definition}

\newaliascnt{definition}{theorem}
\newtheorem{definition}[definition]{Definition}
\aliascntresetthe{definition}

\newaliascnt{convention}{theorem}
\newtheorem{convention}[convention]{Convention}
\aliascntresetthe{convention}

\newaliascnt{remark}{theorem}
\newtheorem{remark}[remark]{Remark}
\aliascntresetthe{remark}

\usepackage[capitalise,nameinlink,noabbrev]{cleveref}
\crefname{theorem}{Theorem}{Theorems}
\crefname{proposition}{Proposition}{Propositions}
\crefname{lemma}{Lemma}{Lemmas}
\crefname{corollary}{Corollary}{Corollaries}
\crefname{definition}{Definition}{Definitions}
\crefname{convention}{Convention}{Conventions}
\crefname{remark}{Remark}{Remarks}

\DeclareMathOperator{\Law}{Law}
\DeclareMathOperator{\tr}{tr}

\DeclareMathOperator{\conv}{conv}

\DeclareMathOperator{\argmax}{argmax}
\DeclareMathOperator{\Cov}{Cov}
\newcommand{\one}{\mathbf 1}
\newcommand{\R}{\mathbb R}
\newcommand{\E}{\mathbb E}

\newcommand{\cP}{\mathcal P}
\newcommand{\cX}{\mathcal X}
\newcommand{\cY}{\mathcal Y}
\newcommand{\cH}{\mathcal H}
\newcommand{\cM}{\mathcal M}
\newcommand{\fM}{\mathfrak M}
\newcommand{\cK}{\mathcal K}

\newcommand{\cF}{\mathcal F}
\newcommand{\cS}{\mathcal S}

\title[Anchored Likelihood-Ratio Geometry of Anonymous Shuffle Experiments]{Anchored Likelihood-Ratio Geometry of Anonymous Shuffle Experiments:\\Exact Privacy Envelopes and Universal Low-Budget Design}

\author{Alex Shvets}
\address{Independent Researcher, Haifa, Israel}
\email{alex@shvets.io}
\date{March 2026}

\keywords{shuffle model, differential privacy, likelihood-ratio law, affine experiment, privacy amplification, frequency estimation, minimax design, augmented randomized response}
\subjclass[2020]{62B15, 62C20, 68P27, 60E15, 62G05}

\begin{document}

\begin{abstract}
We develop a rigorous geometric framework for anonymous shuffle experiments based on a single object: an \emph{anchored affine likelihood-ratio law}, namely a mean-zero probability measure $\rho$ on the regular simplex polytope
\[
\cX_d:=\{x\in\R^{d-1}: 1+\gamma_i^\top x\ge 0\ \text{for all }i\in[d]\},
\]
where $\gamma_1,\dots,\gamma_d$ are the vertices of the centered regular simplex in $\R^{d-1}$. Every finite-output $d$-ary channel is represented, up to conditionally identical refinements, by a unique anchored law, and conversely every anchored law realizes a valid channel.

On the privacy side, pairwise shuffled privacy reduces to one-dimensional shadows of projective fibers. Fix a common local parameter $\varepsilon_0$. Among all $\varepsilon_0$-LDP channels, binary randomized response universally extremizes all convex $f$-divergences and all hockey-stick privacy profiles after shuffling, pointwise over all alphabet sizes and all ordered row pairs $i\neq j$ in the one-coordinate contamination experiment. We also prove a rigidity converse: if both directed hockey-stick envelopes are saturated at a fixed finite $n$, then the underlying pairwise one-user likelihood-ratio law is exactly the binary endpoint law.

On the design side, the covariance matrix $\Sigma(\rho)=\int xx^\top\rho(dx)$ governs the canonical estimator. Under the canonical pairwise $\chi_*$ budget, permutation averaging never increases the budget and never increases the worst-case exact i.i.d.\ or fixed-composition canonical risk. This yields exact finite-$n$ canonical optimality formulas, an exact trace-cap theorem, and a two-orbit reduction of the global $\chi_*$ frontier: every frontier point is realized by a mixture of at most two exchangeable orbit laws. In the low-budget regime, augmented randomized response is asymptotically minimax-optimal, to the sharp leading constant, over all channels and all estimators in the i.i.d.\ multinomial model.

Under the raw local $\varepsilon_0$-LDP cap, the exact anchored canonical design problem reduces to an exchangeable law within the already-known subset-selection mechanism class; the maximizing subset size is determined by an explicit one-dimensional maximization problem, and the exact finite-$n$ i.i.d.\ and fixed-composition canonical risks are explicit.

The main privacy and design arguments are self-contained and do not rely on the author's trilogy. It also clarifies a structural limitation: the anchored geometry is the correct language for exact privacy envelopes and low-budget canonical design, but it does not by itself subsume the non-Gaussian fixed-composition asymptotics of the author's trilogy.
\end{abstract}

\maketitle
\tableofcontents

\section{Introduction}

The shuffle model occupies a precise middle ground between the local and curator models of differential privacy. Each user privatizes her own datum locally, but the observer receives only an anonymized collection of the privatized messages rather than their user labels. This anonymity can substantially amplify privacy, and it also changes the geometry of optimal mechanism design. The main claim of the present paper is that, for exact privacy envelopes and exact low-budget canonical design, the right state variable is neither the raw channel matrix nor the output alphabet. It is a single mean-zero law on a fixed polytope.

\medskip
\noindent\textbf{The shuffle model.}
Fix an input alphabet $[d]=\{1,\dots,d\}$ and a finite-output local channel $W:[d]\to \Delta(\mathcal Y)$. For a dataset $x^n=(x_1,\dots,x_n)\in [d]^n$, user $m$ draws
\[
Y_m\sim W(\cdot\mid x_m)
\qquad (m=1,\dots,n)
\]
independently, and a shuffler outputs a uniformly random permutation of $(Y_1,\dots,Y_n)$. Equivalently, the observer sees only the unordered multiset, or histogram, of messages. Privacy is evaluated on neighboring datasets that differ in one coordinate. Design questions arise when the inputs are themselves random, for example i.i.d.\ with composition vector $\theta\in \Delta_d$, or deterministic with a prescribed empirical composition.

\medskip
\noindent\textbf{Privacy functionals.}
For probability measures $P,Q$ on a common space with $P\ll Q$, and a convex function $f:[0,\infty)\to \R$ with $f(1)=0$, write
\[
D_f(P\|Q):=\int f\!\left(\frac{dP}{dQ}\right)dQ.
\]
We also use the hockey-stick functional
\[
H_\alpha(P,Q):=\int\left(\frac{dP}{dQ}-\alpha\right)_+ dQ,\qquad \alpha\ge 1.
\]
The exact one-sided $(\varepsilon,\delta)$ tradeoff is encoded by $H_{e^\varepsilon}(P,Q)$, and the usual two-sided differential privacy condition requires the same bound after exchanging $P$ and $Q$. Thus a pointwise upper envelope for all hockey-stick profiles is stronger than a single asymptotic $(\varepsilon,\delta)$ estimate. One purpose of the present paper is to obtain such envelopes exactly at finite $n$, uniformly over all alphabet sizes and all local mechanisms satisfying a prescribed local constraint.

\medskip
\noindent\textbf{The anchored law.}
Let $H\in \R^{d\times(d-1)}$ have orthonormal columns spanning
\[
T_d:=\{u\in \R^d:\one^\top u=0\},
\]
and define
\[
\gamma_i:=H^\top e_i\in \R^{d-1},\qquad i=1,\dots,d.
\]
Then $\gamma_1,\dots,\gamma_d$ are the vertices of the centered regular simplex:
\[
\sum_{i=1}^d \gamma_i=0,
\qquad
\gamma_i^\top \gamma_j=\delta_{ij}-\frac1d.
\]
The corresponding simplex polytope is
\[
\cX_d:=\{x\in \R^{d-1}:1+\gamma_i^\top x\ge 0\ \text{for all }i\in[d]\}.
\]
An \emph{anchored affine likelihood-ratio law} is a probability measure $\rho\in \cP(\cX_d)$ with mean zero. The exact representation theorem in Section~\ref{sec:anchored} shows that every finite-output $d$-ary channel is represented by a unique such law, up to conditionally identical refinements. In this gauge, one-user mixture laws become affine tilts of $\rho$, pairwise likelihood ratios become scalar shadows of projective fibers, and canonical design becomes a moment problem for the covariance matrix
\[
\Sigma(\rho):=\int xx^\top\,\rho(dx).
\]
The representation itself is a coordinate-specific instance of the classical posterior/standard-experiment representation of statistical experiments; see Blackwell \cite{blackwell-comparison} and Torgersen \cite[Chapter~2]{torgersen}. The contribution is the explicit regular-simplex gauge adapted to the shuffle privacy and canonical design problems below.

\medskip
\noindent\textbf{Why this paper is not the trilogy in disguise.}
The author's trilogy \cite{shvets-part1,shvets-part2,shvets-part3} studies neighboring shuffle experiments through exact likelihood-ratio identities, quotient carriers, and Gaussian/non-Gaussian asymptotic regimes. The present paper has a different scope. It works with a single $d$-ary channel rather than a neighboring two-row pair, and it treats exact finite-$n$ privacy envelopes and exact canonical design problems. The anchored law is the correct carrier for those exact questions, but it is not a replacement for the quotient geometry that drives the non-Gaussian fixed-composition asymptotics of Parts~II--III. Making that division of labor mathematically explicit is one of the goals of this paper.

A second point of comparison is the paper \cite{shvets-growing} on growing alphabets and low-budget design. That work established several headline phenomena: universal binary extremality at the $\chi^2$ level, asymptotic low-budget optimality of augmented randomized response under the $\chi_*$-budget, optimality of GRR within the subset-selection family, and a van Trees lower bound. More concretely, \cite{shvets-growing} already proved exact pairwise likelihood-ratio compression, the sharp universal pairwise $\chi^2$ extremal bound with endpoint two-point equality characterization, and low-budget augmented-GRR optimality among permutation-equivariant finite-output channels; what is new here is the anchored-law reformulation, the finite-$n$ convex-order/hockey-stick upgrade with rigidity, the sharp all-estimators low-budget constant, and the closure of the full global $\chi_*$ frontier.

\medskip
\noindent\textbf{Related work.}
The shuffle-model privacy literature has developed several complementary analytic frameworks. The foundational shuffle-model amplification results were established by Erlingsson, Feldman, Mironov, Raghunathan, Talwar, and Thakurta \cite{erlingsson-shuffle} and, independently, by Cheu, Smith, Ullman, Zeber, and Zhilyaev \cite{cheu-shuffle}. The privacy-blanket decomposition of Balle, Bell, Gasc\'on, and Nissim \cite{balle-blanket} provides a general method for amplification bounds. Feldman, McMillan, and Talwar introduced the clone paradigm and then sharpened it to obtain stronger approximate-DP and R\'enyi-DP amplification bounds together with improved numerical accounting \cite{feldman-clones,feldman-stronger}. Girgis, Data, Diggavi, Suresh, and Kairouz derived R\'enyi-DP guarantees for general discrete local randomizers in shuffled distributed learning \cite{girgis-rdp}. Koskela, Heikkil\"a, and Honkela \cite{koskela-shuffle} developed numerical accounting methods for the shuffle model based on hockey-stick divergences. Wang et al.\ \cite{wang-unified} gave a unified variation-ratio reduction framework with tight numerical amplification bounds. More recently, Takagi and Liew \cite{takagi-liew} analyzed shuffling beyond pure local differential privacy through blanket divergence, introduced a one-parameter shuffle index, and obtained asymptotic upper and lower bands together with an FFT-based finite-$n$ accountant.

On the mechanism-design side, Kairouz, Oh, and Viswanath \cite{kov} showed that for a broad class of convex utility functionals under pure local differential privacy one may restrict attention to staircase mechanisms; they also identified binary/randomized-response-type mechanisms as universally optimal in the appropriate privacy regimes. For frequency estimation under local differential privacy, Wang, Blocki, Li, and Jha \cite{wang-ldp} introduced subset-selection mechanisms, and Ye and Barg \cite{ye-barg} proved that subset selection with an optimized subset size is asymptotically order-optimal.

The present paper is complementary to all of the above. Its central objects are an anchored affine law, projective fibers, and convex-order arguments. This leads to exact finite-$n$ universal envelopes for all convex $f$-divergences and all hockey-stick profiles, together with a rigidity converse, and to exact canonical mechanism design statements under two natural local budgets. To our knowledge, neither the projective-fiber representation nor the converse rigidity statement appears in the blanket, clone, or extremal-mechanism formalisms in this exact form. On the raw-cap design side, the optimizer class---subset selection---was identified in earlier work \cite{wang-ldp,ye-barg}; the present contribution is the exact anchored canonical formulation, explicit finite-$n$ risk formulas, and the integration with the shuffle-privacy envelope theory.

\medskip
\noindent\textbf{Main contributions.}
The paper makes five theorem-level contributions.

\begin{enumerate}[label=\textup{(\arabic*)}]
\item \emph{Anchored representation as a gauge.}
We specialize the classical posterior/standard-experiment representation to the regular-simplex gauge adapted to shuffle privacy and canonical design. Every finite-output $d$-ary channel induces a unique finitely supported anchored law $\rho\in\fM_d$ up to conditionally identical refinements, and every anchored law defines a valid $\cX_d$-valued channel.

\item \emph{Projective fiber reduction for privacy.}
Pairwise shuffled privacy reduces exactly to one-dimensional shadows of projective fibers. The core one-dimensional pairwise reduction was established in \cite{shvets-growing}; the present formulation embeds it in the global projective geometry of Section~\ref{sec:projective}, which additionally provides the transport formula (\cref{prop:fiber-props}(ii)), the exact interior converse (\cref{thm:converse-fiber}), and the boundary analysis (\cref{rem:boundary-fibers}).

\item \emph{Universal convex-order privacy envelope and rigidity.}
Building on the pairwise likelihood-ratio reduction, we show that the binary endpoint law maximizes after shuffling all convex functionals of the neighboring likelihood-ratio average under a common $\varepsilon_0$-LDP cap, hence all convex $f$-divergences and both directed hockey-stick profiles. We also prove a finite-$n$ rigidity converse: simultaneous saturation of both directed hockey-stick envelopes forces the one-user pairwise likelihood-ratio law to be the endpoint two-point law.

\item \emph{Exact canonical design under the $\chi_*$ budget.}
We derive exact i.i.d.\ and fixed-composition canonical risk formulas, prove the trace cap, and show that the full global $\chi_*$ frontier is the upper concave envelope of the orbit curve $\Gamma_d$; consequently every frontier point is realized by a mixture of at most two exchangeable orbit laws. In the low-budget regime this recovers augmented randomized response as the unique canonical optimizer.

\item \emph{Exact canonical design under the raw local cap.}
Under the raw local $\varepsilon_0$-LDP constraint, the exact canonical design problem admits an exchangeable subset-selection solution---a mechanism class identified in earlier work \cite{wang-ldp,ye-barg}. Every extreme exchangeable optimizer is a maximizing subset-selection law, and every exchangeable optimizer is a convex combination of the maximizing subset-selection laws. The present paper gives the exact anchored canonical formulation: the optimizing subset size is determined by the one-dimensional family $s\mapsto T_{d,e^{\varepsilon_0}}(s)$, ties are handled explicitly, and the exact finite-$n$ canonical risks are closed-form.
\end{enumerate}

\medskip
\noindent\textbf{A notation bridge to the trilogy.}
Readers familiar with \cite{shvets-part1,shvets-part2,shvets-part3} may find it useful to keep the following dictionary in mind. In the trilogy, the basic object is a neighboring two-row experiment and its quotient likelihood field. In the present paper, the single anchored law $\rho$ plays the role of the exact carrier for a full $d$-ary channel. The centered composition coordinate is
\[
h_\theta=H^\top\!\left(\theta-\frac{\one}{d}\right)\in \cH_d,
\]
which is the anchored image of the composition vector. The covariance matrix $\Sigma(\rho)$ is an exact canonical-design object; it should not be confused with the composition-dependent neighboring covariance matrices $\Sigma_\pi$ of the trilogy. Likewise, the projective fibers $\eta_h$ are exact finite-$n$ objects attached to the anchored law, not the non-Gaussian quotient carriers of Parts~II--III.

\medskip
\noindent\textbf{Scope and limitation.}
The present paper is self-contained on the privacy and design sides. No result from the trilogy is used in any proof. At the same time, the scope is deliberately narrower than a full asymptotic unification theorem. For interior base points, projective fibers are bounded, so any asymptotic theory based only on the interior anchored gauge is necessarily Gaussian there. That observation is structural, not cosmetic: it explains why the non-Gaussian fixed-composition limits of \cite{shvets-part2,shvets-part3} require a different carrier.

\medskip
\noindent\textbf{Organization of the paper.}
Section~\ref{sec:anchored} establishes the exact anchored representation. Section~\ref{sec:projective} develops projective maps and fibers, with an explicit separation between interior and boundary base points. Section~\ref{sec:privacy} proves the one-dimensional privacy reduction, the universal binary envelope, and the rigidity converse. Sections~\ref{sec:design}--\ref{sec:tracecap} develop the canonical estimator, global symmetrization, the exact trace cap, and the two-orbit frontier reduction. Section~\ref{sec:asymptotic} proves universal low-budget asymptotic minimax optimality. Section~\ref{sec:raw} solves the exact canonical design problem under the raw local cap and records a numerical illustration. Section~\ref{sec:discussion} discusses scope and open problems.

\medskip
\noindent\textbf{Notation.}
We write $[d]=\{1,\dots,d\}$, $\Delta_d=\{\theta\in \R_+^d:\sum_i \theta_i=1\}$, and
\[
\cH_d:=\conv\{\gamma_1,\dots,\gamma_d\}\subset \R^{d-1}.
\]
When no confusion is possible we write $\cH=\cH_d$.

\section{Anchored affine likelihood-ratio laws}\label{sec:anchored}

\subsection{Geometry of the simplex polytope}

\begin{definition}[Regular simplex data]
Fix $H\in\R^{d\times(d-1)}$ with orthonormal columns spanning $T_d=\{u\in\R^d:\one^\top u=0\}$. Define
\[
\gamma_i:=H^\top e_i,\qquad i=1,\dots,d,
\]
and
\[
\cX_d:=\{x\in\R^{d-1}:1+\gamma_i^\top x\ge 0\text{ for all }i\in[d]\}.
\]
Define also the template simplex
\[
\cK_d:=\{a\in\R_+^d:\sum_{i=1}^d a_i=d\}.
\]
\end{definition}

\begin{lemma}[Affine identification of $\cX_d$ and $\cK_d$]\label{lem:polytope-template}
The map
\[
a(x):=\one+Hx\in\R^d
\]
is an affine bijection from $\cX_d$ onto $\cK_d$, with inverse
\[
x(a)=H^\top(a-\one).
\]
Moreover,
\[
a_i(x)=1+\gamma_i^\top x,
\qquad
\|x\|_2^2=\|a(x)-\one\|_2^2=\sum_{i=1}^d a_i(x)^2-d.
\]
In particular,
\[
\sup_{x\in\cX_d}\|x\|_2=\sqrt{d(d-1)}.
\]
\end{lemma}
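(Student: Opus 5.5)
The whole argument rests on two facts about $H$: $H^\top H=I_{d-1}$, and $HH^\top=I_d-\frac1d\one\one^\top$, which is the orthogonal projector onto $T_d$ since the columns of $H$ form an orthonormal basis of $T_d$. The second gives $H^\top\one=0$, because $\one\perp T_d$.

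First, the coordinates. For any $x\in\R^{d-1}$ we have $a_i(x)=e_i^\top(\one+Hx)=1+(H^\top e_i)^\top x=1+\gamma_i^\top x$. So $a(x)\in\R_+^d$ holds exactly when $x\in\cX_d$. Summing, $\one^\top a(x)=d+\one^\top Hx=d$, since every column of $H$ lies in $T_d$. Hence $a$ maps $\cX_d$ into $\cK_d$, and the preimage of $\cK_d$ under $a$ is exactly $\cX_d$.

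Next, bijectivity. One direction is $x(a(x))=H^\top\one+H^\top Hx=x$. For the other, take $a\in\cK_d$. Then $a-\one\in T_d$, so $HH^\top(a-\one)=a-\one$, which gives $a(x(a))=a$. Since $x(a)$ then satisfies $a(x(a))=a\ge0$, the first step shows $x(a)\in\cX_d$. So $x(\cdot)$ maps $\cK_d$ into $\cX_d$ and inverts $a(\cdot)$ on both sides. Both maps are affine.

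For the norm identity, $\|a(x)-\one\|^2=\|Hx\|^2=x^\top H^\top Hx=\|x\|^2$. Expanding with $\sum_i a_i=d$ gives $\sum_i a_i^2-2\sum_i a_i+d=\sum_i a_i^2-d$.

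The supremum is the only step with any content, and it is a simple convexity argument. On $\cK_d$ the function $\sum_i a_i^2$ is convex, and $\cK_d$ is the simplex with vertices $de_k$. A convex function on a polytope attains its maximum at a vertex, so the maximum is $d^2$. This gives
\[
\sup_{x\in\cX_d}\|x\|_2^2=d^2-d=d(d-1).
\]
Alternatively, one can avoid convexity by noting $\sum_i a_i^2\le(\sum_i a_i)^2=d^2$ for $a\ge0$, with equality at $a=de_k$. In either form the supremum is attained at $x=dH^\top e_k=d\gamma_k$, so it is a maximum.
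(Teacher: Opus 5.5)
Your proposal is correct and follows the paper's proof: check $a_i(x)=1+\gamma_i^\top x$ and $\sum_i a_i(x)=d$, invert via $HH^\top$ being the projector onto $T_d$, transfer the norm through the isometry, and maximize $\sum_i a_i^2$ at a vertex $de_k$ of $\cK_d$. Your explicit justification of the vertex maximum (convexity, or $\sum_i a_i^2\le(\sum_i a_i)^2$ for $a\ge 0$) and the attaining point $x=d\gamma_k$ fill in details the paper leaves implicit.
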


\begin{proof}
Since $H^\top\one=0$ and $H^\top H=I_{d-1}$, we have
\[
\sum_{i=1}^d a_i(x)=\one^\top(\one+Hx)=d.
\]
Also
\[
a_i(x)=e_i^\top(\one+Hx)=1+e_i^\top Hx=1+\gamma_i^\top x,
\]
so $x\in\cX_d$ iff $a(x)\in\cK_d$. Conversely, if $a\in\cK_d$, then $a-\one\in T_d$ and therefore
\[
H H^\top(a-\one)=a-\one,
\]
which yields
\[
a=\one+H H^\top(a-\one)=a(x(a)).
\]
Thus $x\mapsto a(x)$ is a bijection with inverse $a\mapsto H^\top(a-\one)$.

Finally,
\[
\|x\|_2^2=\|H^\top(a-\one)\|_2^2=\|a-\one\|_2^2
\]
because $a-\one\in T_d$ and $H^\top$ is an isometry on $T_d$. Since
\[
\|a-\one\|_2^2=\sum_{i=1}^d(a_i^2-2a_i+1)=\sum_{i=1}^d a_i^2-d,
\]
the identity follows. The maximum over $\cK_d$ is attained at a vertex $de_i$, giving
\[
\max_{a\in\cK_d}\|a-\one\|_2^2=(d-1)^2+(d-1)=d(d-1).
\]
\end{proof}

\begin{definition}[Anchored law]
An \emph{anchored affine likelihood-ratio law} is a probability measure $\rho\in\cP(\cX_d)$ such that
\[
\int_{\cX_d} x\,\rho(dx)=0.
\]
We write
\[
\fM_d:=\Bigl\{\rho\in\cP(\cX_d):\int x\,\rho(dx)=0\Bigr\}.
\]
\end{definition}

\subsection{Exact representation and reconstruction}

\begin{definition}[Conditionally identical refinements]\label{def:cond-id-ref}
Two channels $W:[d]\to\Delta(\cY)$ and $W':[d]\to\Delta(\cY')$ are said to differ only by \emph{conditionally identical refinements} if one can be obtained from the other by splitting or merging output symbols in a way that does not depend on the input row: concretely, there exists a channel $V:\cY\to\Delta(\cY')$ such that $W'(B\mid i)=\sum_y W(y\mid i)\,V(B\mid y)$ for all $i$ and all measurable $B\subseteq\cY'$, and conversely. Equivalently, $W$ and $W'$ induce the same statistical experiment as viewed through Le~Cam sufficiency.
\end{definition}

\begin{theorem}[Exact affine representation]\label{thm:representation}
Let $W:[d]\to\Delta(\mathcal Y)$ be a finite-output channel.

\begin{enumerate}[label=\textup{(\roman*)}]
\item Define the row average
\[
\bar W(y):=\frac1d\sum_{i=1}^d W(y\mid i),
\]
and, whenever $\bar W(y)>0$,
\[
a_i(y):=\frac{W(y\mid i)}{\bar W(y)},\qquad x(y):=H^\top(a(y)-\one).
\]
Then $x(y)\in\cX_d$ and the pushforward law
\[
\rho_W:=\sum_{y:\bar W(y)>0}\bar W(y)\,\delta_{x(y)}
\]
belongs to $\fM_d$.

\item Let $W_{\rho_W}$ be the channel on $\cX_d$ defined by
\[
W_{\rho_W}(B\mid i):=\int_B (1+\gamma_i^\top x)\,\rho_W(dx).
\]
Then $W_{\rho_W}$ is the exact sufficient reduction of $W$ obtained by merging output symbols with the same anchored coordinate. In particular, $W$ and $W_{\rho_W}$ differ only by conditionally identical refinements.

\item Conversely, every $\rho\in\fM_d$ defines a valid channel
\[
W_\rho(B\mid i):=\int_B (1+\gamma_i^\top x)\,\rho(dx),\qquad B\subseteq\cX_d,
\]
and the map $W\mapsto\rho_W$ descends to a bijection between finite-output $d$-ary channels modulo conditionally identical refinements and the finitely supported members of $\fM_d$. More generally, every $\rho\in\fM_d$ defines a valid $\cX_d$-valued channel $W_\rho$.

\item For any composition $\theta\in\Delta_d$, define
\[
h_\theta:=H^\top\Bigl(\theta-\frac{\one}{d}\Bigr)=\sum_{i=1}^d \theta_i\gamma_i\in\cH.
\]
Then the one-user mixture law under input composition $\theta$ is
\[
q_\theta(dx)=\bigl(1+h_\theta^\top x\bigr)\rho(dx).
\]
\end{enumerate}
\end{theorem}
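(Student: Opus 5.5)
The plan is to run everything through the affine identification of \cref{lem:polytope-template}, so that each claim becomes a statement about the template vector $a(y)\in\cK_d$. For (i), whenever $\bar W(y)>0$ we have $a_i(y)\ge 0$ and $\sum_i a_i(y)=d\bar W(y)/\bar W(y)=d$, so $a(y)\in\cK_d$ and hence $x(y)=H^\top(a(y)-\one)\in\cX_d$ by \cref{lem:polytope-template}. The weights $\bar W(y)$ are nonnegative and sum to $1$, so $\rho_W$ is a probability measure on $\cX_d$. For the mean we compute
\[
\int x\,\rho_W(dx)=H^\top\sum_y \bar W(y)(a(y)-\one)=H^\top\Bigl(\tfrac1d\textstyle\sum_y W(y\mid\cdot)-\one\Bigr)=H^\top(\one-\one)=0,
\]
using $\sum_y W(y\mid i)=1$ for each $i$.

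For (ii), the key identity is $W(y\mid i)=\bar W(y)\,a_i(y)=\bar W(y)(1+\gamma_i^\top x(y))$, which uses $a_i(x)=1+\gamma_i^\top x$ from the lemma. Summing this over $\{y:x(y)\in B\}$ gives $W_{\rho_W}(B\mid i)=W(x^{-1}(B)\mid i)$. In words, $W_{\rho_W}$ is the pushforward of $W$ under the deterministic merge map $y\mapsto x(y)$; outputs with $\bar W(y)=0$ have probability zero under every row and can be dropped. This gives one direction of \cref{def:cond-id-ref}. For the converse, take $V(\cdot\mid x)$ to be the conditional law of $y$ given $x(y)=x$, with weights $\bar W(y)/\sum_{y':x(y')=x}\bar W(y')$. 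Since $W(y\mid i)/\bar W(y)=a_i(y)$ is constant on the fiber $\{y:x(y)=x\}$, this kernel does not depend on $i$, and composing $W_{\rho_W}$ with $V$ recovers $W$. Hence $x(\cdot)$ is a sufficient statistic.

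For (iii), a general $\rho\in\fM_d$ gives a valid channel because $1+\gamma_i^\top x\ge 0$ on $\cX_d$ and
\[
W_\rho(\cX_d\mid i)=1+\gamma_i^\top\int x\,\rho(dx)=1.
\]
For the bijection, surjectivity onto finitely supported laws follows by taking $W=W_\rho$ and checking $\rho_{W_\rho}=\rho$. Here the row average of $W_\rho$ at an atom $x$ is $\rho(\{x\})\cdot\frac1d\sum_i(1+\gamma_i^\top x)=\rho(\{x\})$, since $\sum_i\gamma_i=0$. The ratios are then $a_i=1+\gamma_i^\top x$, so the anchored coordinate is $x$ itself. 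For well-definedness and injectivity modulo refinements, first note that by (ii) every $W$ is equivalent to $W_{\rho_W}$, so equal laws give equivalent channels. Conversely, if $W$ and $W'$ are equivalent, both directions of garbling preserve the experiment. Then $\rho_W$ is exactly the standard-experiment law, namely the distribution of the normalized likelihood vector $a$ under the uniform prior $\bar W$. This law is invariant under Markov kernels that preserve the experiment in both directions, by Blackwell's theorem / Torgersen \cite{blackwell-comparison,torgersen}.

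This last invariance is the main obstacle, and I would also try a direct argument for it. Under $y'\sim V(\cdot\mid y)$, the likelihood ratio satisfies $a'(y')=\E[a(y)\mid y']$, so the law of $a'$ is a dilation-dominated by the law of $a$, i.e.\ it is smaller in convex order. Running the same argument in the reverse direction gives the opposite comparison. Two laws comparable in convex order in both directions must be equal, since they agree on all convex test functions, which include all strictly convex ones. This yields $\rho_W=\rho_{W'}$.

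Finally, (iv) is a one-line computation:
\[
q_\theta(dx)=\sum_i\theta_i(1+\gamma_i^\top x)\rho(dx)=(1+h_\theta^\top x)\rho(dx),
\]
using $\sum_i\theta_i=1$. The identity $h_\theta=\sum_i\theta_i\gamma_i=H^\top\theta=H^\top(\theta-\one/d)$ holds because $H^\top\one=0$. Since $h_\theta$ is a convex combination of the $\gamma_i$, we also get $h_\theta\in\cH$.
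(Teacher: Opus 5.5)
Your proof is correct, and for (i), (ii), (iv) and the surjectivity half of (iii) it is the paper's argument. The identity $W(y\mid i)=\bar W(y)\bigl(1+\gamma_i^\top x(y)\bigr)$ exhibits $W_{\rho_W}$ as the merge of $W$ along $y\mapsto x(y)$, and the row-average identity $\frac1d\sum_i W_\rho(\cdot\mid i)=\rho$ gives $\rho_{W_\rho}=\rho$.

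The real difference is in the bijection claim of (iii). The paper stops at ``together with part (ii), this proves the claimed bijection.'' That gives surjectivity, and injectivity on classes, since $\rho_W=\rho_{W'}$ implies $W\sim W_{\rho_W}\sim W'$. It does not show that mutually garbled channels have the same $\rho_W$, i.e.\ that $W\mapsto\rho_W$ is well defined on classes. The paper's remark that ``uniqueness of $\rho$ for a given $W_\rho$ is the row-average identity'' only gives injectivity of $\rho\mapsto W_\rho$ as channels, not as classes. That step is left to the classical Blackwell--Torgersen standard-experiment theory, which the paper cites in the introduction and builds into the definition of conditionally identical refinements.

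You prove it directly. Your explicit reverse kernel $V(y\mid x)=\bar W(y)/\rho_W(\{x\})$ on each fiber makes (ii) genuinely two-sided. The identity $a'(Y')=\E[a(Y)\mid Y']$ under the coupling $\bar W(y)V(y'\mid y)$ then gives convex-order comparison in both directions. Your closing phrase (``agree on all convex test functions, which include all strictly convex ones'') is compressed; the clean finish uses the single strictly convex test function $\|\cdot\|_2^2$. The two comparisons force $\E\|a(Y)\|_2^2=\E\|a'(Y')\|_2^2$. Since $\E[a(Y)\mid Y']=a'(Y')$, one has $\E\|a(Y)-a'(Y')\|_2^2=\E\|a(Y)\|_2^2-\E\|a'(Y')\|_2^2=0$. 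So $a(Y)=a'(Y')$ almost surely in the coupling, and hence $\rho_W=\rho_{W'}$ after the affine map $a\mapsto H^\top(a-\one)$. With that line, your version of (iii) is self-contained where the paper's is not.

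One harmless slip: in (i), $\sum_y\bar W(y)a(y)=\sum_y W(y\mid\cdot)=\one$, with no factor $\frac1d$. The conclusion is unaffected because $H^\top\one=0$.
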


\begin{proof}
(i) Since $\sum_i a_i(y)=d$, \cref{lem:polytope-template} implies $x(y)\in\cX_d$. Also
\[
\int x\,\rho_W(dx)=\sum_y \bar W(y) H^\top(a(y)-\one)
=H^\top\Bigl(\sum_y \bar W(y)a(y)-\one\Bigr).
\]
For every coordinate $i$,
\[
\sum_y \bar W(y) a_i(y)=\sum_y W(y\mid i)=1,
\]
so $\sum_y \bar W(y)a(y)=\one$, and therefore $\int x\,\rho_W(dx)=0$.

(ii) Let $B\subseteq\cX_d$ be Borel. Then
\[
\begin{aligned}
W_{\rho_W}(B\mid i)
&=\sum_{y:\bar W(y)>0}\mathbf 1_{\{x(y)\in B\}}\,(1+\gamma_i^\top x(y))\,\bar W(y)\\
&=\sum_{y:\bar W(y)>0}\mathbf 1_{\{x(y)\in B\}}\,a_i(y)\,\bar W(y)\\
&=\sum_{y:\bar W(y)>0}\mathbf 1_{\{x(y)\in B\}}\,W(y\mid i).
\end{aligned}
\]
Thus $W_{\rho_W}$ is obtained by merging all output symbols with identical anchored coordinate $x(y)$; within each such fiber the conditional law given the reduced output is independent of the input row. That is exactly a conditionally identical refinement.

(iii) For any $\rho\in\fM_d$, nonnegativity follows from $1+\gamma_i^\top x\ge 0$ on $\cX_d$, and
\[
W_\rho(\cX_d\mid i)=\int (1+\gamma_i^\top x)\,\rho(dx)=1+\gamma_i^\top\int x\,\rho(dx)=1.
\]
So $W_\rho$ is a channel. Moreover, the row average recovers $\rho$ exactly:
\[
\frac1d\sum_{i=1}^d W_\rho(dx\mid i)=\frac1d\sum_{i=1}^d (1+\gamma_i^\top x)\rho(dx)=\rho(dx)
\]
because $\sum_i \gamma_i=0$. Equivalently,
\[
\rho=\frac1d\sum_{i=1}^d W_\rho(\cdot\mid i).
\]
If $\rho$ is finitely supported, then $W_\rho$ is a finite-output channel on the support of $\rho$. For any support point $x$,
\[
\bar W_\rho(\{x\})=\frac1d\sum_{i=1}^d W_\rho(\{x\}\mid i)=\rho(\{x\}),
\]
and therefore
\[
a_i(x)=\frac{W_\rho(\{x\}\mid i)}{\bar W_\rho(\{x\})}=1+\gamma_i^\top x.
\]
Applying the construction from part~(i) to $W_\rho$ thus returns the same anchored coordinate $x$, so $\rho_{W_\rho}=\rho$. Together with part~(ii), this proves the claimed bijection on equivalence classes. Uniqueness of $\rho$ for a given $W_\rho$ is exactly the row-average identity above.

(iv) This is immediate:
\[
\sum_{i=1}^d \theta_i W_\rho(dx\mid i)
=\sum_i \theta_i(1+\gamma_i^\top x)\rho(dx)
=\bigl(1+h_\theta^\top x\bigr)\rho(dx).
\qedhere
\]
\end{proof}

\section{Projective transport and fibers}\label{sec:projective}

\subsection{Projective maps}

\begin{definition}[Projective domains and maps]
For $h\in\cH$, define
\[
A_h:=\{x\in\cX_d:1+h^\top x>0\},
\qquad
\cY_h:=\{y\in\R^{d-1}:1+(\gamma_i-h)^\top y\ge 0\ \forall i\in[d]\},
\]
and
\[
T_h(x):=\frac{x}{1+h^\top x},\qquad x\in A_h,
\]
\[
S_h(y):=\frac{y}{1-h^\top y},\qquad y\in\cY_h.
\]
\end{definition}

\begin{proposition}[Projective bijection]\label{prop:diffeo}
For every $h\in\cH$, $T_h:A_h\to\cY_h$ is a bijection with inverse $S_h$. Moreover:
\begin{enumerate}[label=\textup{(\roman*)}]
\item $1-h^\top y>0$ for every $y\in\cY_h$;
\item $S_h(\cY_h)\subseteq\cX_d$ and $T_h(A_h)\subseteq\cY_h$;
\item if $h\in\operatorname{int}\cH$, then $A_h=\cX_d$;
\item if $h\in\partial \cH$, then $\cY_h$ is unbounded.
\end{enumerate}
\end{proposition}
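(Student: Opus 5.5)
The argument rests on one algebraic identity relating the two denominators. If $y=T_h(x)$ with $1+h^\top x>0$, then
\[
1-h^\top y=\frac{1+h^\top x-h^\top x}{1+h^\top x}=\frac{1}{1+h^\top x}>0 .
\]
Consequently $S_h(T_h(x))=x$. Symmetrically, if $1-h^\top y>0$ and $x=S_h(y)$, then $1+h^\top x=1/(1-h^\top y)>0$ and $T_h(S_h(y))=y$. So both compositions are the identity wherever the denominators are positive. It remains to check that $T_h$ maps $A_h$ into $\cY_h$, that $S_h$ maps $\cY_h$ into $A_h$, and that the denominator $1-h^\top y$ is positive on $\cY_h$.

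\textbf{Forward inclusion.} For $x\in A_h$ and $y=T_h(x)$ we compute
\[
1+(\gamma_i-h)^\top y=\frac{1+h^\top x+\gamma_i^\top x-h^\top x}{1+h^\top x}=\frac{1+\gamma_i^\top x}{1+h^\top x}\ge 0 .
\]
Hence $T_h(A_h)\subseteq\cY_h$.

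\textbf{Part (i), the main step.} Write $h=\sum_i\theta_i\gamma_i$ with $\theta\in\Delta_d$. Averaging the defining inequalities of $\cY_h$ with weights $\theta_i$ gives
\[
1+(h-h)^\top y\ge0,
\]
which is vacuous. So convexity alone does not suffice, and a different combination is needed. Using $\sum_i\gamma_i=0$, the uniform average of the inequalities gives $1-h^\top y\ge0$. This is the non-strict version of (i).

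To make it strict, suppose $h^\top y=1$. Each term $1+(\gamma_i-h)^\top y$ is nonnegative, and their sum is
\[
d\bigl(1-h^\top y\bigr)=0 .
\]
So every term vanishes, i.e.\ $\gamma_i^\top y=h^\top y-1=0$ for all $i$. Since the $\gamma_i$ span $\R^{d-1}$ (they are $H^\top e_i$ and $H^\top$ is onto), this forces $y=0$. But then $h^\top y=0\neq1$, a contradiction. This gives (i).

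\textbf{Inverse inclusion (ii).} For $y\in\cY_h$ and $x=S_h(y)$ we have
\[
1+\gamma_i^\top x=\frac{1-h^\top y+\gamma_i^\top y}{1-h^\top y}\ge0 .
\]
So $x\in\cX_d$, and $1+h^\top x=1/(1-h^\top y)>0$ gives $x\in A_h$. Together with the forward inclusion and the composition identities, $T_h:A_h\to\cY_h$ is a bijection with inverse $S_h$.

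\textbf{Part (iii).} For $h=\sum_i\theta_i\gamma_i$ with all $\theta_i>0$ (the interior of $\cH$),
\[
1+h^\top x=\sum_i\theta_i\bigl(1+\gamma_i^\top x\bigr)\ge0 .
\]
Equality would force $1+\gamma_i^\top x=0$ for every $i$. Summing over $i$ and using $\sum_i\gamma_i=0$ gives $d=0$, which is impossible. Hence $A_h=\cX_d$.

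\textbf{Part (iv).} If $h\in\partial\cH$, then some $\theta_j=0$. By the $\theta$-weighted identity, $1+h^\top x=0$ at the vertex $x_j$ of $\cX_d$ corresponding to $a=de_j$ under \cref{lem:polytope-template}, since $1+\gamma_i^\top x_j=a_i=0$ for all $i\neq j$. Take points $x_t$ on the segment from $0$ to $x_j$ approaching $x_j$. They lie in $A_h$, since $1+h^\top x_t=1-t\in(0,1]$ for $x_t=tx_j$, $t<1$. Then
\[
\|T_h(x_t)\|=\frac{t\,\|x_j\|}{1-t}\to\infty ,
\]
because $x_j\neq0$. Therefore $\cY_h=T_h(A_h)$ is unbounded.
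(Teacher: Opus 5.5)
Your proof is correct. The bijection and parts (i)--(iii) follow the paper's argument essentially step for step:
\begin{itemize}
\item the forward inclusion via $1+(\gamma_i-h)^\top T_h(x)=(1+\gamma_i^\top x)/(1+h^\top x)$;
\item the uniform sum $d(1-h^\top y)\ge 0$, made strict by the spanning property of the $\gamma_i$ (you read off $\gamma_i^\top y=0$ directly from $h^\top y=1$, which is slightly more direct than the paper's differencing of two equations);
\item the barycentric identity for (iii), where you also supply the summation argument, $d=0$, that the paper leaves implicit.
\end{itemize}

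Part (iv) is where you take a genuinely different route. The paper chooses a supporting hyperplane of $\cH$ at $h$ with normal $u$. It then checks that $r=-u$ satisfies $(\gamma_i-h)^\top r\ge 0$, so $r$ is a recession direction of $\cY_h$.

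You instead locate the vertex $x_j=H^\top(de_j-\one)$ of $\cX_d$ at which $1+h^\top x$ vanishes. You then push the segment $tx_j$, $t\uparrow 1$, through $T_h$ and watch it blow up. This is more concrete. It shows that the unboundedness comes from $T_h$ degenerating at the face $F_h=\{x\in\cX_d:1+h^\top x=0\}$, the same phenomenon behind the paper's boundary-fiber remark. In fact $T_h(tx_j)=\tfrac{t}{1-t}\,x_j$ sweeps out the whole ray $\{s x_j: s\ge 0\}$. Since $x_j=d\gamma_j$, this is exactly the recession direction the paper's argument produces when the supporting hyperplane is taken to be the facet hyperplane $\{z:\gamma_j^\top z=-1/d\}$.

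The paper's version is shorter and needs neither barycentric coordinates nor the explicit vertex. It gives less insight into which part of $\cX_d$ generates the escape to infinity.
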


\begin{proof}
Let $x\in A_h$. Then for each $i$,
\[
1+(\gamma_i-h)^\top T_h(x)
=1+(\gamma_i-h)^\top \frac{x}{1+h^\top x}
=\frac{1+\gamma_i^\top x}{1+h^\top x}\ge 0.
\]
Hence $T_h(x)\in\cY_h$.

Now let $y\in\cY_h$. Summing the inequalities $1+(\gamma_i-h)^\top y\ge 0$ over $i$ gives
\[
d+\Bigl(\sum_{i=1}^d \gamma_i-dh\Bigr)^\top y=d(1-h^\top y)\ge 0.
\]
If equality held, then each summand would be zero, so
\[
1+(\gamma_i-h)^\top y=0\qquad \forall i.
\]
Subtracting two such equations yields $(\gamma_i-\gamma_j)^\top y=0$ for all $i,j$. Since the differences $\gamma_i-\gamma_j$ span $\R^{d-1}$, this implies $y=0$, impossible. Thus $1-h^\top y>0$.

Using that positivity,
\[
1+\gamma_i^\top S_h(y)
=1+\gamma_i^\top \frac{y}{1-h^\top y}
=\frac{1+(\gamma_i-h)^\top y}{1-h^\top y}\ge 0,
\]
so $S_h(y)\in\cX_d$. Furthermore,
\[
1+h^\top S_h(y)=1+\frac{h^\top y}{1-h^\top y}=\frac{1}{1-h^\top y}>0,
\]
so $S_h(y)\in A_h$.

A direct calculation gives
\[
T_h(S_h(y))=\frac{y/(1-h^\top y)}{1+h^\top y/(1-h^\top y)}=y,
\]
and similarly $S_h(T_h(x))=x$. Thus $T_h$ and $S_h$ are inverse bijections.

If $h=\sum_i \theta_i\gamma_i$ with all $\theta_i>0$, then for $x\in\cX_d$,
\[
1+h^\top x=\sum_{i=1}^d \theta_i(1+\gamma_i^\top x).
\]
The summands are nonnegative and cannot all vanish simultaneously, so the sum is strictly positive. Hence $A_h=\cX_d$.

Finally, if $h\in\partial \cH$, choose a supporting hyperplane of $\cH$ at $h$ with normal vector $u\neq 0$, so that $u^\top z\le u^\top h$ for all $z\in \cH$. Setting $r:=-u$, we obtain
\[
(\gamma_i-h)^\top r\ge 0\qquad (i=1,\dots,d).
\]
Equivalently, $r$ is a recession direction of $\cY_h$. Hence for every $t\ge 0$,
\[
1+(\gamma_i-h)^\top (tr)=1+t(\gamma_i-h)^\top r\ge 0,
\]
so $tr\in \cY_h$. Thus $\cY_h$ is unbounded.
\end{proof}

\subsection{Projective fibers and transport}

\begin{definition}[Fibers]
Let $\rho\in\fM_d$ and $h\in\cH$. Define
\[
q_h(dx):=(1+h^\top x)\rho(dx).
\]
Since $q_h(\cX_d)=1$, this is a probability measure. Its projective fiber is
\[
\eta_h:=(T_h)_\# q_h\in\cP(\cY_h).
\]
\end{definition}

\begin{proposition}[Interior fiber properties and transport]\label{prop:fiber-props}
Let $\rho\in\fM_d$, let $h\in\operatorname{int}\cH$, and let $k\in\cH$.
\begin{enumerate}[label=\textup{(\roman*)}]
\item $\eta_h$ has mean zero:
\[
\int y\,\eta_h(dy)=0.
\]
\item The fibers are related by the exact transport formula
\[
\eta_k=(T_{k-h})_\#\bigl((1+(k-h)^\top y)\eta_h(dy)\bigr),
\]
where $T_{k-h}(y):=y/(1+(k-h)^\top y)$.
\item $\eta_h$ has bounded support.
\end{enumerate}
\end{proposition}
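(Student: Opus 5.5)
The plan is to verify the three claims by direct change of variables through $T_h$, using \cref{prop:diffeo} throughout. Since $h\in\operatorname{int}\cH$, part~(iii) of that proposition gives $A_h=\cX_d$. It follows that $1+h^\top x>0$ on all of $\cX_d$, so $T_h$ is defined $q_h$-almost everywhere (indeed everywhere). Moreover, $q_h$ is a genuine probability measure on $\cX_d$ because $\rho$ has mean zero.

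For (i), we will compute directly:
\[
\int y\,\eta_h(dy)=\int \frac{x}{1+h^\top x}(1+h^\top x)\,\rho(dx)=\int x\,\rho(dx)=0.
\]
Integrability is automatic since $\rho$ lives on the bounded set $\cX_d$ (\cref{lem:polytope-template}).

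For (ii), the idea is to write both sides as pushforwards of $\rho$. By the definition of $\eta_h$, for a bounded measurable $g$,
\[
\int g(T_{k-h}(y))(1+(k-h)^\top y)\,\eta_h(dy)=\int g\bigl(T_{k-h}(T_h x)\bigr)\bigl(1+(k-h)^\top T_h x\bigr)(1+h^\top x)\,\rho(dx).
\]
The key algebraic identities are
\[
\bigl(1+(k-h)^\top T_hx\bigr)(1+h^\top x)=1+k^\top x,
\qquad
T_{k-h}(T_hx)=\frac{x/(1+h^\top x)}{(1+k^\top x)/(1+h^\top x)}=T_k(x).
\]
Whenever $1+k^\top x>0$, the right-hand side therefore becomes $\int g(T_kx)(1+k^\top x)\rho(dx)=\int g\,d\eta_k$.

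The only delicate point, and the main obstacle, is the case $k\in\partial\cH$, where $1+k^\top x$ may vanish on part of the support of $\rho$. There $T_k$ and $T_{k-h}$ are undefined, but the weight $1+(k-h)^\top y$, which equals $(1+k^\top x)/(1+h^\top x)$, vanishes too. So that set carries zero mass under both weighted measures, and both pushforwards should be read on the positivity sets. I would state this convention explicitly, noting that $\eta_k$ is itself only defined as a pushforward of $q_k$, which charges only $A_k$.

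For (iii), the support of $\eta_h$ lies in $T_h(\cX_d)$. By compactness of $\cX_d$ and strict positivity of the continuous function $1+h^\top x$ on it, we have $c:=\min_{\cX_d}(1+h^\top x)>0$. Hence
\[
\|T_h x\|\le \frac{\sqrt{d(d-1)}}{c},
\]
using the bound on $\|x\|$ from \cref{lem:polytope-template}. This gives bounded support; one could alternatively note that $\cY_h$ is bounded for interior $h$, but the direct estimate suffices.
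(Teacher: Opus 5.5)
Your proposal is correct and follows the paper's proof essentially step for step. It uses the same cancellation of $1+h^\top x$ for (i), the same identities $T_{k-h}(T_hx)=T_k(x)$ and $\bigl(1+(k-h)^\top T_hx\bigr)(1+h^\top x)=1+k^\top x$ with the same restriction to $\{1+k^\top x>0\}$ for boundary $k$ in (ii), and the same compactness bound $1+h^\top x\ge c_h>0$ for (iii). Your aside that $\cY_h=T_h(\cX_d)$ is itself bounded for interior $h$ is also correct, though it is not needed.
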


\begin{proof}
Because $h\in \operatorname{int}\cH$, \cref{prop:diffeo}(iii) gives $A_h=\cX_d$.

(i) By definition and \cref{prop:diffeo},
\[
\int y\,\eta_h(dy)
=
\int T_h(x)\,q_h(dx)
=
\int \frac{x}{1+h^\top x}(1+h^\top x)\,\rho(dx)
=
\int x\,\rho(dx)=0.
\]

(ii) Since $q_k$ vanishes on $F_k:=\{x:1+k^\top x=0\}$, we may integrate over $A_k$. For $x\in A_k$ one has
\[
T_k(x)=\frac{x}{1+k^\top x}
=
\frac{x/(1+h^\top x)}{1+(k-h)^\top x/(1+h^\top x)}
=
T_{k-h}(T_h(x)),
\]
and
\[
1+(k-h)^\top T_h(x)=\frac{1+k^\top x}{1+h^\top x}.
\]
Therefore, for any bounded measurable $\varphi$,
\[
\begin{aligned}
\int \varphi(z)\,\eta_k(dz)
&=
\int_{A_k}\varphi(T_k(x))(1+k^\top x)\,\rho(dx)\\
&=
\int_{A_k}\varphi(T_{k-h}(T_h(x)))\bigl(1+(k-h)^\top T_h(x)\bigr)(1+h^\top x)\,\rho(dx)\\
&=
\int \varphi(T_{k-h}(y))\bigl(1+(k-h)^\top y\bigr)\,\eta_h(dy).
\end{aligned}
\]
This is the stated formula. (When $k\in\partial\cH$, the map $T_{k-h}$ is applied only on the set $\{y:1+(k-h)^\top y>0\}$; on its complement the weight $1+(k-h)^\top y$ vanishes, so the weighted measure $(1+(k-h)^\top y)\,\eta_h(dy)$ is zero there regardless of whether $\eta_h$ charges that set.)

(iii) Since $\cX_d$ is compact and $1+h^\top x$ is continuous and strictly positive on $\cX_d$, there exists $c_h>0$ with $1+h^\top x\ge c_h$ on $\cX_d$. Hence
\[
\|T_h(x)\|\le c_h^{-1}\|x\|\le c_h^{-1}\sqrt{d(d-1)},
\]
so $T_h(\cX_d)$ is bounded. Since $\eta_h$ is supported on $T_h(\cX_d)$, it has bounded support.
\end{proof}

\begin{remark}[Boundary fibers can lose mass]\label{rem:boundary-fibers}
When $h\in\partial \cH$, the map $T_h$ is defined only on $A_h=\cX_d\setminus F_h$, where
\[
F_h:=\{x\in \cX_d:1+h^\top x=0\}.
\]
Any mass of $\rho$ carried by $F_h$ is invisible to
\[
q_h(dx)=(1+h^\top x)\rho(dx)
\]
and therefore invisible to the fiber $\eta_h=(T_h)_\# q_h$. Consequently $\eta_h$ need not have mean zero, and the fiber does not determine $\rho$ uniquely unless one imposes the additional condition $\rho(F_h)=0$.
\end{remark}

\begin{theorem}[Exact converse realizability for interior fibers]\label{thm:converse-fiber}
Fix $h\in\operatorname{int}\cH$. Let $\eta\in \cP(\cY_h)$ have finite first moment (automatically satisfied when $h\in\operatorname{int}\cH$, since $\cY_h=T_h(\cX_d)$ is bounded as shown in the proof of \cref{prop:fiber-props}(iii)) and satisfy
\[
\int y\,\eta(dy)=0.
\]
Define a measure on $\cX_d$ by
\[
\rho:=(S_h)_\#\bigl((1-h^\top y)\eta(dy)\bigr).
\]
Then $\rho\in \fM_d$, it is the unique anchored law with fiber $\eta_h=\eta$, and for its base-point law
\[
q_h(dx)=(1+h^\top x)\rho(dx)
\]
one has
\[
(T_h)_\# q_h=\eta.
\]
\end{theorem}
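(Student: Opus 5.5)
The plan is to verify the four claims (that $\rho$ is a probability measure on $\cX_d$, has mean zero, has fiber $\eta$, and is unique) by direct change of variables through the bijection $T_h\leftrightarrow S_h$ of \cref{prop:diffeo}, using throughout that $1-h^\top y>0$ on $\cY_h$ by \cref{prop:diffeo}(i).

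\emph{Step 1 ($\rho$ is an anchored law).} By \cref{prop:diffeo}(ii), $S_h$ maps $\cY_h$ into $\cX_d$, so $\rho$ is carried by $\cX_d$. Its weight $(1-h^\top y)$ is nonnegative. Its total mass is $\int(1-h^\top y)\,\eta(dy)=1-h^\top\!\int y\,\eta(dy)=1$; this uses the finite first moment, which holds because $\cY_h=T_h(\cX_d)$ is bounded. For the mean,
\[
\int x\,\rho(dx)=\int \frac{y}{1-h^\top y}(1-h^\top y)\,\eta(dy)=\int y\,\eta(dy)=0 .
\]
Hence $\rho\in\fM_d$.

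\emph{Step 2 (the fiber of $\rho$ is $\eta$).} For bounded measurable $\varphi$ I will compute
\[
\int\varphi\,d\bigl((T_h)_\#q_h\bigr)=\int \varphi(T_h(S_h(y)))\,\bigl(1+h^\top S_h(y)\bigr)(1-h^\top y)\,\eta(dy).
\]
Here $T_h\circ S_h=\mathrm{id}$ on $\cY_h$. Also $1+h^\top S_h(y)=1/(1-h^\top y)$, which is the identity already obtained in the proof of \cref{prop:diffeo}. So the weight cancels and the integral equals $\int\varphi\,d\eta$. Since $h$ is interior, \cref{prop:diffeo}(iii) gives $A_h=\cX_d$, so $T_h$ is defined $\rho$-a.e. and the pushforward $(T_h)_\#q_h$ is the fiber $\eta_h$.

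\emph{Step 3 (uniqueness).} Suppose $\rho'\in\fM_d$ has fiber $\eta$. Because $A_h=\cX_d$ and $1+h^\top x\ge c_h>0$ there, $\rho'$ can be recovered from $q'_h$ via $\rho'(dx)=(1+h^\top x)^{-1}q'_h(dx)$. In turn, $q'_h=(S_h)_\#\eta$, since $S_h$ inverts $T_h$. Changing variables $x=S_h(y)$ turns the reciprocal weight $(1+h^\top S_h(y))^{-1}$ into $(1-h^\top y)$, so $\rho'=(S_h)_\#\bigl((1-h^\top y)\eta(dy)\bigr)=\rho$.

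The main obstacle I expect is bookkeeping rather than any deep step. In particular, the statement's assertion that $\cY_h=T_h(\cX_d)$ needs care: the surjectivity of $T_h:A_h\to\cY_h$ from \cref{prop:diffeo} together with $A_h=\cX_d$ gives exactly this, and that is what justifies both the finite-moment claim and the fact that $S_h$ is defined on the whole support of $\eta$.
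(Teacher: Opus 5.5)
Your proposal is correct and follows essentially the same route as the paper. You get positivity and unit mass of $\rho$ from $1-h^\top y>0$, the mean and fiber identities from $T_h\circ S_h=\mathrm{id}$ and $1+h^\top S_h(y)=1/(1-h^\top y)$, and uniqueness from bijectivity of $T_h$ on $\cX_d=A_h$ followed by division by $1+h^\top x>0$. The only cosmetic difference is in the uniqueness step: you write $q_h'=(S_h)_\#\eta$ and change variables explicitly, whereas the paper concludes $q_h'=q_h$ from injectivity of the pushforward and then divides.
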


\begin{proof}
By \cref{prop:diffeo}, $1-h^\top y>0$ on $\cY_h$, so $\rho$ is a positive measure. Its total mass is
\[
\rho(\cX_d)=\int (1-h^\top y)\,\eta(dy)=1-h^\top \int y\,\eta(dy)=1.
\]
Thus $\rho$ is a probability measure. Since $h\in \operatorname{int}\cH$, one also has
\[
\rho(A_h)=\int (1-h^\top y)\,\eta(dy)=1,
\]
because $A_h=\cX_d$.

Since $S_h(\cY_h)\subseteq \cX_d$, the support is correct. Its mean is
\[
\int x\,\rho(dx)=\int S_h(y)(1-h^\top y)\,\eta(dy)=\int y\,\eta(dy)=0,
\]
so $\rho\in \fM_d$.

Now let $\varphi$ be bounded and measurable on $\cY_h$. Then
\[
\begin{aligned}
\int \varphi(T_h(x))\,q_h(dx)
&=
\int \varphi(T_h(x))(1+h^\top x)\,\rho(dx)\\
&=
\int \varphi(T_h(S_h(y)))\,(1+h^\top S_h(y))(1-h^\top y)\,\eta(dy).
\end{aligned}
\]
But $T_h(S_h(y))=y$ and
\[
1+h^\top S_h(y)=1+\frac{h^\top y}{1-h^\top y}=\frac{1}{1-h^\top y}.
\]
Therefore
\[
\int \varphi(T_h(x))\,q_h(dx)=\int \varphi(y)\,\eta(dy),
\]
which proves $(T_h)_\# q_h=\eta$.

For uniqueness, suppose $\rho'$ is another anchored law with the same fiber. Since $h\in \operatorname{int}\cH$, the map $T_h:\cX_d\to \cY_h$ is bijective. Therefore
\[
(T_h)_\# q_h'=(T_h)_\# q_h=\eta
\qquad\Longrightarrow\qquad
q_h'=q_h,
\]
where $q_h'(dx)=(1+h^\top x)\rho'(dx)$. Because $1+h^\top x>0$ on all of $\cX_d$, division by $1+h^\top x$ yields
\[
\rho'(dx)=\frac{q_h'(dx)}{1+h^\top x}=\frac{q_h(dx)}{1+h^\top x}=\rho(dx).
\]
Thus $\rho=\rho'$.
\end{proof}

\section{Pairwise privacy reduction and universal envelopes}\label{sec:privacy}

\subsection{Exact one-dimensional reduction}

\begin{proposition}[Pairwise likelihood ratios are scalar shadows]\label{prop:pairwise-lr}
Let $h,k\in\cH$, let $\rho\in\fM_d$, and assume $q_k\ll q_h$. Let $Y_h\sim\eta_h$. Then
\[
\frac{dq_k}{dq_h}(x)=\frac{1+k^\top x}{1+h^\top x}=1+(k-h)^\top T_h(x)
\qquad q_h\text{-a.s.}
\]
Hence, under $q_h$, the one-user likelihood ratio between $q_k$ and $q_h$ has law
\[
\mu_{h,k}:=\Law\bigl(1+(k-h)^\top Y_h\bigr).
\]
In particular, if $W_\rho$ is $\varepsilon_0$-LDP (so that all row pairs are mutually absolutely continuous), then for a row pair $i\neq j$,
\[
L_{ij}:=\frac{dW_\rho(\cdot\mid j)}{dW_\rho(\cdot\mid i)}(X)=1+(\gamma_j-\gamma_i)^\top Y_{\gamma_i},\qquad X\sim W_\rho(\cdot\mid i).
\]
If
\[
Q_{0,n}^{ij}:=W_\rho(\cdot\mid i)^{\otimes n},
\qquad
Q_{1,n}^{ij}:=\frac1n\sum_{m=1}^n W_\rho(\cdot\mid i)^{\otimes(m-1)}\otimes W_\rho(\cdot\mid j)\otimes W_\rho(\cdot\mid i)^{\otimes(n-m)},
\]
then under $Q_{0,n}^{ij}$,
\[
\frac{dQ_{1,n}^{ij}}{dQ_{0,n}^{ij}}=\frac1n\sum_{m=1}^n L_{ij,m},
\]
where $L_{ij,1},\dots,L_{ij,n}$ are i.i.d.\ with law $\mu_{ij}:=\Law(L_{ij})$.
\end{proposition}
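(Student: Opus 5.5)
The plan is a direct computation in three layers: the one-user Radon--Nikodym derivative, its law via the fiber $\eta_h$, and the product/shuffle structure.

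\emph{One-user derivative.} By definition $q_k(dx)=(1+k^\top x)\rho(dx)$ and $q_h(dx)=(1+h^\top x)\rho(dx)$. On $A_h=\{1+h^\top x>0\}$, which carries full $q_h$-mass, we have
\[
q_k(dx)=\frac{1+k^\top x}{1+h^\top x}\,q_h(dx).
\]
The complement $F_h$ is $q_h$-null, and it is also $q_k$-null by the hypothesis $q_k\ll q_h$, so this is the Radon--Nikodym derivative $q_h$-a.s. The algebraic identity
\[
\frac{1+k^\top x}{1+h^\top x}=1+\frac{(k-h)^\top x}{1+h^\top x}=1+(k-h)^\top T_h(x)
\]
then gives the displayed formula. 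Under $q_h$ the random vector $T_h(X)$ has law $(T_h)_\#q_h=\eta_h$ by definition of the fiber. Hence the likelihood ratio has the law of $1+(k-h)^\top Y_h$, which is $\mu_{h,k}$.

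\emph{Row specialization.} Take $h=\gamma_i$ and $k=\gamma_j$. These are vertices of $\cH$, so $h_\theta=\gamma_i$ with $\theta=e_i$, and \cref{thm:representation}(iv) gives $q_{\gamma_i}=W_\rho(\cdot\mid i)$. The $\varepsilon_0$-LDP assumption gives mutual absolute continuity of the rows, so the hypothesis $q_k\ll q_h$ holds. The formula for $L_{ij}$ then follows from the first part. One subtlety is that $h=\gamma_i$ lies on the boundary of $\cH$, so the fiber is only defined on $A_{\gamma_i}$. This causes no difficulty, because $q_{\gamma_i}(F_{\gamma_i})=0$ and $\eta_{\gamma_i}$ is just the pushforward of $q_{\gamma_i}$ restricted to $A_{\gamma_i}$. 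I expect this boundary bookkeeping, together with stating precisely where absolute continuity enters, to be the only point needing care.

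\emph{Mixture over positions.} Write $P=W_\rho(\cdot\mid i)$ and $Q=W_\rho(\cdot\mid j)$. Each summand of $Q_{1,n}^{ij}$ is a product measure that differs from $P^{\otimes n}$ only in coordinate $m$. Since $Q\ll P$, its density with respect to $P^{\otimes n}$ is $\frac{dQ}{dP}(x_m)$. Averaging these densities over $m$ gives
\[
\frac{dQ_{1,n}^{ij}}{dQ_{0,n}^{ij}}(x_1,\dots,x_n)=\frac1n\sum_{m=1}^n L_{ij}(x_m).
\]
Under $P^{\otimes n}$ the coordinates are i.i.d. with law $P$, so the variables $L_{ij,m}:=L_{ij}(X_m)$ are i.i.d. with law $\mu_{ij}$.
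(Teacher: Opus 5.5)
Your proposal is correct and follows the paper's proof essentially step for step. It uses the algebraic identity $\frac{1+k^\top x}{1+h^\top x}=1+(k-h)^\top T_h(x)$ together with $\eta_h=(T_h)_\# q_h$, specializes to $h=\gamma_i$, $k=\gamma_j$, and then applies the one-coordinate contamination density $\frac1n\sum_m L_{ij}(x_m)$ under $W_\rho(\cdot\mid i)^{\otimes n}$. Your handling of the null set $F_h$ is more explicit than the paper's. The paper also notes that this average is symmetric in $(x_1,\dots,x_n)$, so it is the likelihood ratio of the shuffled (unordered) experiment as well; you could add that remark for the privacy application.
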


\begin{proof}
The first identity is immediate:
\[
\frac{1+k^\top x}{1+h^\top x}=1+\frac{(k-h)^\top x}{1+h^\top x}=1+(k-h)^\top T_h(x).
\]
Thus the one-user likelihood ratio under $q_h$ is a scalar projection of $Y_h$.

For the shuffled neighboring experiment, write
\[
L_{ij}(x):=\frac{dW_\rho(\cdot\mid j)}{dW_\rho(\cdot\mid i)}(x).
\]
Then
\[
\frac{dQ_{1,n}^{ij}}{dQ_{0,n}^{ij}}(x_1,\dots,x_n)
=\frac1n\sum_{m=1}^n L_{ij}(x_m),
\]
which is the standard likelihood-ratio formula for a one-coordinate contamination mixture. The right-hand side is symmetric in $(x_1,\dots,x_n)$, so it is measurable with respect to the unordered sample as well; hence the same likelihood-ratio expression governs the shuffled experiment. Under $Q_{0,n}^{ij}$ the coordinates are i.i.d.\ from $W_\rho(\cdot\mid i)$, so the summands are i.i.d.\ with law $\mu_{ij}$.
\end{proof}

\begin{remark}[Boundary base points in the privacy reduction]\label{rem:privacy-boundary}
For the row pair $(i,j)$ in \cref{prop:pairwise-lr}, the base point is $h=\gamma_i\in \partial\cH$. The proposition uses only the definition of the fiber as the pushforward of $q_h$ and the projective identity from \cref{prop:diffeo}; it does \emph{not} use the mean-zero statement or the converse theorem from Section~\ref{sec:projective}. Thus the universal envelope and rigidity results below are unaffected by the boundary pathology described in \cref{rem:boundary-fibers}.
\end{remark}

\begin{remark}[Relation to the quotient compression of {\cite{shvets-growing}}]\label{rem:growing-relation}
The one-dimensional pairwise reduction itself---that the shuffled neighboring experiment depends only on the pairwise likelihood-ratio law---was established in a different form in \cite[Theorem~3.2]{shvets-growing}. The present formulation embeds this fact in the global projective geometry of Section~\ref{sec:projective}, which additionally provides the transport formula (\cref{prop:fiber-props}(ii)), the exact interior converse (\cref{thm:converse-fiber}), and the boundary analysis (\cref{rem:boundary-fibers}).
\end{remark}

\begin{remark}[Absolute continuity under a common local cap]
Under a common $\varepsilon_0$-LDP cap, all row pairs are mutually absolutely continuous, so \cref{prop:pairwise-lr} applies without further qualification in \cref{thm:privacy-envelope,thm:privacy-rigidity}.
\end{remark}

\subsection{The universal binary envelope}

\begin{definition}[The bounded mean-one class]
For $\varepsilon_0\ge 0$, set
\[
a:=e^{-\varepsilon_0},\qquad b:=e^{\varepsilon_0},
\]
and define
\[
\cM_{\varepsilon_0}:=\bigl\{\mu\in\cP([a,b]): \int z\,\mu(dz)=1\bigr\}.
\]
If $\varepsilon_0=0$, define
\[
\mu^\star:=\delta_1.
\]
If $\varepsilon_0>0$, let $\mu^\star\in\cM_{\varepsilon_0}$ be the unique two-point law on $\{a,b\}$ with mean $1$:
\[
\mu^\star=\frac{b-1}{b-a}\,\delta_a+\frac{1-a}{b-a}\,\delta_b.
\]
\end{definition}

\begin{lemma}[Convex-order maximum on $\cM_{\varepsilon_0}$]\label{lem:cx-max}
For every $\mu\in\cM_{\varepsilon_0}$ and every convex function $\phi:[a,b]\to\R$,
\[
\int \phi(z)\,\mu(dz)\le \int \phi(z)\,\mu^\star(dz).
\]
Equivalently, $\mu\preceq_{cx}\mu^\star$.
\end{lemma}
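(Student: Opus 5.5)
The plan is the classical chord argument. A convex function on $[a,b]$ lies below its chord, and since every $\mu\in\cM_{\varepsilon_0}$ has mean $1$, integrating the chord against $\mu$ gives exactly $\int\phi\,d\mu^\star$.

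First we dispose of the degenerate case $\varepsilon_0=0$. Then $a=b=1$, so $\cM_{\varepsilon_0}=\{\delta_1\}=\{\mu^\star\}$ and the inequality is an equality.

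For $\varepsilon_0>0$, so that $a<1<b$, fix $z\in[a,b]$ and write it as the convex combination
\[
z=\frac{b-z}{b-a}\,a+\frac{z-a}{b-a}\,b .
\]
Convexity of $\phi$ on $[a,b]$ then gives the pointwise chord bound
\[
\phi(z)\le \frac{b-z}{b-a}\,\phi(a)+\frac{z-a}{b-a}\,\phi(b)=:\ell(z),
\]
where $\ell$ is affine. Because $\mu$ is supported in $[a,b]$ and $\phi$ is bounded above there by $\max(\phi(a),\phi(b))$, the integral $\int\phi\,d\mu$ is well defined in $[-\infty,\infty)$, so no integrability issue arises. Integrating the chord bound against $\mu$ and using $\int z\,\mu(dz)=1$ yields
\[
\int\phi\,d\mu\le \ell(1)=\frac{b-1}{b-a}\,\phi(a)+\frac{1-a}{b-a}\,\phi(b)=\int\phi\,d\mu^\star,
\]
where the last equality is just the definition of the two-point law $\mu^\star$.

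For the equivalence with $\mu\preceq_{cx}\mu^\star$, recall that convex order on $[a,b]$ means precisely that $\int\phi\,d\mu\le\int\phi\,d\mu^\star$ for all convex $\phi$ on $[a,b]$. If one prefers the definition via functions on $\R$, it suffices to note that every convex $\phi$ on $\R$ restricts to a convex function on $[a,b]$, so the two formulations coincide.

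There is no real obstacle in this argument. The only points needing care are the endpoint behavior of $\phi$ and integrability. A convex function on a closed interval may be discontinuous at the endpoints, but the chord inequality uses only the values $\phi(a)$ and $\phi(b)$ together with convexity, so it remains valid in that case.
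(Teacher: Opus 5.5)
Your proposal is correct and follows essentially the same route as the paper: the paper also dispatches $\varepsilon_0=0$ trivially, then integrates the chord bound $\phi(z)\le \frac{b-z}{b-a}\phi(a)+\frac{z-a}{b-a}\phi(b)$ against $\mu$ and uses $\int z\,\mu(dz)=1$. Your extra remarks on integrability and endpoint discontinuities are sound. In fact a real-valued convex $\phi$ on $[a,b]$ is also bounded below, via a supporting line at an interior point, so $\int\phi\,d\mu$ is finite.
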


\begin{proof}
If $\varepsilon_0=0$, then $a=b=1$, $\cM_{\varepsilon_0}=\{\delta_1\}$, and the claim is trivial.
Assume henceforth that $\varepsilon_0>0$.
For each $z\in[a,b]$, convexity gives the chord inequality
\[
\phi(z)\le \frac{b-z}{b-a}\,\phi(a)+\frac{z-a}{b-a}\,\phi(b).
\]
Integrating and using $\int z\,\mu(dz)=1$ gives
\[
\int\phi(z)\,\mu(dz)
\le \frac{b-1}{b-a}\,\phi(a)+\frac{1-a}{b-a}\,\phi(b)
=\int \phi(z)\,\mu^\star(dz).
\qedhere
\]
\end{proof}

\begin{remark}[Martingale coupling characterization of convex order]\label{rem:strassen}
We use the following classical fact due to Strassen: if $X$ and $Y$ are integrable real random variables with the same mean, then $X\preceq_{cx}Y$ if and only if there exists a coupling $(X,Y)$ such that
\[
\E[Y\mid X]=X.
\]
See \cite{strassen}.
\end{remark}

\begin{lemma}[Convex order is preserved by averaging]\label{lem:cx-average}
Let $Z\preceq_{cx} Z^\star$ be integrable real random variables with the same mean. Then for every $n\ge 1$,
\[
\frac1n\sum_{m=1}^n Z_m\preceq_{cx}\frac1n\sum_{m=1}^n Z_m^\star,
\]
where $\{Z_m\}$ and $\{Z_m^\star\}$ are i.i.d.\ copies of $Z$ and $Z^\star$.
\end{lemma}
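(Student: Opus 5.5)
We will prove the statement through the Strassen martingale coupling recalled in \cref{rem:strassen}. By that remark, the hypothesis $Z\preceq_{cx}Z^\star$ together with equality of means gives a coupling $(Z,Z^\star)$ on one probability space with $\E[Z^\star\mid Z]=Z$. We then take $n$ independent copies $(Z_m,Z_m^\star)$, $m=1,\dots,n$, of this coupled pair. Each marginal sequence $\{Z_m\}$ and $\{Z_m^\star\}$ is then i.i.d.\ with the correct law. This matters because the conclusion concerns only the laws of the two averages, so we are free to choose any joint construction.

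Write $\bar Z:=\frac1n\sum_m Z_m$ and $\bar Z^\star:=\frac1n\sum_m Z_m^\star$. The key step is the identity
\[
\E[\bar Z^\star\mid Z_1,\dots,Z_n]=\bar Z .
\]
It holds because independence across $m$ gives $\E[Z_m^\star\mid Z_1,\dots,Z_n]=\E[Z_m^\star\mid Z_m]=Z_m$. Conditioning then down to the coarser $\sigma$-field generated by $\bar Z$ (tower property) yields $\E[\bar Z^\star\mid\bar Z]=\bar Z$.

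It remains to pass from the martingale relation to convex order. Let $\phi$ be convex with $\phi(\bar Z^\star)$ integrable, or take values in $(-\infty,+\infty]$ and argue by monotone limits. Conditional Jensen gives
\[
\E\phi(\bar Z^\star)=\E\bigl[\E[\phi(\bar Z^\star)\mid Z_{1:n}]\bigr]\ge \E\phi\bigl(\E[\bar Z^\star\mid Z_{1:n}]\bigr)=\E\phi(\bar Z),
\]
which is exactly $\bar Z\preceq_{cx}\bar Z^\star$. Alternatively, we can invoke the easy direction of \cref{rem:strassen}. Both averages are integrable with equal means, so the remark applies.

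The only delicate point is the integrability bookkeeping for a general convex $\phi$. The cleanest route is to note that the left side of the Jensen inequality is always well defined in $(-\infty,\infty]$, since a convex function is bounded below by an affine function and $\bar Z^\star$ is integrable. In the application to $\cM_{\varepsilon_0}$ all variables are bounded, so this issue disappears there.
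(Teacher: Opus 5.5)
Your proposal is correct and follows the paper's proof essentially step for step. Both use the Strassen martingale coupling, take i.i.d.\ copies of the coupled pair, establish $\E[\bar Z^\star\mid Z_1,\dots,Z_n]=\bar Z$, and conclude by conditional Jensen. Your extra tower-property step down to $\sigma(\bar Z)$ and your integrability remarks for general convex $\phi$ are harmless refinements that the paper leaves implicit.
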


\begin{proof}
By Strassen's theorem for convex order, there exists a coupling $(Z,Z^\star)$ such that
\[
\E[Z^\star\mid Z]=Z.
\]
Take i.i.d.\ copies $(Z_m,Z_m^\star)$ of this coupling. Then
\[
\E\Bigl[\frac1n\sum_{m=1}^n Z_m^\star\ \Big|\ Z_1,\dots,Z_n\Bigr]=\frac1n\sum_{m=1}^n Z_m.
\]
Therefore, for every convex $\phi$,
\[
\E\phi\Bigl(\frac1n\sum_{m=1}^n Z_m\Bigr)
\le
\E\phi\Bigl(\frac1n\sum_{m=1}^n Z_m^\star\Bigr)
\]
by conditional Jensen. This is exactly the asserted convex order.
\end{proof}

\begin{theorem}[Universal privacy envelope]\label{thm:privacy-envelope}
Fix $\varepsilon_0\ge 0$ and $n\ge 1$.

\begin{enumerate}[label=\textup{(\roman*)}]
\item Let $f:[0,\infty)\to\R$ be convex with $f(1)=0$. Then
\[
\sup_{d\ge 2}\ \sup_{\rho\in\fM_d:\,W_\rho\ \varepsilon_0\text{-LDP}}\ \sup_{i\neq j}
D_f\bigl(Q_{1,n}^{ij}\,\|\,Q_{0,n}^{ij}\bigr)
=
D_f\bigl(Q_{1,n}^{\mathrm{BRR}}\,\|\,Q_{0,n}^{\mathrm{BRR}}\bigr),
\]
where the right-hand side is the one-step shuffled neighboring pair for binary randomized response with local parameter $\varepsilon_0$.

\item For every $\alpha\ge 1$,
\[
\sup_{d\ge 2}\ \sup_{\rho\in\fM_d:\,W_\rho\ \varepsilon_0\text{-LDP}}\ \sup_{i\neq j} H_\alpha\bigl(Q_{1,n}^{ij},Q_{0,n}^{ij}\bigr)
=
H_\alpha\bigl(Q_{1,n}^{\mathrm{BRR}},Q_{0,n}^{\mathrm{BRR}}\bigr),
\]
and likewise
\[
\sup_{d\ge 2}\ \sup_{\rho\in\fM_d:\,W_\rho\ \varepsilon_0\text{-LDP}}\ \sup_{i\neq j} H_\alpha\bigl(Q_{0,n}^{ij},Q_{1,n}^{ij}\bigr)
=
H_\alpha\bigl(Q_{0,n}^{\mathrm{BRR}},Q_{1,n}^{\mathrm{BRR}}\bigr).
\]
Thus binary randomized response universally maximizes the exact $(\varepsilon,\delta)$ privacy profile pointwise.
\end{enumerate}
\end{theorem}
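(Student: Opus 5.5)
The plan is to reduce both parts to a single convex-order comparison of the averaged likelihood ratio, and then apply \cref{lem:cx-max} and \cref{lem:cx-average}.

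Fix $d$, an anchored law $\rho\in\fM_d$ with $W_\rho$ $\varepsilon_0$-LDP, and an ordered pair $i\neq j$. By \cref{prop:pairwise-lr}, under $Q_{0,n}^{ij}$ the likelihood ratio $dQ_{1,n}^{ij}/dQ_{0,n}^{ij}$ equals $\bar L_n:=\frac1n\sum_m L_{ij,m}$, where the summands are i.i.d.\ with law $\mu_{ij}$. The $\varepsilon_0$-LDP condition gives $L_{ij}\in[a,b]$ almost surely, and $\E L_{ij}=1$ because $L_{ij}$ is a likelihood ratio between mutually absolutely continuous laws. Hence $\mu_{ij}\in\cM_{\varepsilon_0}$.

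For binary randomized response, the pairwise likelihood ratio takes the values $b$ and $a$, and it has mean $1$. Its law is therefore exactly $\mu^\star$. So BRR is itself a feasible competitor (with $d=2$ and some anchored law), and this gives "$\ge$" in every supremum.

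For the upper bound in (i), write
\[
D_f(Q_{1,n}^{ij}\|Q_{0,n}^{ij})=\E f(\bar L_n).
\]
The function $f$ restricted to $[a,b]$ is convex. By \cref{lem:cx-max}, $\mu_{ij}\preceq_{cx}\mu^\star$, and both laws have mean $1$. \Cref{lem:cx-average} then gives $\bar L_n\preceq_{cx}\bar L_n^\star$, so $\E f(\bar L_n)\le \E f(\bar L_n^\star)$. The right-hand side is precisely the BRR divergence, since the BRR shuffled likelihood ratio under $Q_{0,n}^{\mathrm{BRR}}$ has the law of $\bar L_n^\star$.

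For the first hockey-stick profile, take $f(z)=(z-\alpha)_+$. This $f$ is convex, and $f(1)=0$ since $\alpha\ge1$, so it is a special case of (i).

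The main obstacle is the reversed profile $H_\alpha(Q_{0,n},Q_{1,n})$, since it is not a direct function of $\bar L_n$ under $Q_{0,n}$. I would rewrite it using the perspective:
\[
H_\alpha(Q_0,Q_1)=\int (1-\alpha\bar L_n)_+\,dQ_0=\E\,g(\bar L_n),
\qquad g(z)=(1-\alpha z)_+ .
\]
This identity is valid because $Q_0\ll Q_1$ under the LDP cap: all likelihood ratios lie in $[a,b]$ with $a>0$, so $\bar L_n>0$. The function $g$ is convex, so the same convex-order chain bounds it by its BRR value. Alternatively, $g(z)=\alpha(z-1/\alpha)_+ +1-\alpha z$ is convex plus affine, and the affine part has the same expectation on both sides because the means agree.

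Finally, I would check the degenerate case $\varepsilon_0=0$ separately. There everything is $\delta_1$, both sides vanish, and BRR degenerates to a constant channel.
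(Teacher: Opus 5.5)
Your proposal is correct and follows the paper's proof essentially step for step. You reduce to $\bar L_n$ via \cref{prop:pairwise-lr}, place $\mu_{ij}$ in $\cM_{\varepsilon_0}$, apply \cref{lem:cx-max,lem:cx-average}, and handle the two hockey-stick profiles through $(t-\alpha)_+$ and $(1-\alpha t)_+$. Your remarks that BRR is itself a feasible competitor (giving the $\ge$ direction) and that mutual absolute continuity justifies the reverse-profile identity spell out two points the paper leaves implicit.
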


\begin{proof}
If $\varepsilon_0=0$, then every feasible local channel has identical rows, $\mu^\star=\delta_1$, and both statements are immediate. Thus assume $\varepsilon_0>0$.

Fix a feasible triple $(d,\rho,i,j)$. Since $W_\rho$ is $\varepsilon_0$-LDP,
\[
e^{-\varepsilon_0}\le \frac{dW_\rho(\cdot\mid j)}{dW_\rho(\cdot\mid i)}\le e^{\varepsilon_0}
\qquad W_\rho(\cdot\mid i)\text{-a.s.}
\]
Hence the one-user LR law $\mu_{ij}=\Law(L_{ij})$ belongs to $\cM_{\varepsilon_0}$.
By \cref{prop:pairwise-lr}, under $Q_{0,n}^{ij}$ the shuffled likelihood ratio is
\[
\bar L_n:=\frac1n\sum_{m=1}^n L_{ij,m}.
\]
Therefore
\[
D_f(Q_{1,n}^{ij}\|Q_{0,n}^{ij})=\E f(\bar L_n).
\]
By \cref{lem:cx-max,lem:cx-average},
\[
\E f(\bar L_n)\le \E f(\bar L_n^\star),
\]
where $\bar L_n^\star$ is the average of $n$ i.i.d.\ draws from $\mu^\star$. The law $\mu^\star$ is realized by binary randomized response, so the right-hand side is exactly $D_f(Q_{1,n}^{\mathrm{BRR}}\|Q_{0,n}^{\mathrm{BRR}})$. This proves (i).

For (ii), use the convex functions
\[
f_\alpha(t):=(t-\alpha)_+,
\qquad
 g_\alpha(t):=(1-\alpha t)_+.
\]
Then
\[
H_\alpha(Q_{1,n}^{ij},Q_{0,n}^{ij})=\E f_\alpha(\bar L_n),
\qquad
H_\alpha(Q_{0,n}^{ij},Q_{1,n}^{ij})=\E g_\alpha(\bar L_n).
\]
Applying part (i) to $f_\alpha$ and $g_\alpha$ gives the result.
\end{proof}

\begin{theorem}[Privacy rigidity]\label{thm:privacy-rigidity}
Fix $\varepsilon_0\ge 0$ and $n\ge 1$. Let $\rho\in\fM_d$, let $i\neq j$, and assume that $W_\rho$ is $\varepsilon_0$-LDP. Suppose that for every $\alpha\ge 1$,
\[
H_\alpha\bigl(Q_{1,n}^{ij},Q_{0,n}^{ij}\bigr)
=
H_\alpha\bigl(Q_{1,n}^{\mathrm{BRR}},Q_{0,n}^{\mathrm{BRR}}\bigr)
\]
and
\[
H_\alpha\bigl(Q_{0,n}^{ij},Q_{1,n}^{ij}\bigr)
=
H_\alpha\bigl(Q_{0,n}^{\mathrm{BRR}},Q_{1,n}^{\mathrm{BRR}}\bigr).
\]
Then the one-user pairwise likelihood-ratio law satisfies
\[
\mu_{ij}=\mu^\star.
\]
Consequently, after merging outputs according to the two values of the pairwise likelihood ratio, the two-row reduction of $(W_\rho(\cdot\mid i),W_\rho(\cdot\mid j))$ is conditionally equivalent to binary randomized response with local parameter $\varepsilon_0$.
\end{theorem}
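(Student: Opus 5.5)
The plan is to show first that the two families of saturated hockey-stick identities force the shuffled likelihood ratio $\bar L_n=\frac1n\sum_{m=1}^n L_{ij,m}$ to have the same law as its binary counterpart $\bar L_n^\star$. From that equality in law I then extract a second-moment identity, which pins down $\mu_{ij}$. The case $\varepsilon_0=0$ is trivial: every feasible channel has identical rows and $\mu_{ij}=\delta_1=\mu^\star$. So assume $\varepsilon_0>0$.

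\emph{Step 1: equality in law.} As in the proof of \cref{thm:privacy-envelope}, \cref{prop:pairwise-lr} gives
\[
H_\alpha(Q_{1,n}^{ij},Q_{0,n}^{ij})=\E(\bar L_n-\alpha)_+ ,\qquad H_\alpha(Q_{0,n}^{ij},Q_{1,n}^{ij})=\E(1-\alpha\bar L_n)_+=\alpha\,\E(\alpha^{-1}-\bar L_n)_+ ,
\]
and the same holds for $\bar L_n^\star$. The first family therefore gives $\E(\bar L_n-s)_+=\E(\bar L_n^\star-s)_+$ for all $s\ge1$. The second family gives $\E(s-\bar L_n)_+=\E(s-\bar L_n^\star)_+$ for all $s\in(0,1]$. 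Both variables have mean $1$, so by $(t-s)_+=(t-s)+(s-t)_+$ the second family converts into $\E(\bar L_n-s)_+=\E(\bar L_n^\star-s)_+$ for $s\le1$ as well. Both variables take values in $[a,b]$, so this covers every $s\in\R$. The function $s\mapsto\E(X-s)_+$ has right derivative $-\Pr(X>s)$, hence it determines the distribution function. Thus $\Law(\bar L_n)=\Law(\bar L_n^\star)$.

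\emph{Step 2: matching variances.} The variables are bounded, so their second moments agree. Because the summands are i.i.d.\ with mean $1$,
\[
\operatorname{Var}(\mu_{ij})=n\operatorname{Var}(\bar L_n)=n\operatorname{Var}(\bar L_n^\star)=\operatorname{Var}(\mu^\star).
\]

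\emph{Step 3: the variance bound and its equality case.} This is the only genuinely rigid step. For $Z\sim\mu\in\cM_{\varepsilon_0}$ we have $(b-Z)(Z-a)\ge0$ pointwise. Taking expectations and using $\E Z=1$ gives
\[
\operatorname{Var}(Z)=\E Z^2-1\le (a+b)-ab-1=(b-1)(1-a).
\]
Equality holds iff $(b-Z)(Z-a)=0$ a.s., i.e.\ iff $Z\in\{a,b\}$ a.s. The value $(b-1)(1-a)$ is exactly $\operatorname{Var}(\mu^\star)$. So Step 2 forces $\mu_{ij}$ to be supported on $\{a,b\}$, and the mean-one constraint then forces $\mu_{ij}=\mu^\star$. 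Convex order from \cref{lem:cx-max} alone would not give strictness here; the explicit quadratic test function is what supplies it.

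\emph{Step 4: the consequence.} Merge the outputs of $W_\rho(\cdot\mid i)$ and $W_\rho(\cdot\mid j)$ according to the value of $L_{ij}\in\{a,b\}$. This gives a binary sufficient reduction: the likelihood ratio is a sufficient statistic for the pair, and within each level set the conditional law does not depend on the row, so the merge is a conditionally identical refinement in the sense of \cref{def:cond-id-ref}. The two cells have $i$-probabilities $\frac{b-1}{b-a}$ and $\frac{1-a}{b-a}$, and their $j$-probabilities are these multiplied by $a$ and $b$ respectively. This is exactly binary randomized response with parameter $\varepsilon_0$, up to labeling.
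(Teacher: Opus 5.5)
Your proof is correct, but the key uniqueness step takes a different route from the paper's. Step~1 is essentially the paper's stop-loss argument. The paper handles the upper and lower transforms separately; you fold the lower family into the upper one via $(t-s)_+=(t-s)+(s-t)_+$ and the common mean $1$, which covers every threshold, including the atom at $1$, a little more cleanly. Step~4 is the same merging computation as in the paper.

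The difference is in passing from $\Law(\bar L_n)=\Law(\bar L_n^\star)$ to $\mu_{ij}=\mu^\star$. The paper proves a deconvolution statement. It writes $\phi(w)^n=\psi(w)^n$ for the one-user characteristic functions and extends both to entire functions. It then notes that $\phi/\psi$ is a continuous $n$th root of unity near $0$, hence identically $1$. You instead extract only the second moment and use additivity of variance over the i.i.d.\ summands. You then apply the strict extremality bound $\operatorname{Var}(Z)\le(b-1)(1-a)$ on $\cM_{\varepsilon_0}$, with equality iff $Z\in\{a,b\}$ a.s.

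Your route is elementary and gives a stronger rigidity statement. Since $\E(\bar L_n-1)^2=\operatorname{Var}(\mu_{ij})/n$, saturating the single $\chi^2$ envelope of \cref{thm:privacy-envelope}(i) at one $n$ already forces $\mu_{ij}=\mu^\star$. You do not need both full directed hockey-stick families.

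The paper's route does not use extremality of the target. It shows that the law of an $n$-fold i.i.d.\ average determines the compactly supported summand law. So equality of both directed profiles at a fixed $n$ identifies $\mu_{ij}$ for any $\varepsilon_0$-LDP pair, not only at the endpoint. Your argument relies essentially on $\mu^\star$ being the unique variance maximizer in $\cM_{\varepsilon_0}$.
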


\begin{proof}
If $\varepsilon_0=0$, then the local constraint forces $W_\rho(\cdot\mid i)=W_\rho(\cdot\mid j)$, so $\mu_{ij}=\delta_1=\mu^\star$ and there is nothing to prove. Thus assume $\varepsilon_0>0$.

Let $\bar\mu_n$ be the law of
\[
\bar L_n:=\frac1n\sum_{m=1}^n L_{ij,m},
\]
where $L_{ij,1},\dots,L_{ij,n}$ are i.i.d.\ with law $\mu_{ij}$, and let $\bar\mu_n^\star$ be the corresponding averaged law built from $\mu^\star$. By \cref{prop:pairwise-lr},
\[
H_\alpha(Q_{1,n}^{ij},Q_{0,n}^{ij})=\int (t-\alpha)_+\,\bar\mu_n(dt),
\]
\[
H_\alpha(Q_{0,n}^{ij},Q_{1,n}^{ij})=\int (1-\alpha t)_+\,\bar\mu_n(dt),
\]
and likewise for $\bar\mu_n^\star$.

Define the stop-loss transform
\[
\mathsf S_{\bar\mu}(\alpha):=\int (t-\alpha)_+\,\bar\mu(dt),\qquad \alpha\in\R.
\]
Since both $\bar\mu_n$ and $\bar\mu_n^\star$ are compactly supported on $[e^{-\varepsilon_0},e^{\varepsilon_0}]$, their stop-loss transforms are convex and Lipschitz on $\R$. The right derivative of a convex function exists at every point, and for the stop-loss transform it satisfies
\[
\frac{d^+}{d\alpha}\mathsf S_{\bar\mu}(\alpha)=-\bar\mu((\alpha,\infty))
\qquad \forall\,\alpha\in\R.
\]
Therefore the equality
\[
\mathsf S_{\bar\mu_n}(\alpha)=\mathsf S_{\bar\mu_n^\star}(\alpha)\qquad \forall\,\alpha\ge 1
\]
implies
\[
\bar\mu_n((\alpha,\infty))=\bar\mu_n^\star((\alpha,\infty))
\qquad \forall\,\alpha\ge 1,
\]
This determines the common restriction of both measures to $(1,e^{\varepsilon_0}]$; the atom at $1$ is recovered only after combining this information with the lower stop-loss half below.

Now define the lower stop-loss transform
\[
\mathsf T_{\bar\mu}(\beta):=\int (\beta-t)_+\,\bar\mu(dt),\qquad \beta\in\R.
\]
It is also convex and Lipschitz, with right derivative
\[
\frac{d^+}{d\beta}\mathsf T_{\bar\mu}(\beta)=\bar\mu((-\infty,\beta])
\qquad \forall\,\beta\in\R.
\]
Since
\[
(1-\alpha t)_+=\alpha\Bigl(\frac1\alpha-t\Bigr)_+,
\]
the equality of the reverse hockey-stick profiles is equivalent to
\[
\mathsf T_{\bar\mu_n}(\beta)=\mathsf T_{\bar\mu_n^\star}(\beta)
\qquad\forall\,\beta\in[e^{-\varepsilon_0},1].
\]
Therefore
\[
\bar\mu_n((-\infty,\beta])=\bar\mu_n^\star((-\infty,\beta])
\qquad \forall\,\beta\in[e^{-\varepsilon_0},1],
\]
which determines both measures on $[e^{-\varepsilon_0},1]$ including any atoms. Hence
\[
\bar\mu_n=\bar\mu_n^\star.
\]

Let
\[
\phi(z):=\int e^{izt}\,\mu_{ij}(dt),
\qquad
\psi(z):=\int e^{izt}\,\mu^\star(dt)
\]
be the characteristic functions of the one-user laws. Since both laws are compactly supported, $\phi$ and $\psi$ extend to entire functions on $\mathbb C$. The characteristic function of $\bar\mu_n$ is
\[
\phi_{\bar\mu_n}(z)=\phi(z/n)^n,
\]
and similarly
\[
\phi_{\bar\mu_n^\star}(z)=\psi(z/n)^n.
\]
Because $\bar\mu_n=\bar\mu_n^\star$,
\[
\phi(z/n)^n=\psi(z/n)^n\qquad \forall\, z\in\R.
\]
Equivalently,
\[
\phi(w)^n=\psi(w)^n\qquad \forall\,w\in\R.
\]
The entire function
\[
F(w):=\phi(w)^n-\psi(w)^n
\]
vanishes on $\R$ (a set with limit points), hence vanishes identically on $\mathbb C$ by the identity theorem for entire functions.

Since $\psi(0)=1$, there exists a connected neighborhood $U$ of $0$ on which $\psi$ has no zeros. On $U$ the quotient
\[
r(w):=\frac{\phi(w)}{\psi(w)}
\]
is analytic and satisfies
\[
r(w)^n=1\qquad \forall\,w\in U.
\]
The image $r(U)$ is connected and contained in the finite set of $n$th roots of unity, so $r$ is constant on $U$. Because $r(0)=1$, we have $r\equiv 1$ on $U$. Hence $\phi=\psi$ on $U$, and therefore on all of $\mathbb C$ by the identity theorem. So
\[
\mu_{ij}=\mu^\star.
\]

Finally, since $\mu^\star$ is supported on the two points $e^{-\varepsilon_0}$ and $e^{\varepsilon_0}$, the pairwise likelihood ratio
\[
L_{ij}=\frac{dW_\rho(\cdot\mid j)}{dW_\rho(\cdot\mid i)}
\]
takes exactly these two values $W_\rho(\cdot\mid i)$-almost surely. Merge outputs according to the value of $L_{ij}$. Under row $i$, the resulting binary law has probabilities
\[
\frac{e^{\varepsilon_0}-1}{e^{\varepsilon_0}-e^{-\varepsilon_0}}
\quad\text{and}\quad
\frac{1-e^{-\varepsilon_0}}{e^{\varepsilon_0}-e^{-\varepsilon_0}}.
\]
Under row $j$, the probabilities are obtained by multiplying by the likelihood ratio:
\[
e^{-\varepsilon_0}\cdot \frac{e^{\varepsilon_0}-1}{e^{\varepsilon_0}-e^{-\varepsilon_0}}
=
\frac{1}{1+e^{\varepsilon_0}},
\qquad
e^{\varepsilon_0}\cdot \frac{1-e^{-\varepsilon_0}}{e^{\varepsilon_0}-e^{-\varepsilon_0}}
=
\frac{e^{\varepsilon_0}}{1+e^{\varepsilon_0}}.
\]
Thus the merged two-row experiment is exactly binary randomized response.
\end{proof}

\begin{remark}[Classical versus new content of the rigidity proof]
The analytic core of the uniqueness step---that compactly supported probability measures are determined by the law of their $n$-fold convolution average---is a classical consequence of the identity theorem for entire characteristic functions. The novelty of \cref{thm:privacy-rigidity} is the derivation of this rigidity condition from the saturation of both directed shuffle hockey-stick envelopes at finite $n$.
\end{remark}

\section{Estimation in the anchored gauge}\label{sec:design}

\subsection{Covariance and Fisher information}

\begin{definition}[Moment matrices]
For $\rho\in\fM_d$ define
\[
\Sigma(\rho):=\int xx^\top\,\rho(dx),
\qquad
B_i(\rho):=\int (1+\gamma_i^\top x)\,xx^\top\,\rho(dx).
\]
\end{definition}

\begin{proposition}[Fisher information at the uniform point]\label{prop:fisher}
Let $u\in\cH$ parameterize compositions by
\[
\theta(u):=\frac{\one}{d}+Hu.
\]
Then the one-user law is
\[
q_u(dx)=(1+u^\top x)\rho(dx).
\]
At $u=0$ the one-user score is $x$, so the $n$-sample Fisher information matrix is
\[
I_{\mathrm{unif}}(\rho)=n\,\Sigma(\rho).
\]
\end{proposition}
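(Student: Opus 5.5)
The plan has three steps: identify the composition map as $h_\theta$, differentiate the log-likelihood at $u=0$, and compute the score covariance, using additivity over i.i.d.\ users.

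\textbf{Step 1: the one-user law.} First I will check that $\theta(u)$ is a composition whose anchored image is $u$. Since $H^\top\one=0$, one has $\one^\top Hu=0$, so $\theta(u)$ sums to one. Moreover $\theta(u)_i=\frac1d+\gamma_i^\top u$. Nonnegativity of $\theta(u)$ holds for $u$ in the image $H^\top(\Delta_d-\one/d)=\cH$. Applying $H^\top$ and using $H^\top H=I_{d-1}$ and $H^\top\one=0$ gives
\[
h_{\theta(u)}=H^\top\Bigl(\theta(u)-\tfrac{\one}{d}\Bigr)=H^\top Hu=u.
\]
\cref{thm:representation}(iv) then yields $q_u(dx)=(1+u^\top x)\rho(dx)$ directly.

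\textbf{Step 2: the score at $u=0$.} All $q_u$ are dominated by $\rho$ with density $p_u(x)=1+u^\top x$, which is affine in $u$. At $u=0$ this density equals $1$, and the gradient in $u$ is $x$. Hence the score is
\[
\nabla_u\log p_u(x)\big|_{u=0}=\frac{x}{1+0^\top x}=x .
\]
The point $u=0$ lies in $\operatorname{int}\cH$, since it corresponds to $\theta=\one/d$ with all entries positive. So $1+u^\top x$ stays bounded below by a positive constant on the compact set $\cX_d$ for $u$ near $0$, by the argument of \cref{prop:diffeo}(iii) and \cref{prop:fiber-props}(iii). The model is therefore regular (differentiable in quadratic mean) near $0$, and differentiation under the integral is justified by boundedness.

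\textbf{Step 3: Fisher information.} The one-user information is
\[
\E_{q_0}[xx^\top]=\int xx^\top\rho(dx)=\Sigma(\rho).
\]
The score has mean zero, $\int x\,\rho(dx)=0$, because $\rho\in\fM_d$. For $n$ i.i.d.\ users, information is additive, giving $n\Sigma(\rho)$. The shuffled observation (the histogram) is sufficient for i.i.d.\ data, so no information is lost by shuffling and the same $n\Sigma(\rho)$ applies to it.

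The only point needing care is the regularity justification in Step 2, which I would handle using the uniform positivity of the density near $u=0$. Everything else is a direct computation.
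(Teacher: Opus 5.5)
Your proof is correct and follows the same route as the paper: the density of $q_u$ with respect to $\rho$ is $1+u^\top x$, the score at $u=0$ is $x$, the one-user information is $\int xx^\top\rho(dx)=\Sigma(\rho)$, and it scales by $n$ for $n$ i.i.d.\ users. Your additional checks are details the paper leaves implicit rather than a different argument: that $h_{\theta(u)}=u$, the regularity of the model near $u=0$, and the sufficiency of the shuffled histogram.
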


\begin{proof}
The density of $q_u$ with respect to $\rho$ is $1+u^\top x$. Hence
\[
\partial_{u_j}\log q_u(x)=\frac{x_j}{1+u^\top x},
\]
so at $u=0$ the score vector is exactly $x$. Therefore the one-user Fisher matrix at $u=0$ is
\[
\int xx^\top\,\rho(dx)=\Sigma(\rho),
\]
and the $n$-sample Fisher information is $n\Sigma(\rho)$.
\end{proof}

\begin{convention}[Singular covariance]\label{rem:canonical-convention}
When $\Sigma(\rho)$ is singular, the affine estimator based on $\Sigma(\rho)^{-1}$ is not defined. Throughout Sections~\ref{sec:design}--\ref{sec:raw}, we adopt the convention
\[
R^{\mathrm{iid}}_\theta(\rho)=R^{\mathrm{fc}}_\theta(\rho):=+\infty
\qquad\text{whenever }\Sigma(\rho)\text{ is singular}.
\]
All explicit formulas below are stated under invertibility assumptions, but the optimization theorems are understood with this convention. In particular, singular laws can never improve the canonical design objective.
\end{convention}

\begin{remark}[Mechanism knowledge]
The canonical estimator is mechanism-calibrated: it depends on $\rho$ through the matrix $\Sigma(\rho)$. This is standard in mechanism-design problems, because the analyst chooses the channel and therefore knows $\rho$ at estimation time.
\end{remark}

\subsection{Canonical unbiased estimator under fixed composition}

\begin{proposition}[Exact fixed-composition risk]\label{prop:fixed-risk}
Let $\rho\in\fM_d$ and assume $\Sigma(\rho)$ is invertible. For a deterministic composition $\theta\in\Delta_d$ with integer counts $n\theta_i$, let $X_1,\dots,X_n$ be independent outputs with exactly $n\theta_i$ draws from row $i$.
Define
\[
\widehat\theta_\rho:=\frac{\one}{d}+H\Sigma(\rho)^{-1}\bar X,
\qquad
\bar X:=\frac1n\sum_{m=1}^n X_m.
\]
Then $\widehat\theta_\rho$ is unbiased and
\[
R^{\mathrm{fc}}_\theta(\rho):=\E_\theta\|\widehat\theta_\rho-\theta\|_2^2
=\frac1n\sum_{i=1}^d \theta_i\Bigl[\tr\bigl(\Sigma^{-1}B_i\Sigma^{-1}\bigr)-\frac{d-1}{d}\Bigr].
\]
Consequently:
\begin{enumerate}[label=\textup{(\roman*)}]
\item $R^{\mathrm{fc}}_\theta(\rho)$ is affine in $\theta$;
\item $\sup_{\theta\in\Delta_d} R^{\mathrm{fc}}_\theta(\rho)=\max_i R^{\mathrm{fc}}_{e_i}(\rho)$;
\item
\[
\frac1d\sum_{i=1}^d R^{\mathrm{fc}}_{e_i}(\rho)=\frac1n\Bigl[\tr\Sigma(\rho)^{-1}-\frac{d-1}{d}\Bigr].
\]
\end{enumerate}
\end{proposition}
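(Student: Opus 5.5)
The plan is a direct moment computation under the row laws $W_\rho(\cdot\mid i)(dx)=(1+\gamma_i^\top x)\rho(dx)$ from \cref{thm:representation}. Write $\Sigma=\Sigma(\rho)$ and $B_i=B_i(\rho)$.

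\emph{Row moments.} The first step records the mean and second moment of a single draw from row $i$:
\[
\E_i X=\int x(1+\gamma_i^\top x)\rho(dx)=\Sigma\gamma_i,
\qquad
\E_i XX^\top=B_i .
\]
The mean uses $\int x\,\rho(dx)=0$. Hence $\E\bar X=\sum_i\theta_i\Sigma\gamma_i=\Sigma h_\theta$. It follows that
\[
\E\widehat\theta_\rho=\frac{\one}{d}+HH^\top\Bigl(\theta-\frac{\one}{d}\Bigr)=\theta,
\]
because $\theta-\one/d\in T_d$ and $HH^\top$ is the orthogonal projector onto $T_d$. This gives unbiasedness.

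\emph{Risk.} Since $H$ is an isometry, $\|\widehat\theta_\rho-\theta\|_2^2=\|\Sigma^{-1}(\bar X-\Sigma h_\theta)\|_2^2$. By independence, the covariance of $\bar X$ is
\[
\frac{1}{n^2}\sum_m \Cov(X_m)=\frac1n\sum_i\theta_i\bigl(B_i-\Sigma\gamma_i\gamma_i^\top\Sigma\bigr).
\]
The cross terms vanish even though the rows differ, because the draws are independent and each is centered at its own row mean. Taking the trace after conjugating by $\Sigma^{-1}$ yields
\[
R^{\mathrm{fc}}_\theta(\rho)=\frac1n\sum_i\theta_i\bigl[\tr(\Sigma^{-1}B_i\Sigma^{-1})-\|\gamma_i\|^2\bigr],
\]
and $\|\gamma_i\|^2=1-1/d=(d-1)/d$.

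\emph{Consequences.} The formula is visibly affine in $\theta$, which gives (i). An affine function on the simplex attains its supremum at a vertex, which gives (ii). For (iii), average over $i$ and use $\frac1d\sum_i B_i=\Sigma+\frac1d\int(\sum_i\gamma_i)^\top x\,xx^\top\rho(dx)=\Sigma$, since $\sum_i\gamma_i=0$. Then
\[
\frac1d\sum_i\tr(\Sigma^{-1}B_i\Sigma^{-1})=\tr\Sigma^{-1}.
\]

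The only delicate point is using the covariance of the non-identically distributed sum correctly, rather than the covariance of the mixture. Subtracting each row's own mean is what produces the term $\|\gamma_i\|^2$ instead of a term in $\|h_\theta\|^2$. Everything else is routine.
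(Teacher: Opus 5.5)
Your proposal is correct and follows essentially the same route as the paper: row means $\Sigma\gamma_i$, the row-wise covariance $B_i-\Sigma\gamma_i\gamma_i^\top\Sigma$ summed over the independent draws, conjugation by $\Sigma^{-1}$ using the isometry $H$, and the identity $\frac1d\sum_i B_i=\Sigma$ (from $\sum_i\gamma_i=0$) for (iii). Your use of $HH^\top$ as the orthogonal projector onto $T_d$ for unbiasedness is the same fact the paper states as $Hh_\theta=\theta-\one/d$.
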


\begin{proof}
Under row $i$,
\[
\E_i[X]=\int x(1+\gamma_i^\top x)\,\rho(dx)=\Sigma\gamma_i.
\]
Hence under the fixed composition $\theta$,
\[
\E_\theta[\bar X]=\Sigma\sum_{i=1}^d \theta_i\gamma_i=\Sigma h_\theta.
\]
Since $Hh_\theta=\theta-\one/d$, we obtain
\[
\E_\theta[\widehat\theta_\rho]=\frac{\one}{d}+H\Sigma^{-1}\Sigma h_\theta=\theta,
\]
so the estimator is unbiased.

The row-$i$ covariance of a single draw is
\[
\Cov_i(X)=B_i-\Sigma\gamma_i\gamma_i^\top\Sigma.
\]
Since the users are independent and exactly $n\theta_i$ come from row $i$,
\[
\Cov_\theta(\bar X)=\frac1n\sum_{i=1}^d\theta_i\bigl(B_i-\Sigma\gamma_i\gamma_i^\top\Sigma\bigr).
\]
Because $H$ is an isometry from $\R^{d-1}$ to $T_d$, the squared error is
\[
\|\widehat\theta_\rho-\theta\|_2^2=\bigl\|\Sigma^{-1}(\bar X-\E_\theta\bar X)\bigr\|_2^2.
\]
Taking expectations yields
\[
\begin{aligned}
R^{\mathrm{fc}}_\theta(\rho)
&=\tr\Bigl(\Sigma^{-1}\Cov_\theta(\bar X)\Sigma^{-1}\Bigr)\\
&=\frac1n\sum_{i=1}^d\theta_i\Bigl[\tr(\Sigma^{-1}B_i\Sigma^{-1})-\tr(\gamma_i\gamma_i^\top)\Bigr].
\end{aligned}
\]
Since $\|\gamma_i\|_2^2=(d-1)/d$, the formula follows.

Affineness in $\theta$ is immediate. Since an affine function attains its maximum over the simplex at a vertex, (ii) follows. For (iii), note that
\[
\frac1d\sum_{i=1}^d B_i
=\int \frac1d\sum_{i=1}^d(1+\gamma_i^\top x)xx^\top\,\rho(dx)
=\Sigma
\]
because $\sum_i\gamma_i=0$. Averaging the vertex formula therefore gives the identity.
\end{proof}

\begin{lemma}[Exchangeable vanishing]\label{lem:exch-vanish}
Let $\rho\in\fM_d$ be exchangeable. Then for every integrable scalar function $g:\cX_d\to\R$ satisfying $g(P_\pi x)=g(x)$ for all $\pi\in S_d$,
\[
\int x\,g(x)\,\rho(dx)=0.
\]
\end{lemma}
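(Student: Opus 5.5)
The plan is to show that the vector $v:=\int x\,g(x)\,\rho(dx)\in\R^{d-1}$ is fixed by every $P_\pi$, and then to use the fact that the induced action of $S_d$ on $\R^{d-1}$ has no nonzero fixed vectors. Here I take $P_\pi:=H^\top\Pi_\pi H$, where $\Pi_\pi$ is the $d\times d$ permutation matrix; exchangeability of $\rho$ means $(P_\pi)_\#\rho=\rho$ for all $\pi\in S_d$. First I will record the basic facts about $P_\pi$. Since $\Pi_\pi$ preserves $T_d$ and $HH^\top$ is the orthogonal projector onto $T_d$, the matrix $P_\pi$ is orthogonal and satisfies $HP_\pi=\Pi_\pi H$. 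It follows that $P_\pi^\top\gamma_i=\gamma_{\pi^{-1}(i)}$ (or the analogous relation under the other indexing convention). Hence $1+\gamma_i^\top P_\pi x=1+\gamma_{\pi^{-1}(i)}^\top x$, so $P_\pi$ maps $\cX_d$ onto itself. This guarantees that the invariance hypotheses make sense and that the change of variables below stays inside $\cX_d$.

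Next comes the invariance step. Integrability of $x\,g(x)$ follows from integrability of $g$ together with boundedness of $\cX_d$ (\cref{lem:polytope-template}). For each $\pi$, I use $(P_\pi)_\#\rho=\rho$ and $g\circ P_\pi=g$ to compute
\[
P_\pi v=\int P_\pi x\,g(x)\,\rho(dx)=\int P_\pi x\,g(P_\pi x)\,\rho(dx)=\int y\,g(y)\,\rho(dy)=v .
\]

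Finally I show there are no nonzero fixed vectors. Put $w:=Hv\in T_d$. From $P_\pi v=v$ we get $\Pi_\pi w=HP_\pi v=Hv=w$ for all $\pi$. A vector in $\R^d$ fixed by all coordinate permutations has all entries equal, so $w=c\one$. Since $\one^\top w=0$, it follows that $c=0$. Because $H$ is injective, $v=H^\top w=0$.

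I expect the only real obstacle to be bookkeeping: pinning down the definition of $P_\pi$ and its action on the $\gamma_i$ so that it matches the paper's notion of exchangeability. The identity $HP_\pi=\Pi_\pi H$ is the key point there, and it follows from $HH^\top\Pi_\pi H=\Pi_\pi H$. The remaining steps are routine.
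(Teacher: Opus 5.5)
Your proposal is correct and follows essentially the same argument as the paper: show that $v=\int x\,g(x)\,\rho(dx)$ is fixed by every $P_\pi$, then use that the permutation action on $T_d$ has trivial fixed subspace. Your version spells out details the paper leaves implicit ($HP_\pi=\Pi_\pi H$, orthogonality, the transfer $w=Hv$), and the one remaining unstated step is harmless: the paper defines exchangeability as $\bar\rho=\rho$, and getting $(P_\pi)_\#\rho=\rho$ for each $\pi$ uses $P_\sigma P_\pi=P_{\sigma\pi}$, which follows directly from your identity $HP_\pi=\Pi_\pi H$.
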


\begin{proof}
Set $m_g:=\int x\,g(x)\,\rho(dx)$. By exchangeability of $\rho$, the vector $m_g$ satisfies $P_\pi m_g=m_g$ for every $\pi\in S_d$. Since the fixed subspace of the permutation action on $T_d$ is $\{0\}$, one has $m_g=0$.
\end{proof}

\begin{proposition}[The i.i.d.\ risk formula]\label{prop:iid-risk}
Let $\rho\in\fM_d$ with invertible $\Sigma=\Sigma(\rho)$, and suppose the user inputs are i.i.d.\ from $\theta\in\Delta_d$. Then the canonical estimator
\[
\widehat\theta_\rho=\frac{\one}{d}+H\Sigma^{-1}\bar X
\]
has risk
\[
R^{\mathrm{iid}}_\theta(\rho)
=\frac1n\int (1+h_\theta^\top x)\,x^\top\Sigma^{-2}x\,\rho(dx)-\frac{\|h_\theta\|_2^2}{n}.
\]
If $\rho$ is exchangeable, then by \cref{lem:exch-vanish} and \cref{thm:symmetrization}(iii), the cross term vanishes and therefore
\[
R^{\mathrm{iid}}_\theta(\rho)=\frac1n\tr\Sigma^{-1}-\frac{\|h_\theta\|_2^2}{n}.
\]
In particular, for exchangeable $\rho$ the maximum is attained at the uniform composition $\theta=\one/d$.
\end{proposition}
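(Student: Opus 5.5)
Under i.i.d.\ sampling from $\theta$, the outputs $X_1,\dots,X_n$ are i.i.d.\ from the one-user mixture $q_\theta(dx)=(1+h_\theta^\top x)\rho(dx)$ of \cref{thm:representation}(iv). The plan is a bias–variance computation for a single draw, followed by an exchangeability simplification.

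\textbf{Unbiasedness.} First compute the mean of one draw:
\[
\E_\theta X=\int x(1+h_\theta^\top x)\,\rho(dx)=\Sigma h_\theta,
\]
using $\int x\,\rho(dx)=0$. Since $Hh_\theta=\theta-\one/d$, the estimator $\widehat\theta_\rho$ is unbiased, exactly as in \cref{prop:fixed-risk}.

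\textbf{Risk formula.} Because $H$ is an isometry onto $T_d$,
\[
\|\widehat\theta_\rho-\theta\|_2^2=\|\Sigma^{-1}(\bar X-\Sigma h_\theta)\|_2^2 .
\]
Its expectation is $\frac1n\tr\bigl(\Sigma^{-1}\Cov_\theta(X)\Sigma^{-1}\bigr)$, where
\[
\Cov_\theta(X)=\int (1+h_\theta^\top x)\,xx^\top\rho(dx)-\Sigma h_\theta h_\theta^\top\Sigma .
\]
Taking the trace against $\Sigma^{-2}$ gives two terms. The first is $\int(1+h_\theta^\top x)\,x^\top\Sigma^{-2}x\,\rho(dx)$. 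In the second, the factors $\Sigma^{-1}\Sigma$ cancel and leave $\|h_\theta\|_2^2$. This yields the stated general formula; it is linear algebra only.

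\textbf{Exchangeable case.} Expand the integral as
\[
\int x^\top\Sigma^{-2}x\,\rho(dx)+h_\theta^\top\int x\,(x^\top\Sigma^{-2}x)\,\rho(dx).
\]
The first term equals $\tr(\Sigma^{-2}\Sigma)=\tr\Sigma^{-1}$. For the cross term we want to apply \cref{lem:exch-vanish} with $g(x)=x^\top\Sigma^{-2}x$. This needs $g$ to be permutation invariant, i.e.\ $\Sigma$ must commute with the permutation action on $T_d$, which is the content of \cref{thm:symmetrization}(iii).

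This is the main point to get right. If I cannot lean on that cited item, I would argue directly. Exchangeability gives $P_\pi\Sigma P_\pi^\top=\Sigma$ for every $\pi\in S_d$, where $P_\pi$ denotes the induced orthogonal action on $\R^{d-1}\cong T_d$. This action is irreducible (the standard representation), so Schur's lemma forces $\Sigma=cI$. Hence $g(x)=c^{-2}\|x\|^2$, which is invariant because the action is orthogonal, and the cross term vanishes. Integrability is automatic since $\cX_d$ is compact.

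\textbf{Maximum over $\theta$.} In the exchangeable case the risk equals $\frac1n\tr\Sigma^{-1}-\frac1n\|h_\theta\|_2^2$. This is maximized where $h_\theta=0$, i.e.\ at $\theta=\one/d$, since $h_\theta=H^\top(\theta-\one/d)$ vanishes exactly there.
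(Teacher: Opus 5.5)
Your proposal is correct and follows the paper's proof essentially step for step. The one-draw mean $\Sigma h_\theta$ and covariance under $q_\theta$ give the general formula via $\tr(\Sigma^{-1}\Cov_\theta(\bar X)\Sigma^{-1})$. In the exchangeable case, isotropy of $\Sigma$ (the paper cites \cref{thm:symmetrization}(iii)) makes $g(x)=x^\top\Sigma^{-2}x$ permutation invariant, so \cref{lem:exch-vanish} kills the cross term.

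Your Schur's-lemma fallback is a valid substitute, though it is more than you need. The identity $P_\pi\Sigma P_\pi^\top=\Sigma$ together with orthogonality of $P_\pi$ already gives $g(P_\pi x)=g(x)$ directly.
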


\begin{proof}
Under i.i.d.\ sampling from $\theta$, one draw has law $q_\theta(dx)=(1+h_\theta^\top x)\rho(dx)$. Its mean is
\[
\E_\theta[X]=\Sigma h_\theta,
\]
and its second moment matrix is
\[
M_\theta:=\int (1+h_\theta^\top x)\,xx^\top\,\rho(dx).
\]
Therefore
\[
\Cov_\theta(\bar X)=\frac1n\bigl(M_\theta-\Sigma h_\theta h_\theta^\top\Sigma\bigr).
\]
As in the proof of \cref{prop:fixed-risk},
\[
R^{\mathrm{iid}}_\theta(\rho)=\tr\Bigl(\Sigma^{-1}\Cov_\theta(\bar X)\Sigma^{-1}\Bigr),
\]
which is the stated formula.

If $\rho$ is exchangeable, then exchangeability implies $\Sigma(\rho)=\sigma^2 I_{d-1}$ for some $\sigma^2>0$ (by \cref{thm:symmetrization}(iii)), so $x^\top\Sigma^{-2}x=\sigma^{-4}\|x\|^2$ is permutation-invariant. Applying \cref{lem:exch-vanish} with $g(x)=x^\top\Sigma^{-2}x$ gives $\int x\,(x^\top\Sigma^{-2}x)\,\rho(dx)=0$, and hence
\[
\int (h_\theta^\top x)\,x^\top\Sigma^{-2}x\,\rho(dx)=0.
\]
Thus
\[
\int (1+h_\theta^\top x)x^\top\Sigma^{-2}x\,\rho(dx)=\int x^\top\Sigma^{-2}x\,\rho(dx)=\tr\Sigma^{-1}.
\]
The displayed formula follows, and since the second term is nonnegative, the maximum is attained at $h_\theta=0$, i.e. at $\theta=\one/d$.
\end{proof}

\begin{definition}[Canonical pairwise $\chi^2$ budget]\label{def:budget}
For $\rho\in\fM_d$, define
\[
\chi_*(\rho):=\max_{i\neq j}\int \frac{\bigl((\gamma_j-\gamma_i)^\top x\bigr)^2}{1+\gamma_i^\top x}\,\rho(dx),
\]
where the integrand is interpreted in the extended-real $\chi^2$ sense: on the set $\{1+\gamma_i^\top x=0\}$ it is taken to be $0$ when $(\gamma_j-\gamma_i)^\top x=0$ and $+\infty$ otherwise.
\end{definition}

\section{Global symmetrization}\label{sec:symmetrization}

Let $S_d$ act on $T_d$ by the restricted permutation matrices
\[
P_\pi:=H^\top \Pi H,\qquad \pi\in S_d,
\]
where $\Pi$ is the $d\times d$ permutation matrix of $\pi$.
This induces an action on $\cX_d$ because
\[
1+\gamma_i^\top(P_\pi x)=1+\gamma_{\pi^{-1}i}^\top x.
\]

\begin{definition}[Symmetrization]
For $\rho\in\fM_d$, define
\[
\bar\rho:=\frac1{d!}\sum_{\pi\in S_d}(P_\pi)_\#\rho.
\]
We call $\rho$ \emph{exchangeable} if $\bar\rho=\rho$.
\end{definition}

\begin{theorem}[Global symmetrization theorem]\label{thm:symmetrization}
For every $\rho\in\fM_d$:
\begin{enumerate}[label=\textup{(\roman*)}]
\item $\chi_*(\bar\rho)\le \chi_*(\rho)$.
\item
\[
\sup_{\theta\in\Delta_d}R^{\mathrm{fc}}_\theta(\bar\rho)
\le
\sup_{\theta\in\Delta_d}R^{\mathrm{fc}}_\theta(\rho).
\]
\item The covariance isotropizes:
\[
\Sigma(\bar\rho)=\frac{\tr\Sigma(\rho)}{d-1}\,I_{d-1}.
\]
\end{enumerate}
\end{theorem}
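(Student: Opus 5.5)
The plan is to exploit the permutation covariance of every object involved: the budget $\chi_*$, the fixed-composition vertex risks, and the covariance matrix $\Sigma$. Throughout I use the relation $1+\gamma_i^\top(P_\pi x)=1+\gamma_{\pi^{-1}i}^\top x$. Since $P_\pi=H^\top\Pi H$ and $\Pi$ preserves $T_d$, $P_\pi$ is orthogonal on $\R^{d-1}$ and $P_\pi\gamma_i=\gamma_{\pi i}$. First I will check that $(P_\pi)_\#\rho\in\fM_d$, since the mean is preserved by linearity. Hence $\bar\rho\in\fM_d$, and $\bar\rho$ is a convex combination of laws in $\fM_d$.

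For (i), write $\chi_*(\rho)=\max_{i\neq j}C_{ij}(\rho)$, where
\[
C_{ij}(\rho):=\int \frac{((\gamma_j-\gamma_i)^\top x)^2}{1+\gamma_i^\top x}\,\rho(dx).
\]
Each $C_{ij}$ is linear in $\rho$; the extended-real convention is compatible with mixtures, since nonnegative integrands may take the value $+\infty$. Substituting $x\mapsto P_\pi x$ gives $C_{ij}((P_\pi)_\#\rho)=C_{\pi^{-1}i,\pi^{-1}j}(\rho)\le\chi_*(\rho)$. Averaging over $\pi$ gives $C_{ij}(\bar\rho)\le\chi_*(\rho)$ for every pair $i\ne j$, and taking the maximum proves (i).

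For (iii), $\Sigma((P_\pi)_\#\rho)=P_\pi\Sigma P_\pi^\top$, so $\Sigma(\bar\rho)$ commutes with every $P_\pi$. The representation of $S_d$ on $T_d$ is the standard representation, which is irreducible and absolutely irreducible over $\R$. By Schur's lemma the symmetric matrix $\Sigma(\bar\rho)$ is therefore a scalar multiple of $I_{d-1}$. Orthogonal conjugation preserves the trace, so that scalar is $\tr\Sigma(\rho)/(d-1)$. Alternatively one can avoid Schur and argue directly: any eigenspace of $\Sigma(\bar\rho)$ is $S_d$-invariant, and the only invariant subspaces of $T_d$ are $0$ and $T_d$ itself.

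For (ii), by the convention for singular covariance we may assume $\Sigma(\rho)$ is invertible. Otherwise the right-hand side is $+\infty$ and there is nothing to prove. Then $\Sigma(\bar\rho)=\sigma^2 I$ is also invertible, with $\sigma^2=\tr\Sigma/(d-1)$, since $\tr\Sigma>0$. By \cref{prop:fixed-risk}(ii) it suffices to control the vertex risks. Exchangeability gives
\[
B_i(\bar\rho)=P_\pi B_{\pi^{-1}i}(\bar\rho)P_\pi^\top,
\]
so all vertex risks of $\bar\rho$ are equal. Combining this with \cref{prop:fixed-risk}(iii),
\[
\sup_\theta R^{\mathrm{fc}}_\theta(\bar\rho)=\frac1n\Bigl[\frac{(d-1)^2}{\tr\Sigma(\rho)}-\frac{d-1}{d}\Bigr].
\]
On the other side, \cref{prop:fixed-risk}(iii) gives
\[
\sup_\theta R^{\mathrm{fc}}_\theta(\rho)\ge\frac1d\sum_i R^{\mathrm{fc}}_{e_i}(\rho)=\frac1n\Bigl[\tr\Sigma(\rho)^{-1}-\frac{d-1}{d}\Bigr].
\]
The claim then reduces to the inequality $\tr\Sigma^{-1}\ge (d-1)^2/\tr\Sigma$, which is Cauchy--Schwarz (equivalently AM--HM) applied to the eigenvalues of $\Sigma$.

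The main subtlety is bookkeeping rather than depth. One must verify the identity $P_\pi\gamma_i=\gamma_{\pi i}$ and get the direction of the index relabeling right. The extended-real and singular conventions must also be handled: a singular $\rho$ could still have a nonsingular $\bar\rho$, but that case is harmless because the right-hand side of (ii) is then $+\infty$. Finally, the step that is not purely formal is the invariance argument in (iii), which I would settle by the irreducibility of the standard representation.
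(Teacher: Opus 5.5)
Your proposal is correct and follows essentially the same route as the paper: the same index-relabeling argument $c_{ij}(P_\pi x)=c_{\pi^{-1}i,\pi^{-1}j}(x)$ for (i), and for (ii) the same chain of vertex-risk equality for $\bar\rho$, the vertex-average identity of \cref{prop:fixed-risk}(iii), and the harmonic--arithmetic mean inequality on the eigenvalues of $\Sigma(\rho)$. The only variation is in (iii). You invoke irreducibility of the standard representation (Schur), whereas the paper lifts to $H\Sigma(\rho)H^\top$ and uses the elementary fact that a matrix commuting with all permutation matrices has the form $\alpha I_d+\beta\one\one^\top$, which proves the same isotropy directly.
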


\begin{proof}
(i) For a fixed pair $(i,j)$, write
\[
c_{ij}(x):=\frac{((\gamma_j-\gamma_i)^\top x)^2}{1+\gamma_i^\top x}.
\]
Then
\[
c_{ij}(P_\pi x)=c_{\pi^{-1}i,\pi^{-1}j}(x).
\]
Therefore
\[
\int c_{ij}(x)\,\bar\rho(dx)=\frac1{d!}\sum_{\pi\in S_d}\int c_{\pi^{-1}i,\pi^{-1}j}(x)\,\rho(dx)\le \chi_*(\rho),
\]
and taking the maximum over $i\neq j$ gives the claim.

(ii) If $\Sigma(\rho)$ is singular, then \cref{rem:canonical-convention} gives
\[
\sup_{\theta\in\Delta_d}R^{\mathrm{fc}}_\theta(\rho)=+\infty,
\]
so there is nothing to prove. Hence assume $\Sigma(\rho)$ is invertible. By \cref{prop:fixed-risk},
\[
\sup_{\theta}R^{\mathrm{fc}}_\theta(\bar\rho)=\max_i R^{\mathrm{fc}}_{e_i}(\bar\rho).
\]
Exchangeability implies that all vertex risks are equal, so this maximum equals the average over vertices:
\[
\sup_{\theta}R^{\mathrm{fc}}_\theta(\bar\rho)=\frac1d\sum_{i=1}^d R^{\mathrm{fc}}_{e_i}(\bar\rho)=\frac1n\Bigl[\tr\Sigma(\bar\rho)^{-1}-\frac{d-1}{d}\Bigr].
\]
By part (iii),
\[
\tr\Sigma(\bar\rho)^{-1}=\frac{(d-1)^2}{\tr\Sigma(\rho)}.
\]
Now let $\lambda_1,\dots,\lambda_{d-1}$ be the eigenvalues of $\Sigma(\rho)$. The harmonic-arithmetic mean inequality gives
\[
\frac{(d-1)^2}{\tr\Sigma(\rho)}=\frac{(d-1)^2}{\sum_k \lambda_k}\le \sum_k \frac{1}{\lambda_k}=\tr\Sigma(\rho)^{-1}.
\]
Using again \cref{prop:fixed-risk}(iii),
\[
\frac1n\Bigl[\tr\Sigma(\rho)^{-1}-\frac{d-1}{d}\Bigr]=\frac1d\sum_{i=1}^d R^{\mathrm{fc}}_{e_i}(\rho)\le \sup_\theta R^{\mathrm{fc}}_\theta(\rho).
\]
Combining the last three displays proves (ii).

(iii) Set
\[
\widetilde\Sigma:=H\Sigma(\rho)H^\top,
\qquad
\widetilde\Sigma_{\mathrm{sym}}:=\frac1{d!}\sum_{\pi\in S_d}\Pi \widetilde\Sigma \Pi^\top,
\]
where $\Pi$ is the permutation matrix of $\pi$. Since $H^\top \Pi H=P_\pi$, one has
\[
\widetilde\Sigma_{\mathrm{sym}}=H\Sigma(\bar\rho)H^\top.
\]
The matrix $\widetilde\Sigma_{\mathrm{sym}}$ commutes with every permutation matrix. Therefore all of its diagonal entries are equal to some $\alpha$ and all of its off-diagonal entries are equal to some $\beta$, so
\[
\widetilde\Sigma_{\mathrm{sym}}=\alpha I_d+\beta \one\one^\top.
\]
Since
\[
\widetilde\Sigma_{\mathrm{sym}}=H\Sigma(\bar\rho)H^\top
\qquad\text{and}\qquad
H^\top\one=0,
\]
we have
\[
\widetilde\Sigma_{\mathrm{sym}}\one=H\Sigma(\bar\rho)H^\top\one=0.
\]
Hence
\[
0=\widetilde\Sigma_{\mathrm{sym}}\one=(\alpha+d\beta)\one,
\]
so $\beta=-\alpha/d$. The restriction of $\widetilde\Sigma_{\mathrm{sym}}$ to $T_d$ is therefore multiplication by $\alpha$, and
\[
\Sigma(\bar\rho)=\alpha I_{d-1}.
\]
Taking traces gives
\[
(d-1)\alpha=\tr\Sigma(\bar\rho)=\tr\Sigma(\rho),
\]
which proves the formula.
\end{proof}

\begin{corollary}[I.i.d. symmetrization]\label{cor:iid-symm}
For every $\rho\in\fM_d$,
\[
\sup_{\theta\in\Delta_d}R^{\mathrm{iid}}_\theta(\bar\rho)
\le
\sup_{\theta\in\Delta_d}R^{\mathrm{iid}}_\theta(\rho).
\]
Moreover,
\[
\sup_{\theta\in\Delta_d}R^{\mathrm{iid}}_\theta(\bar\rho)
=
\frac{(d-1)^2}{n\,\tr\Sigma(\rho)},
\]
with the convention that the right-hand side is $+\infty$ when $\tr\Sigma(\rho)=0$.
\end{corollary}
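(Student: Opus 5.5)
The plan is to compute both sides of the inequality via the i.i.d.\ risk formula of \cref{prop:iid-risk}. For $\bar\rho$, which is exchangeable, \cref{thm:symmetrization}(iii) gives $\Sigma(\bar\rho)=\frac{\tr\Sigma(\rho)}{d-1}I_{d-1}$. If $\tr\Sigma(\rho)=0$, then $\Sigma(\bar\rho)=0$ is singular, so $\sup_\theta R^{\mathrm{iid}}_\theta(\bar\rho)=+\infty$ by \cref{rem:canonical-convention}. In that case $\Sigma(\rho)=0$ is also singular, so the left side is $+\infty$ as well and the inequality holds trivially. Otherwise the exchangeable case of \cref{prop:iid-risk} gives
\[
R^{\mathrm{iid}}_\theta(\bar\rho)=\frac1n\tr\Sigma(\bar\rho)^{-1}-\frac{\|h_\theta\|_2^2}{n}=\frac{(d-1)^2}{n\tr\Sigma(\rho)}-\frac{\|h_\theta\|_2^2}{n}.
\]
The supremum is attained at $\theta=\one/d$, where $h_\theta=0$, and this yields the stated closed form.

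For the inequality, first dispose of the case where $\Sigma(\rho)$ is singular: there the right side is $+\infty$ by convention. So assume $\Sigma=\Sigma(\rho)$ is invertible. Rather than bounding the supremum by the uniform point alone, I would average the risk over the vertex compositions $\theta=e_i$. Since a supremum dominates any average,
\[
\sup_\theta R^{\mathrm{iid}}_\theta(\rho)\ \ge\ \frac1d\sum_{i=1}^d R^{\mathrm{iid}}_{e_i}(\rho).
\]
Using $h_{e_i}=\gamma_i$, $\sum_i\gamma_i=0$ and $\|\gamma_i\|^2=(d-1)/d$ in the general formula of \cref{prop:iid-risk}, the cross terms cancel and the average becomes $\frac1n\bigl[\tr\Sigma^{-1}-\frac{d-1}{d}\bigr]$.

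This vertex average carries a $-\frac{d-1}{dn}$ correction, so comparing it with the $\bar\rho$ value at $\theta=\one/d$ fails. The fix is to compare vertex to vertex instead. The value $R^{\mathrm{iid}}_{e_i}(\bar\rho)=\frac1n\bigl[\frac{(d-1)^2}{\tr\Sigma}-\frac{d-1}{d}\bigr]$ is not the supremum for $\bar\rho$, which is the uniform value. So instead I would evaluate at the uniform point directly: $R^{\mathrm{iid}}_{\one/d}(\rho)=\frac1n\int x^\top\Sigma^{-2}x\,\rho(dx)=\frac1n\tr\Sigma^{-1}$, because $h=0$.

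Then the harmonic–arithmetic mean inequality on the eigenvalues of $\Sigma$, exactly as in the proof of \cref{thm:symmetrization}(ii), gives $\tr\Sigma^{-1}\ge(d-1)^2/\tr\Sigma$. Chaining the pieces,
\[
\sup_\theta R^{\mathrm{iid}}_\theta(\rho)\ \ge\ R^{\mathrm{iid}}_{\one/d}(\rho)=\frac{\tr\Sigma^{-1}}{n}\ \ge\ \frac{(d-1)^2}{n\tr\Sigma}=\sup_\theta R^{\mathrm{iid}}_\theta(\bar\rho).
\]
The only delicate points are the singular-covariance conventions and choosing the uniform point rather than vertex averaging. Both are handled above, so I expect no real obstacle beyond bookkeeping.
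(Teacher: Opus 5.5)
Your proposal is correct and follows the paper's proof essentially step for step. You dispose of the degenerate cases via the singular-covariance convention, and you obtain $\sup_\theta R^{\mathrm{iid}}_\theta(\bar\rho)=\frac{(d-1)^2}{n\tr\Sigma(\rho)}$ from the exchangeable risk formula and the isotropized covariance. You then chain $\sup_\theta R^{\mathrm{iid}}_\theta(\rho)\ge R^{\mathrm{iid}}_{\one/d}(\rho)=\frac1n\tr\Sigma(\rho)^{-1}\ge\frac{(d-1)^2}{n\tr\Sigma(\rho)}$ using the harmonic--arithmetic mean inequality.

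Two minor points. The vertex-averaging detour is unnecessary and can simply be deleted. In the $\tr\Sigma(\rho)=0$ case, it is the right-hand side of the inequality, not the left, that you are showing equals $+\infty$.
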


\begin{proof}
If $\tr\Sigma(\rho)=0$, then $\rho=\delta_0$, so $\bar\rho=\delta_0$ as well. By \cref{rem:canonical-convention}, both sides of the inequality are $+\infty$, and the stated formula is interpreted in the same way. Hence assume $\tr\Sigma(\rho)>0$.

By \cref{prop:iid-risk}, exchangeability of $\bar\rho$, and \cref{thm:symmetrization}(iii),
\[
\sup_{\theta\in\Delta_d}R^{\mathrm{iid}}_\theta(\bar\rho)
=
\frac1n\tr\Sigma(\bar\rho)^{-1}
=
\frac{(d-1)^2}{n\,\tr\Sigma(\rho)}.
\]
If $\Sigma(\rho)$ is singular, then \cref{rem:canonical-convention} gives
\[
\sup_{\theta\in\Delta_d}R^{\mathrm{iid}}_\theta(\rho)=+\infty,
\]
so the inequality is automatic. Otherwise, if $\lambda_1,\dots,\lambda_{d-1}$ are the eigenvalues of $\Sigma(\rho)$, the harmonic-arithmetic mean inequality yields
\[
\frac{(d-1)^2}{\tr\Sigma(\rho)}\le \tr\Sigma(\rho)^{-1}.
\]
Now \cref{prop:iid-risk} gives
\[
R^{\mathrm{iid}}_{\one/d}(\rho)=\frac1n\tr\Sigma(\rho)^{-1}\le \sup_{\theta\in\Delta_d}R^{\mathrm{iid}}_\theta(\rho),
\]
which proves the claim.
\end{proof}

\section{Exchangeable laws, trace caps, and augmented randomized response}\label{sec:tracecap}

\subsection{Template formulas for exchangeable laws}

\begin{lemma}[Signal and pairwise budgets in template coordinates]\label{lem:template-formulas}
Let $\rho\in\fM_d$ and let $\nu:=a_\#\rho\in\cP(\cK_d)$ be its template law. For $a\in\cK_d$ write
\[
A(a):=\sum_{i=1}^d \frac1{a_i}\in[ d,\infty],
\qquad
B(a):=\sum_{i=1}^d a_i^2\in[d,d^2],
\]
with the convention $A(a)=\infty$ if some $a_i=0$, and set
\[
c(a):=\frac{A(a)B(a)-d^2}{d(d-1)}.
\]
Then:
\begin{enumerate}[label=\textup{(\roman*)}]
\item
\[
\tr\Sigma(\rho)=\int_{\cK_d} (B(a)-d)\,\nu(da).
\]
\item For each ordered pair $i\neq j$,
\[
\chi_{ij}(\rho):=\int_{\cK_d}\frac{(a_j-a_i)^2}{a_i}\,\nu(da)
\]
is the directed pairwise $\chi^2$ divergence from row $i$ to row $j$ (with the same extended-real convention $0/0:=0$, $c/0:=+\infty$ for $c>0$), and
\[
\frac1{d(d-1)}\sum_{i\neq j}\chi_{ij}(\rho)=\int_{\cK_d} c(a)\,\nu(da).
\]
\item If $\rho$ is exchangeable, then every ordered pair has the same value and hence
\[
\chi_*(\rho)=\int_{\cK_d} c(a)\,\nu(da).
\]
\end{enumerate}
\end{lemma}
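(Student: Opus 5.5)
The plan is to pass everything through the affine identification of \cref{lem:polytope-template}, under which $a_i=1+\gamma_i^\top x$, $\sum_i a_i=d$ and $\|x\|_2^2=B(a)-d$.

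\emph{Part (i).} Since $\tr\Sigma(\rho)=\int\|x\|_2^2\,\rho(dx)$, pushing forward by $a$ and substituting $\|x\|_2^2=B(a)-d$ gives the formula at once.

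\emph{Part (ii), pointwise identity.} First note that $(\gamma_j-\gamma_i)^\top x=a_j-a_i$ and $1+\gamma_i^\top x=a_i$. Hence the integrand $c_{ij}$ of \cref{def:budget} becomes $(a_j-a_i)^2/a_i$, with the extended-real convention carried over verbatim. Its interpretation as a directed $\chi^2$ divergence follows from \cref{thm:representation}: $W_\rho(dx\mid k)=a_k\,\rho(dx)$, so
\[
\int\frac{(a_j-a_i)^2}{a_i}\,d\rho=\int\Bigl(\frac{a_j}{a_i}-1\Bigr)^2 a_i\,d\rho=\chi^2\bigl(W_\rho(\cdot\mid j)\,\|\,W_\rho(\cdot\mid i)\bigr).
\]
Mass where $a_i=0<a_j$ gives $+\infty$, which is the correct $\chi^2$ value when absolute continuity fails.

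\emph{Part (ii), averaging identity.} For fixed $a$ with all $a_i>0$, I will expand
\[
\sum_{i\neq j}\frac{(a_j-a_i)^2}{a_i}=\sum_{i,j}\Bigl(\frac{a_j^2}{a_i}-2a_j+a_i\Bigr)=A(a)B(a)-2d^2+d^2=A(a)B(a)-d^2,
\]
using $\sum_j a_j=d$. The diagonal terms vanish, so summing over all $(i,j)$ is harmless. Dividing by $d(d-1)$ gives $c(a)$, and integrating against $\nu$ gives the average identity.

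\emph{Boundary points.} The main care is at boundary points where some $a_i=0$, where one must check that both sides agree as extended reals. On the left, the terms with $a_i=0$ contribute $+\infty$ unless every $a_j=0$. That cannot happen, because $\sum_j a_j=d$ forces some $a_j>0$. So the left side is $+\infty$. On the right, $A=\infty$ and $B\ge d>0$, so $c=+\infty$ as well. Since all integrands are nonnegative, Tonelli justifies exchanging the finite sum with the integral.

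\emph{Part (iii).} Exchangeability means $(P_\pi)_\#\rho=\rho$. By the relabelling identity $c_{ij}(P_\pi x)=c_{\pi^{-1}i,\pi^{-1}j}(x)$ from the proof of \cref{thm:symmetrization}(i), we get $\chi_{ij}(\rho)=\chi_{\pi^{-1}i,\pi^{-1}j}(\rho)$. Because $S_d$ acts transitively on ordered pairs of distinct indices, all $\chi_{ij}$ coincide. Their maximum $\chi_*(\rho)$ therefore equals their average, which is $\int c\,d\nu$ by (ii).
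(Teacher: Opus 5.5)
Your proposal is correct and follows the paper's proof essentially step for step. Part (i) uses $\|x\|_2^2=B(a)-d$ from \cref{lem:polytope-template}. Part (ii) rests on the pointwise identity $\sum_{i\neq j}(a_j-a_i)^2/a_i=A(a)B(a)-d^2$; you expand the full double sum at once, whereas the paper first sums over $j$ for fixed $i$. Part (iii) uses permutation invariance to make all ordered-pair budgets equal, so the maximum equals the average. Your explicit handling of boundary templates with some $a_i=0$, the Tonelli interchange, and the relabelling identity $c_{ij}(P_\pi x)=c_{\pi^{-1}i,\pi^{-1}j}(x)$ make rigorous steps that the paper leaves implicit.
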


\begin{proof}
(i) By \cref{lem:polytope-template},
\[
\tr\Sigma(\rho)=\int \|x\|_2^2\,\rho(dx)=\int \|a-\one\|_2^2\,\nu(da)=\int (B(a)-d)\,\nu(da).
\]

(ii) The first identity is simply the definition of the directed $\chi^2$ divergence in template coordinates. Averaging over all ordered pairs,
\[
\frac1{d(d-1)}\sum_{i\neq j}\chi_{ij}(\rho)
=
\int \frac1{d(d-1)}\sum_{i\neq j}\frac{(a_j-a_i)^2}{a_i}\,\nu(da).
\]
Now
\[
\sum_{j\neq i}(a_j-a_i)^2
=\sum_{j=1}^d a_j^2-2a_i\sum_{j=1}^d a_j+d a_i^2
=B(a)-2d a_i+d a_i^2,
\]
so dividing by $a_i$ and summing in $i$ gives
\[
\sum_{i\neq j}\frac{(a_j-a_i)^2}{a_i}
=\sum_{i=1}^d\left(\frac{B(a)}{a_i}-2d+d a_i\right)
=A(a)B(a)-d^2.
\]
This proves the average formula.

(iii) If $\rho$ is exchangeable, then each ordered pair $(i,j)$ is obtained from every other ordered pair by a permutation of coordinates. Hence all directed pairwise $\chi^2$ divergences are equal. Their common value is their average, which is the integral of $c(a)$ from part (ii).
\end{proof}

\begin{lemma}[Pointwise slope inequality]\label{lem:slope}
Assume $d\ge 3$. Let $a\in\cK_d$ and define $A(a),B(a)$ as above. Then
\[
\frac{B(a)-d}{A(a)B(a)-d^2}\le \frac{1}{d+2\sqrt{d-1}},
\]
with the conventions that the left-hand side is $0$ when $a=\one$ (both numerator and denominator vanish) and $0$ whenever some $a_i=0$ (since then $A(a)=\infty$ while $B(a)-d>0$). Equivalently, in cross-multiplied form:
\[
\bigl(B(a)-d\bigr)\bigl(d+2\sqrt{d-1}\bigr)\le A(a)B(a)-d^2
\qquad\text{for all }a\in\cK_d,
\]
where the right-hand side is interpreted in $[0,\infty]$; if some $a_i=0$, then $A(a)=\infty$ and the inequality holds strictly in the extended-real sense. The inequality holds with equality at $a=\one$ (trivially, both sides are zero) and is strict for all $a\notin\{\one\}\cup\{\text{permutations of }a_\star\}$. The nontrivial equality holds if and only if $a$ is a permutation of the singleton GRR template
\[
a_\star:=\left(\frac{d\sqrt{d-1}}{\sqrt{d-1}+d-1},\frac{d}{\sqrt{d-1}+d-1},\dots,\frac{d}{\sqrt{d-1}+d-1}\right).
\]
\end{lemma}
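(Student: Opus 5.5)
The plan is to fix the second moment $B=B(a)$ and minimize $A(a)$, which reduces the inequality to a one-parameter family of two-valued templates. Cases where some $a_i=0$ are trivial, since then $A=\infty$ and $B-d>0$. The case $a=\one$ is also trivial. So assume $a\in\operatorname{int}\cK_d$ with $a\neq\one$, and write $c_d:=d+2\sqrt{d-1}$. Since $B>0$, the claim is equivalent to
\[
A(a)\;\ge\;\frac{d^2+c_d\,(B(a)-d)}{B(a)}=:g(B(a)).
\]
For fixed $B\in(d,d^2)$, the level set $\{a\in\cK_d:\ \sum_i a_i^2=B\}$ is compact. $A$ is continuous on its interior part and blows up at the boundary. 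Hence $A$ attains a minimum at an interior point, and it suffices to verify the inequality at every such minimizer.

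\emph{Step 1 (two-value reduction).} At an interior minimizer, the Lagrange conditions for the constraints $\sum a_i=d$ and $\sum a_i^2=B$ read
\[
-\frac1{a_i^2}=\lambda+2\mu a_i\qquad\text{for all } i.
\]
The gradients of the two constraints are independent because $a\neq\one$. On $(0,\infty)$, the function $t\mapsto -1/t^2$ is strictly concave and the right-hand side is affine, so they meet in at most two points. Hence every minimizer takes at most two distinct values. Up to permutation it has $k$ coordinates equal to $s$ and $d-k$ coordinates equal to $t$, with $1\le k\le d-1$, $ks+(d-k)t=d$, and $s\neq t$.

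\emph{Step 2 (explicit two-value inequality).} Parametrize by $s$, so that $t=(d-ks)/(d-k)$. Then
\[
A=\frac ks+\frac{d-k}t,\qquad B-d=\frac{kd}{d-k}(s-1)^2.
\]
Clearing denominators turns $AB-d^2-c_d(B-d)\ge 0$ into a polynomial inequality in $s$. This polynomial has a double root at $s=1$. After dividing by $(s-1)^2$, what remains is a quadratic in $s$ (possibly of degree two after clearing $st$), with coefficients depending on $k$ and $d$. I will compute the minimum over $s$ of this residual in closed form, e.g.\ via a discriminant or AM--GM of the shape $\alpha s+\beta/s\ge 2\sqrt{\alpha\beta}$. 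This gives an explicit threshold $\kappa(k,d)$ such that the inequality holds with constant $\kappa(k,d)$ in place of $c_d$.

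\emph{Step 3 (optimization over $k$).} I expect $\kappa(k,d)$ to take the form $d+2\sqrt{k(d-k)}/\ldots$, or a symmetric variant in $k\leftrightarrow d-k$ after the substitution $s\leftrightarrow t$. Then I would show $\kappa(k,d)\ge\kappa(1,d)=c_d$ for all $1\le k\le d-1$, with equality only at $k=1$; the case $k=d-1$ should reduce to $k=1$ with the roles of $s$ and $t$ swapped. The hypothesis $d\ge3$ enters here: for $d=2$ the two cases $k=1$ and $k=d-1$ coincide and the bound degenerates. Finally, in the case $k=1$, the AM--GM equality condition pins down $s=\sqrt{d-1}\,t$. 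Combined with $s+(d-1)t=d$ this gives exactly $a_\star$. Tracing equality back through Step 1 (the minimizer must be two-valued) yields the characterization: equality holds iff $a=\one$ or $a$ is a permutation of $a_\star$.

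The main obstacle is Step 2/3: carrying out the algebra cleanly enough to identify $\kappa(k,d)$ and to prove that $k=1$ is the unique worst case. The Lagrange reduction itself is routine. If the closed form turns out to be messy, my fallback is to treat $k$ as a continuous variable and show that $\kappa(k,d)$ is increasing on $[1,d/2]$ and symmetric under $k\leftrightarrow d-k$.
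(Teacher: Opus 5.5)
Your proposal follows the paper's proof: fix $B$, use the Lagrange conditions to show that every minimizer of $A$ on the level set takes at most two values, then optimize the two-level template over the ratio and over the count, and recover the equality cases from the fact that any equality point must itself be a minimizer. The computation you defer closes as you expect. For two-level templates one has $(AB-d^2)/(B-d)=d(1/s+1/t)$, and Cauchy--Schwarz against $ks+(d-k)t=d$ gives $\kappa(k,d)=d+2\sqrt{k(d-k)}$, attained iff $s/t=\sqrt{(d-k)/k}\neq 1$; this is exactly the paper's $1/\max_{\lambda>1}r_s(\lambda)$, which the paper obtains by calculus rather than AM--GM.
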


\begin{proof}
If some $a_i=0$, then $A(a)=\infty$ and, since $\sum_i a_i=d$, one has $B(a)>d$; thus the stated convention indeed gives the left-hand side as $0$, and the inequality is trivial.
If $a=\one$, then $B(a)=d$ and both sides of the cross-multiplied inequality are $0$, so the claim is immediate.
Hence it remains to treat the case $a_i>0$ for all $i$ and $a\neq\one$; equivalently, $B\in(d,d^2)$.
Set
\[
\cF_B:=\Bigl\{a\in(0,\infty)^d:\sum_{i=1}^d a_i=d,\ \sum_{i=1}^d a_i^2=B\Bigr\},
\qquad B\in[d,d^2).
\]
For fixed $B$, maximizing
\[
\frac{B-d}{A(a)B-d^2}
\]
is equivalent to minimizing $A(a)=\sum_i a_i^{-1}$ over $\cF_B$.
Because $A(a)\to \infty$ when any coordinate tends to $0$, the minimum is attained in the interior of $\cF_B$. Introduce the Lagrangian
\[
L(a;\mu_0,\nu_0)=\sum_{i=1}^d a_i^{-1}+\mu_0\Bigl(\sum_i a_i-d\Bigr)+\nu_0\Bigl(\sum_i a_i^2-B\Bigr).
\]
At a minimizer,
\[
-\frac{1}{a_i^2}+\mu_0+2\nu_0 a_i=0
\qquad (i=1,\dots,d).
\]
Thus each coordinate is a positive root of
\[
g(x):=-x^{-2}+\mu_0+2\nu_0 x.
\]
If $\nu_0\ge 0$, then $g'(x)=2x^{-3}+2\nu_0>0$, so $g$ has at most one positive root. If $\nu_0<0$, then $g'$ is strictly decreasing and can vanish at most once, so $g$ has at most two positive zeros. Hence every minimizer has at most two distinct positive coordinates.

Therefore any extremizer can be written as
\[
a=(\underbrace{\alpha,\dots,\alpha}_{s\text{ times}},\underbrace{\beta,\dots,\beta}_{d-s\text{ times}})
\]
with $1\le s\le d-1$ and $\alpha>\beta>0$. Write $\lambda:=\alpha/\beta>1$. From the sum constraint,
\[
\alpha=\frac{d\lambda}{d+s(\lambda-1)},
\qquad
\beta=\frac{d}{d+s(\lambda-1)}.
\]
A direct substitution gives
\[
B-d=\frac{d\,s(d-s)(\lambda-1)^2}{(d+s(\lambda-1))^2}
\]
and
\[
AB-d^2=\frac{d\,s(d-s)(\lambda-1)^2(\lambda+1)}{\lambda\,(d+s(\lambda-1))}.
\]
Therefore the slope reduces to
\[
r_s(\lambda):=\frac{B-d}{AB-d^2}=\frac{\lambda}{(\lambda+1)(d+s(\lambda-1))}.
\]
Differentiating,
\[
r_s'(\lambda)=\frac{d-s-s\lambda^2}{(\lambda+1)^2(d+s(\lambda-1))^2}.
\]
If $1\le s<d/2$, the unique critical point is
\[
\lambda_s^\star=\sqrt{\frac{d-s}{s}}>1,
\]
and at this point
\[
\max_{\lambda>1} r_s(\lambda)=\frac{1}{d+2\sqrt{s(d-s)}}.
\]
If $s=d/2$ (necessarily with $d$ even), then $r_s'(\lambda)\le 0$ for every $\lambda\ge 1$ and
\[
\sup_{\lambda>1} r_s(\lambda)=\lim_{\lambda\downarrow 1} r_s(\lambda)=\frac{1}{2d}
<
\frac{1}{d+2\sqrt{d-1}}
\qquad (d\ge 3).
\]
If $s>d/2$, then again $d-s-s\lambda^2<0$ for every $\lambda\ge 1$, so $r_s$ is strictly decreasing on $[1,\infty)$ and
\[
\sup_{\lambda>1} r_s(\lambda)=\lim_{\lambda\downarrow 1} r_s(\lambda)=\frac{1}{2d}
<
\frac{1}{d+2\sqrt{d-1}}.
\]
Since $\sqrt{s(d-s)}$ is minimized over $1\le s<d/2$ exactly at $s=1$, the overall maximum is
\[
\frac{1}{d+2\sqrt{d-1}},
\]
attained exactly when $s=1$ and
\[
\lambda=\sqrt{d-1}.
\]
Equivalently, one coordinate takes the high value and the remaining $d-1$ coordinates take the low value, up to permutation of coordinate labels. The corresponding template is precisely $a_\star$ up to permutation.
\end{proof}

\subsection{Trace cap and explicit optimizer}

\begin{definition}[Augmented randomized response]\label{def:aug-grr}
For $\lambda>1$ and $p\in[0,1]$, define
\[
t_\lambda:=\frac{d(\lambda-1)}{d+\lambda-1}
\]
and the anchored law
\[
\rho_{p,\lambda}:=(1-p)\,\delta_0+\frac{p}{d}\sum_{i=1}^d \delta_{t_\lambda\gamma_i}.
\]
The corresponding channel $W_{\rho_{p,\lambda}}$ is the \emph{augmented randomized response} mechanism with activation probability $p$ and aggressive parameter $\lambda$.
\end{definition}

\begin{lemma}[Explicit formulas for augmented randomized response]\label{lem:aug-formulas}
For $\rho_{p,\lambda}$ as above,
\[
\Sigma(\rho_{p,\lambda})=\frac{p d(\lambda-1)^2}{(d+\lambda-1)^2}I_{d-1},
\]
\[
\chi_*(\rho_{p,\lambda})=p\,C_1(\lambda),
\qquad
C_1(\lambda):=\frac{(\lambda-1)^2(\lambda+1)}{\lambda(d+\lambda-1)}.
\]
In particular, for
\[
\lambda_*=\sqrt{d-1},
\qquad
C_*(d):=C_1(\lambda_*),
\]
one has
\[
\tr\Sigma(\rho_{p,\lambda_*})=\frac{d(d-1)}{d+2\sqrt{d-1}}\,p\,C_*(d).
\]
\end{lemma}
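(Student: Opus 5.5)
The plan is to compute everything in template coordinates and reuse the algebra already carried out in the proof of \cref{lem:slope} for two-level templates with $s=1$. First, $\rho_{p,\lambda}$ is a probability measure supported in $\cX_d$ with mean zero, since $\sum_i\gamma_i=0$. It is exchangeable, because $P_\pi$ permutes the vertices $\gamma_i$ and fixes $0$. I take $\lambda>1$ throughout; for $\lambda_*=\sqrt{d-1}$ this means $d\ge 3$, and for $d=2$ every quantity vanishes.

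\textbf{Covariance.} Only the atoms $t_\lambda\gamma_i$ contribute, so
\[
\Sigma(\rho_{p,\lambda})=\frac{p\,t_\lambda^2}{d}\sum_{i=1}^d\gamma_i\gamma_i^\top .
\]
Next I use $\sum_i\gamma_i\gamma_i^\top=H^\top\bigl(\sum_i e_ie_i^\top\bigr)H=H^\top H=I_{d-1}$. Inserting $t_\lambda^2/d=d(\lambda-1)^2/(d+\lambda-1)^2$ gives the first formula.

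\textbf{Budget.} I will compute the template of each atom via \cref{lem:polytope-template}, using $\gamma_i^\top\gamma_k=\delta_{ik}-1/d$. At $x=t_\lambda\gamma_k$ this gives
\[
a_k=1+t_\lambda\tfrac{d-1}{d}=\frac{d\lambda}{d+\lambda-1},
\qquad
a_i=1-\tfrac{t_\lambda}{d}=\frac{d}{d+\lambda-1}\quad(i\ne k).
\]
This is exactly the two-level template of \cref{lem:slope} with $s=1$ and ratio $\lambda$. The atom at $0$ has template $\one$, where $c(\one)=0$. Since $\rho_{p,\lambda}$ is exchangeable, \cref{lem:template-formulas}(iii) gives $\chi_*(\rho_{p,\lambda})=p\,c(a)$ for this template. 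The substitution already displayed in the proof of \cref{lem:slope} gives
\[
AB-d^2=\frac{d(d-1)(\lambda-1)^2(\lambda+1)}{\lambda(d+\lambda-1)}.
\]
That substitution used only the sum constraint, so it remains valid here. Dividing by $d(d-1)$ yields $C_1(\lambda)$.

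\textbf{Trace identity.} By \cref{lem:template-formulas}(i), $\tr\Sigma=p(B-d)$, and this is consistent with $(d-1)$ times the scalar found above. Hence
\[
\tr\Sigma=p\,d(d-1)\,c(a)\,r_1(\lambda),
\qquad
r_1(\lambda)=\frac{\lambda}{(\lambda+1)(d+\lambda-1)}.
\]
At $\lambda_*=\sqrt{d-1}$ one has $d+\lambda_*-1=\lambda_*(\lambda_*+1)$. Therefore $r_1(\lambda_*)=(\lambda_*+1)^{-2}=(d+2\sqrt{d-1})^{-1}$, which gives the stated identity.

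I expect no real obstacle here. The only delicate points are checking that the template algebra from \cref{lem:slope} applies verbatim, since it depended only on the sum constraint, and justifying exchangeability so that the maximum defining $\chi_*$ collapses to the pair average.
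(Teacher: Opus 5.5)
Your proposal is correct and follows essentially the same route as the paper: the covariance is obtained via $\sum_i\gamma_i\gamma_i^\top=I_{d-1}$, and the atoms $t_\lambda\gamma_k$ are identified as the $s=1$ two-level templates so that the budget comes from the $AB-d^2$ computation in \cref{lem:slope}. The final identity is the same ratio evaluation at $\lambda_*=\sqrt{d-1}$, using $d+\lambda_*-1=\lambda_*(\lambda_*+1)$. Your explicit appeal to exchangeability and \cref{lem:template-formulas}(iii), which collapses the maximum defining $\chi_*$ to the pair average, spells out a step the paper leaves implicit.
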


\begin{proof}
Since $\sum_i \gamma_i\gamma_i^\top=I_{d-1}$,
\[
\Sigma(\rho_{p,\lambda})=\frac{p}{d}\sum_{i=1}^d t_\lambda^2\gamma_i\gamma_i^\top=\frac{p t_\lambda^2}{d}I_{d-1},
\]
which is the first formula.

The template corresponding to $t_\lambda\gamma_i$ has one coordinate
\[
\alpha=1+t_\lambda\Bigl(1-\frac1d\Bigr)=\frac{d\lambda}{d+\lambda-1}
\]
and $d-1$ coordinates
\[
\beta=1-\frac{t_\lambda}{d}=\frac{d}{d+\lambda-1}.
\]
This is exactly the singleton two-level template with ratio $\lambda$, so the budget formula is the $s=1$ case of the computation in the proof of \cref{lem:slope}:
\[
\chi_*(\rho_{p,\lambda})=p\,\frac{(\lambda-1)^2(\lambda+1)}{\lambda(d+\lambda-1)}.
\]
Finally,
\[
\frac{\tr\Sigma(\rho_{p,\lambda})}{\chi_*(\rho_{p,\lambda})}=\frac{d(d-1)\lambda}{(\lambda+1)(d+\lambda-1)}.
\]
At $\lambda=\sqrt{d-1}$ this ratio equals $d(d-1)/(d+2\sqrt{d-1})$.
\end{proof}

\begin{theorem}[Exact trace cap in the low-budget regime]\label{thm:trace-cap}
Assume $d\ge 3$. Then every anchored law $\rho\in\fM_d$ satisfies
\[
\tr\Sigma(\rho)\le \frac{d(d-1)}{d+2\sqrt{d-1}}\,\chi_*(\rho).
\]
Consequently,
\[
\sup_{\rho:\,\chi_*(\rho)\le C}\tr\Sigma(\rho)\le \frac{d(d-1)}{d+2\sqrt{d-1}}\,C
\qquad \text{for all }C\ge 0.
\]
If $0\le C\le C_*(d)$, equality is attained by augmented randomized response with
\[
\lambda_*=\sqrt{d-1},\qquad p=C/C_*(d),
\]
and the optimizing anchored law is exactly $\rho_{p,\lambda_*}$; equivalently, the corresponding channel is unique up to conditionally identical refinements.
\end{theorem}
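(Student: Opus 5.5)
The plan is to reduce to exchangeable laws with the global symmetrization theorem, and then integrate the pointwise slope inequality of \cref{lem:slope} against the template law. Set $\kappa_d:=d(d-1)/(d+2\sqrt{d-1})$. If $\chi_*(\rho)=+\infty$ there is nothing to prove, so assume $\chi_*(\rho)<\infty$.

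\textbf{Reduction to exchangeable laws.} Let $\bar\rho$ be the symmetrization of $\rho$. By \cref{thm:symmetrization}(iii), $\tr\Sigma(\bar\rho)=\tr\Sigma(\rho)$, and by \cref{thm:symmetrization}(i), $\chi_*(\bar\rho)\le\chi_*(\rho)$. So it suffices to show $\tr\Sigma(\bar\rho)\le\kappa_d\,\chi_*(\bar\rho)$.

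\textbf{Integrating the slope inequality.} Let $\bar\nu:=a_\#\bar\rho$ be the template law of $\bar\rho$. By \cref{lem:template-formulas}(i),(iii),
\[
\tr\Sigma(\bar\rho)=\int (B(a)-d)\,\bar\nu(da),\qquad \chi_*(\bar\rho)=\frac{1}{d(d-1)}\int (A(a)B(a)-d^2)\,\bar\nu(da).
\]
The cross-multiplied form of \cref{lem:slope} gives, pointwise on $\cK_d$,
\[
B(a)-d\le\frac{A(a)B(a)-d^2}{d+2\sqrt{d-1}},
\]
with the extended-real convention. Both integrands are nonnegative, so integrating is legitimate and yields $\tr\Sigma(\bar\rho)\le\kappa_d\,\chi_*(\bar\rho)$. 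The first claim of the theorem follows, and the budget-constrained supremum bound is an immediate consequence.

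\textbf{Attainment.} For $0\le C\le C_*(d)$, put $p=C/C_*(d)\in[0,1]$. Then $\rho_{p,\lambda_*}$ is a genuine anchored law. By \cref{lem:aug-formulas}, $\chi_*(\rho_{p,\lambda_*})=pC_*(d)=C$ and $\tr\Sigma(\rho_{p,\lambda_*})=\kappa_d\,C$, so the bound is attained. This is exactly where $C\le C_*(d)$ is needed, since $p$ cannot exceed $1$.

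\textbf{Uniqueness (the main obstacle).} Suppose $\chi_*(\rho)\le C$ and $\tr\Sigma(\rho)=\kappa_d C$. For $C=0$ we get $\tr\Sigma(\rho)=0$, so $\rho=\delta_0=\rho_{0,\lambda_*}$. For $C>0$, every inequality in the chain
\[
\tr\Sigma(\rho)=\tr\Sigma(\bar\rho)\le\kappa_d\,\chi_*(\bar\rho)\le\kappa_d\,\chi_*(\rho)\le\kappa_d\,C
\]
must be an equality. Since $\chi_*(\bar\rho)=C<\infty$, the integrand $A(a)B(a)-d^2$ is finite $\bar\nu$-a.e., so $A(a)<\infty$ and all $a_i>0$ $\bar\nu$-a.e. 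Equality of the integrals of two finite, pointwise-ordered integrands then forces pointwise equality $\bar\nu$-a.e. By the equality clause of \cref{lem:slope}, $\bar\nu$ is concentrated on the finite set $\{\one\}\cup\{\text{permutations of }a_\star\}$.

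Since $\bar\nu$ is the average of the permuted copies of $\nu:=a_\#\rho$, and this set is permutation-invariant, $\nu$ is concentrated on the same set. Translating back to anchored coordinates, $\rho=w_0\delta_0+\sum_i w_i\delta_{t_{\lambda_*}\gamma_i}$ for some nonnegative weights. The mean-zero condition gives $\sum_i w_i\gamma_i=0$. The $\gamma_i$ are affinely independent with the single linear relation $\sum_i\gamma_i=0$, so all $w_i$ are equal, say $w_i=p/d$. Hence $\rho=\rho_{p,\lambda_*}$ for some $p\in[0,1]$, and the trace formula of \cref{lem:aug-formulas} pins $p=C/C_*(d)$. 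Uniqueness of the channel up to conditionally identical refinements then follows from the bijection in \cref{thm:representation}(iii).

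The delicate points are the measure-theoretic equality case with extended-real integrands, which the finiteness of $\chi_*$ resolves, and the passage from the support of $\bar\nu$ back to the support of $\nu$.
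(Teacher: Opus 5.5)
Your proof is correct and follows the paper's route: integrate the cross-multiplied form of \cref{lem:slope} against the template law, compare the pair-averaged $\chi^2$ with $\chi_*$, and then read off uniqueness from the equality case. The differences are minor. You get the average-versus-maximum step through \cref{thm:symmetrization} instead of directly from \cref{lem:template-formulas}(ii). In the uniqueness step you fix the weights on the singleton templates by the mean-zero constraint ($\ker H^\top=\operatorname{span}(\one)$), whereas the paper uses equality of all ordered-pair budgets; your version is slightly cleaner and shows that the averaging-equality step is not needed for uniqueness.
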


\begin{proof}
Let $\nu=a_\#\rho$ be the template law of $\rho$. By \cref{lem:template-formulas,lem:slope},
\[
\tr\Sigma(\rho)=\int (B(a)-d)\,\nu(da)
\le \frac{d(d-1)}{d+2\sqrt{d-1}}\int c(a)\,\nu(da).
\]
Using \cref{lem:template-formulas}(ii),
\[
\int c(a)\,\nu(da)
=
\frac1{d(d-1)}\sum_{i\neq j}\chi_{ij}(\rho)
\le
\chi_*(\rho).
\]
Combining the last two displays proves
\[
\tr\Sigma(\rho)\le \frac{d(d-1)}{d+2\sqrt{d-1}}\,\chi_*(\rho).
\]

For attainability in the low-budget regime, take $\rho_{p,\lambda_*}$ with $p=C/C_*(d)$. By \cref{lem:aug-formulas},
\[
\chi_*(\rho_{p,\lambda_*})=C,
\qquad
\tr\Sigma(\rho_{p,\lambda_*})=\frac{d(d-1)}{d+2\sqrt{d-1}}\,C.
\]
So equality is attained.

Now suppose $\rho$ attains equality at some $C\le C_*(d)$. Then equality must hold simultaneously in both inequalities above.

Equality in the pointwise slope inequality forces $\nu$-almost every template $a\neq\one$ to attain equality in \cref{lem:slope}. Therefore the support of $\nu$ is contained in the union of the null template $\one$ and the $d$ singleton templates
\[
a_\star^{(k)},\qquad k=1,\dots,d,
\]
obtained by placing the high coordinate of $a_\star$ in position $k$. Write
\[
\nu=(1-p)\,\delta_{\one}+\sum_{k=1}^d p_k\,\delta_{a_\star^{(k)}},
\qquad
p_k\ge 0,\quad \sum_{k=1}^d p_k=p.
\]

Let $\alpha_\star$ and $\beta_\star$ denote the high and low coordinates of $a_\star$, and set
\[
u:=\frac{(\beta_\star-\alpha_\star)^2}{\alpha_\star},
\qquad
v:=\frac{(\alpha_\star-\beta_\star)^2}{\beta_\star}.
\]
For an ordered pair $i\neq j$, only the singleton templates with high coordinate in $i$ or in $j$ contribute to $\chi_{ij}(\rho)$, hence
\[
\chi_{ij}(\rho)=u\,p_i+v\,p_j.
\]
Equality in the averaging step
\[
\frac1{d(d-1)}\sum_{i\neq j}\chi_{ij}(\rho)\le \chi_*(\rho)
\]
implies that all ordered pair budgets are equal to the common maximum $\chi_*(\rho)$; otherwise their average would be strictly smaller than their maximum. Therefore
\[
u\,p_i+v\,p_j=\chi_*(\rho)\qquad \forall\, i\neq j.
\]
Fixing $j$ and comparing two different indices $i$ and $i'$ gives $u p_i=u p_{i'}$, so all $p_i$ are equal. Hence
\[
p_i=\frac{p}{d}\qquad (i=1,\dots,d),
\]
and therefore
\[
\nu=(1-p)\,\delta_{\one}+\frac{p}{d}\sum_{k=1}^d \delta_{a_\star^{(k)}}.
\]
This is exactly the template law of augmented randomized response. Passing back through the affine bijection of \cref{lem:polytope-template} gives
\[
\rho=\rho_{p,\lambda_*}.
\]
By \cref{thm:representation}(ii)--(iii), the corresponding channel is therefore unique up to conditionally identical refinements.
\end{proof}

\subsection{Exact finite-\texorpdfstring{$n$}{n} canonical optimality under the \texorpdfstring{$\chi_*$}{chi*}-budget}

\begin{theorem}[Exact finite-$n$ canonical optimality under the $\chi_*$-budget]\label{thm:finite-canonical-chi}
Assume $d\ge 3$ and $0<C\le C_*(d)$. Then
\[
\inf_{\rho:\,\chi_*(\rho)\le C}\ \sup_{\theta\in\Delta_d}R^{\mathrm{iid}}_\theta(\rho)
=
\frac{(d-1)(d+2\sqrt{d-1})}{n d\,C},
\]
and
\[
\inf_{\rho:\,\chi_*(\rho)\le C}\ \sup_{\theta\in\Delta_d}R^{\mathrm{fc}}_\theta(\rho)
=
\frac1n\left[
\frac{(d-1)(d+2\sqrt{d-1})}{d\,C}
-\frac{d-1}{d}
\right].
\]
Both values are attained by augmented randomized response with
\[
\lambda_*=\sqrt{d-1},
\qquad
p=C/C_*(d).
\]
Moreover, the optimizing anchored law is exactly $\rho_{p,\lambda_*}$; equivalently, the corresponding channel is unique up to conditionally identical refinements.
\end{theorem}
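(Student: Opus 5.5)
The plan is to combine the symmetrization results of Section~\ref{sec:symmetrization} with the trace cap of \cref{thm:trace-cap}. First I will handle the lower bound for the i.i.d.\ risk. Fix $\rho$ with $\chi_*(\rho)\le C$ and invertible $\Sigma(\rho)$; singular laws give $+\infty$ by \cref{rem:canonical-convention} and can be ignored. By \cref{cor:iid-symm},
\[
\sup_\theta R^{\mathrm{iid}}_\theta(\rho)\ \ge\ \sup_\theta R^{\mathrm{iid}}_\theta(\bar\rho)=\frac{(d-1)^2}{n\,\tr\Sigma(\rho)}.
\]
\cref{thm:trace-cap} then bounds $\tr\Sigma(\rho)\le \frac{d(d-1)}{d+2\sqrt{d-1}}\,C$. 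Substituting gives exactly $\frac{(d-1)(d+2\sqrt{d-1})}{ndC}$. For the fixed-composition risk I will argue the same way using \cref{thm:symmetrization}(ii),(iii): the supremum for $\rho$ is at least $\frac1n[\tr\Sigma(\bar\rho)^{-1}-\frac{d-1}{d}]$, and $\tr\Sigma(\bar\rho)^{-1}=(d-1)^2/\tr\Sigma(\rho)$. The trace cap then yields the stated value.

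Next I will show attainment. Take $p=C/C_*(d)\in(0,1]$, which is admissible because $C\le C_*(d)$. By \cref{lem:aug-formulas}, $\chi_*(\rho_{p,\lambda_*})=C$ and $\Sigma(\rho_{p,\lambda_*})$ is isotropic with trace equal to the cap. The law $\rho_{p,\lambda_*}$ is exchangeable, so $\bar\rho=\rho$. Then \cref{prop:iid-risk} gives $\sup_\theta R^{\mathrm{iid}}_\theta=\frac1n\tr\Sigma^{-1}=\frac{(d-1)^2}{n\tr\Sigma}$, which equals the lower bound. For the fixed-composition risk, exchangeability makes all vertex risks equal, so \cref{prop:fixed-risk}(ii),(iii) give the supremum as $\frac1n[\tr\Sigma^{-1}-\frac{d-1}{d}]$, which again matches.

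The last step is uniqueness, and I expect it to be the main obstacle. Suppose $\rho$ is optimal for either objective. Every inequality in the chain must then be an equality, and in particular $\tr\Sigma(\rho)=\frac{d(d-1)}{d+2\sqrt{d-1}}C$. I will deduce $\chi_*(\rho)=C$ from this, since the trace cap with $\chi_*(\rho)<C$ would give a strictly smaller trace. So $\rho$ attains equality in the trace cap at budget $C\le C_*(d)$. The uniqueness clause of \cref{thm:trace-cap} then forces $\rho=\rho_{p,\lambda_*}$. The point needing care is that the chain is tight only if every link is tight: the harmonic-arithmetic step, the step from the vertex average or uniform point to the supremum, and the trace cap. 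I will only use the tightness of the trace cap. For that I must check that no non-isotropic $\rho$ can have a smaller risk than $\bar\rho$ while still reaching the bound; this is excluded because the bound is a chain of inequalities starting from $\rho$ itself. Equivalence of channels up to conditionally identical refinements then follows from \cref{thm:representation}.
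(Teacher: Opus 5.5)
Your proposal is correct and follows essentially the paper's route. Symmetrization (\cref{cor:iid-symm}, \cref{thm:symmetrization}) reduces both objectives to the isotropic expression in $(d-1)^2/(n\tr\Sigma(\rho))$, the trace cap gives the lower bound, $\rho_{p,\lambda_*}$ attains it, and uniqueness follows from saturating the trace cap and invoking its uniqueness clause; your uniqueness step reads $\tr\Sigma(\rho)=K_dC$ directly off the chain, whereas the paper first identifies $\bar\rho=\rho_{p,\lambda_*}$ and then transfers to $\rho$, but the ingredients are the same.
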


\begin{proof}
For the i.i.d.\ risk, \cref{cor:iid-symm} shows that symmetrization cannot increase the objective. Hence
\[
\inf_{\rho:\,\chi_*(\rho)\le C}\ \sup_{\theta}R^{\mathrm{iid}}_\theta(\rho)
=
\inf_{\rho=\bar\rho:\,\chi_*(\rho)\le C}\ \sup_{\theta}R^{\mathrm{iid}}_\theta(\rho).
\]
If $\rho$ is exchangeable, then by \cref{prop:iid-risk} and \cref{thm:symmetrization}(iii),
\[
\sup_{\theta\in\Delta_d}R^{\mathrm{iid}}_\theta(\rho)
=
\frac1n\tr\Sigma(\rho)^{-1}
=
\frac{(d-1)^2}{n\,\tr\Sigma(\rho)}.
\]
Applying \cref{thm:trace-cap} gives
\[
\sup_{\theta}R^{\mathrm{iid}}_\theta(\rho)\ge
\frac{(d-1)^2}{n\,\frac{d(d-1)}{d+2\sqrt{d-1}}\,C}
=
\frac{(d-1)(d+2\sqrt{d-1})}{n d\,C}.
\]
For augmented randomized response with $p=C/C_*(d)$ and $\lambda=\lambda_*$, \cref{lem:aug-formulas} gives
\[
\Sigma=\frac{dC}{d+2\sqrt{d-1}}\,I_{d-1},
\]
hence equality holds. If a law $\rho$ (not necessarily exchangeable) attains the infimum, then $\bar\rho$ also attains it, since symmetrization does not increase the risk. Hence $\bar\rho$ is an exchangeable optimizer, which forces $\bar\rho=\rho_{p,\lambda_*}$ by the exchangeable uniqueness just established. Now $\tr\Sigma(\rho)=\tr\Sigma(\bar\rho)=K_d\,C$, where $K_d:=d(d-1)/(d+2\sqrt{d-1})$, and $\chi_*(\rho)\le C$. The trace-cap inequality (\cref{thm:trace-cap}) gives $K_d\,C=\tr\Sigma(\rho)\le K_d\,\chi_*(\rho)$, so $\chi_*(\rho)=C$ and the trace cap is saturated. By the uniqueness clause of \cref{thm:trace-cap}, $\rho=\rho_{p,\lambda_*}$.

For the fixed-composition risk, \cref{thm:symmetrization} already yields the corresponding symmetrization inequality. If $\rho$ is exchangeable, then \cref{prop:fixed-risk} and \cref{thm:symmetrization}(iii) give
\[
\sup_{\theta\in\Delta_d}R^{\mathrm{fc}}_\theta(\rho)
=
\frac1n\left[\frac{(d-1)^2}{\tr\Sigma(\rho)}-\frac{d-1}{d}\right].
\]
Applying the same trace-cap bound yields
\[
\sup_{\theta}R^{\mathrm{fc}}_\theta(\rho)\ge
\frac1n\left[
\frac{(d-1)(d+2\sqrt{d-1})}{d\,C}
-\frac{d-1}{d}
\right],
\]
and augmented randomized response attains equality by the same explicit covariance formula. Uniqueness of the optimizer follows by the same symmetrization-then-trace-cap argument as in the i.i.d.\ case.
\end{proof}

\subsection{Two-orbit reduction of the global \texorpdfstring{$\chi_*$}{chi*} frontier}

\begin{definition}[Orbit laws and the frontier curve]\label{def:orbit-frontier}
Let
\[
\Omega_d:=\{a\in\cK_d:\ a_1\ge a_2\ge \cdots \ge a_d,\ a_d>0\}.
\]
For $a\in\Omega_d$, let $\rho_a$ denote the exchangeable anchored law obtained by uniformly distributing mass on the permutation orbit of $x(a)=H^\top(a-\one)$. Define
\[
S(a):=B(a)-d,
\qquad
C(a):=\frac{A(a)B(a)-d^2}{d(d-1)},
\]
and let
\[
\Gamma_d:=\{(C(a),S(a)): a\in\Omega_d\}\subset [0,\infty)\times [0,d(d-1)].
\]
\end{definition}

\begin{lemma}[Planar Carath\'eodory reduction]\label{lem:planar-car}
Every point in the convex hull of a subset of $\R^2$ is a convex combination of at most three points of that subset.
\end{lemma}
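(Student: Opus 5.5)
This is Carathéodory's theorem in dimension two. I will prove it directly by reducing a long convex combination one point at a time, using affine dependence.

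Let $S\subset\R^2$ and $p\in\conv S$. By definition of the convex hull there are finitely many points $s_1,\dots,s_m\in S$ and weights $w_k>0$ with $\sum_k w_k=1$ and
\[
p=\sum_{k=1}^m w_k s_k .
\]
Choose such a representation with $m$ minimal. I claim $m\le 3$.

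Suppose instead that $m\ge 4$. The $m-1\ge 3$ vectors $s_k-s_1$ for $k=2,\dots,m$ lie in $\R^2$, so they are linearly dependent. This gives coefficients $c_1,\dots,c_m$, not all zero, with
\[
\sum_k c_k s_k=0,\qquad \sum_k c_k=0.
\]
Because the coefficients sum to zero and are not all zero, some $c_k$ is strictly positive. Set
\[
t:=\min\{w_k/c_k : c_k>0\},
\]
and define new weights $w_k':=w_k-t c_k$.

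These new weights have three properties:
\begin{itemize}
\item each is nonnegative, by the choice of $t$;
\item they still sum to $1$, since $\sum_k c_k=0$;
\item they still represent $p$, i.e. $\sum_k w_k' s_k=p$, since $\sum_k c_k s_k=0$.
\end{itemize}
Moreover, at least one weight vanishes, namely for any index attaining the minimum defining $t$. Discarding that point gives a representation of $p$ with at most $m-1$ points, contradicting the minimality of $m$. Hence $m\le 3$.

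There is no real obstacle in this argument. The only step needing care is producing the affine dependence, which uses exactly that $m\ge d+2=4$ points in $\R^d$ with $d=2$. One could alternatively simply cite Carathéodory's theorem in $\R^2$.
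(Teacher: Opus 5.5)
Your proof is correct and is essentially the paper's argument. The paper obtains the affine dependence from linear dependence of the lifted vectors $(1,z_r)\in\R^3$ instead of your differences $s_k-s_1\in\R^2$, applies the same shift with $\varepsilon=\min_{c_r>0}\lambda_r/c_r$ to eliminate a weight, and iterates where you argue by minimality; these differences are cosmetic.
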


\begin{proof}
Let $z=\sum_{r=1}^m \lambda_r z_r$ be a convex combination of points $z_r\in\R^2$ with $\lambda_r>0$ and $\sum_r\lambda_r=1$. If $m\le 3$, there is nothing to prove. Assume $m\ge 4$. Then the vectors $(1,z_r)\in\R^3$ are linearly dependent, so there exist real numbers $c_r$, not all zero, such that
\[
\sum_{r=1}^m c_r=0,
\qquad
\sum_{r=1}^m c_r z_r=0.
\]
Choose
\[
\varepsilon:=\min_{c_r>0}\frac{\lambda_r}{c_r}>0.
\]
Then
\[
\lambda_r':=\lambda_r-\varepsilon c_r\ge 0
\]
for every $r$, and equality holds for at least one index. Moreover,
\[
\sum_{r=1}^m \lambda_r'=1,
\qquad
\sum_{r=1}^m \lambda_r' z_r=z.
\]
Thus $z$ is represented as a convex combination of strictly fewer than $m$ points. Iterating reduces the number of points to at most three.
\end{proof}

\begin{theorem}[Two-orbit reduction of the global $\chi_*$ frontier]\label{thm:two-orbit}
For $C_0\ge 0$, define
\[
F_d(C_0):=\sup_{\rho:\,\chi_*(\rho)\le C_0}\tr\Sigma(\rho).
\]
Then
\[
F_d(C_0)=\sup\{s:\ (c,s)\in \conv(\Gamma_d),\ c\le C_0\}.
\]
Moreover, for every $C_0\ge 0$ there exist $a,b\in\Omega_d$ and $t\in[0,1]$ such that
\[
F_d(C_0)=t\,S(a)+(1-t)\,S(b),
\qquad
t\,C(a)+(1-t)\,C(b)= C_0.
\]
Equivalently, some optimizer of the original frontier problem is a mixture of at most two exchangeable orbit laws, and every such optimizer saturates the budget exactly.
\end{theorem}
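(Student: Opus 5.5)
The plan is to reduce to exchangeable laws, identify the frontier problem with a planar moment problem over $\Gamma_d$, and then obtain attainment and the two-point structure by a compactness argument in the $(C,S)$-plane.

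\textbf{Step 1: reduce to exchangeable laws.} By \cref{thm:symmetrization}(i) we have $\chi_*(\bar\rho)\le\chi_*(\rho)$. By \cref{thm:symmetrization}(iii) we have $\tr\Sigma(\bar\rho)=\tr\Sigma(\rho)$. Hence $F_d(C_0)$ equals the same supremum restricted to exchangeable $\rho$.

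\textbf{Step 2: translate exchangeable laws into measures on $\Omega_d$.} For exchangeable $\rho$ with template law $\nu$, \cref{lem:template-formulas}(i),(iii) give
\[
\tr\Sigma(\rho)=\int S\,d\nu,\qquad \chi_*(\rho)=\int C\,d\nu,
\]
with $C(a)=c(a)$. Both integrands are permutation invariant, so we may push $\nu$ forward to sorted representatives. If $\chi_*(\rho)<\infty$, then $\nu$ charges no template with a zero coordinate, because $c=+\infty$ there. Hence feasible exchangeable laws correspond exactly to probability measures $\sigma$ on $\Omega_d$ with $\int C\,d\sigma\le C_0$, each contributing the value $\int S\,d\sigma$. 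Conversely, any such $\sigma$ is realized by the mixture $\int\rho_a\,\sigma(da)$ of orbit laws.

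\textbf{Step 3: identify the supremum with the convex-hull formula.} The point $(\int C\,d\sigma,\int S\,d\sigma)$ is the barycenter of the pushforward of $\sigma$ to $\Gamma_d\subset\R^2$. I will use the standard fact that the barycenter of a probability measure on a subset of $\R^2$ lies in the convex hull of that subset; this is proved by induction on dimension via a supporting line at a boundary barycenter. Finite mixtures of orbit laws give the reverse inclusion. Together these yield
\[
F_d(C_0)=\sup\{s:(c,s)\in\conv(\Gamma_d),\ c\le C_0\}.
\]
By \cref{lem:planar-car}, each candidate point is a convex combination of at most three points of $\Gamma_d$.

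\textbf{Step 4: attainment (the main obstacle).} The difficulty is that $\Gamma_d$ is not compact: $C(a)\to\infty$ as $a_d\downarrow 0$, while $S\le d(d-1)$ stays bounded. I would take a maximizing sequence of three-point combinations $\sum_r t_r^{(k)}(C(a_r^{(k)}),S(a_r^{(k)}))$. Any point with $C(a_r^{(k)})>M$ has weight $t_r^{(k)}\le C_0/M$, so its contribution to $s$ is at most $d(d-1)C_0/M$. The sets $\{a\in\Omega_d: C(a)\le M\}$ are compact, since $A$ blows up only at the boundary $a_d=0$, and $S$ and $C$ are continuous there. Along a subsequence, each point either stays in such a compact set (and converges) or escapes with weight tending to $0$, so its contribution to both coordinates vanishes. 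Escaping weight can be reassigned to $a=\one$, which has $(C,S)=(0,0)$. This produces a genuine maximizer in $\conv(\Gamma_d)$ with $c\le C_0$.

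\textbf{Step 5: budget saturation.} I will show that the upper envelope $g(c):=\sup\{s:(c,s)\in\conv\Gamma_d\}$ is strictly increasing. It is concave, nonnegative with $g(0)=0$, and bounded by $d(d-1)$. Moreover $S<d(d-1)$ on $\Gamma_d$, while $S(a)\to d(d-1)$ along $a\to de_1$. So if an optimizer had $c<C_0$, mixing it with a high-$S$ point of $\Gamma_d$ would raise $s$ while keeping $c\le C_0$, a contradiction. Hence every optimizer has $c=C_0$.

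\textbf{Step 6: reduce from three points to two.} The optimal point $(C_0,F_d(C_0))$ lies on the upper boundary of the planar convex set $\conv\Gamma_d$. It therefore lies on a supporting line $\ell$. All points appearing with positive weight in its representation must lie on $\ell\cap\Gamma_d$, and a point of a segment is a convex combination of at most two points of that segment. This gives $a,b\in\Omega_d$ and $t\in[0,1]$ with
\[
t\,C(a)+(1-t)\,C(b)=C_0,\qquad t\,S(a)+(1-t)\,S(b)=F_d(C_0),
\]
i.e.\ an optimizer that is a mixture of at most two exchangeable orbit laws.
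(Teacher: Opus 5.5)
Your proof is correct, but it takes a genuinely different route from the paper once both of you have reduced to exchangeable laws and identified the achievable pairs with barycenters over $\Gamma_d$.

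\textbf{Barycenter step.} You sketch the supporting-hyperplane induction showing that such barycenters lie in $\conv(\Gamma_d)$. The paper only asserts this fact.

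\textbf{Attainment.} The paper works with measures. It takes a maximizing sequence $(\nu_m)\subset\cP(\Omega_d)$, gets tightness from compactness of the level sets $\{C\le M\}$, and passes to a weak limit via Prokhorov and Portmanteau. You instead apply the planar Carath\'eodory lemma first and then run a finite-dimensional compactness argument on three points and three weights.

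One small correction there: an escaping point's contribution $t_r^{(k)}C(a_r^{(k)})$ to the $c$-coordinate need not vanish; weight $1/k$ on a point with $C=k$ contributes $1$. It is, however, nonnegative. Discarding it therefore only lowers $c$, and the limit still satisfies $c\le C_0$ and $s=F_d(C_0)$.

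\textbf{Saturation and the two-point step.} The paper uses concavity of $F_d$ and a supergradient $\lambda$, shown to be positive by contradiction. It then uses the Lagrangian $g_\lambda=S-\lambda C$, whose argmax set must carry the optimal measure. You replace this with two simpler moves:
\begin{enumerate}
\item a direct perturbation, mixing the optimum with an orbit point of larger $S$, which is possible because $S<d(d-1)$ on $\Gamma_d$ while $\sup S=d(d-1)$;
\item an unspecified supporting line of the planar convex set $\conv(\Gamma_d)$ at the non-interior point $(C_0,F_d(C_0))$.
\end{enumerate}

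\textbf{What each route buys.} Your route never leaves the plane and avoids weak convergence and supergradients. It also treats $C_0=0$ uniformly, whereas the paper splits off that case because a supergradient need not exist at the endpoint of the domain, and it does not invoke the trace cap. The paper's Lagrangian route gives slightly more: every exchangeable optimizer is supported on $\argmax_{\Omega_d}(S-\lambda C)$ for a single $\lambda>0$. In other words, the supporting line at the frontier point has strictly positive finite slope. In your version the slope is left implicit, though you could easily show the line is not vertical.
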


\begin{proof}
If $C_0=0$, then $\chi_*(\rho)=0$. When $d\ge 3$, \cref{thm:trace-cap} gives $\tr\Sigma(\rho)=0$. When $d=2$, the nonnegative pairwise $\chi^2$ integrand vanishes $\rho$-a.s., which forces $(\gamma_2-\gamma_1)^\top x=0$ $\rho$-a.s.; since $\gamma_2-\gamma_1$ spans $\R^1$, this gives $\rho=\delta_0$ and $\tr\Sigma(\rho)=0$. In either case $F_d(0)=0$, realized by $\rho_{\one}$, and the two-orbit conclusion is immediate. Hence assume $C_0>0$.

Let $\rho$ be any feasible law. By \cref{thm:symmetrization}, the symmetrized law $\bar\rho$ is feasible and has the same trace, so
\[
F_d(C_0)=\sup_{\rho=\bar\rho:\,\chi_*(\rho)\le C_0}\tr\Sigma(\rho).
\]
For an exchangeable law $\rho$, let $\nu$ be its probability law on $\Omega_d$ of sorted templates. By \cref{lem:template-formulas},
\[
\chi_*(\rho)=\int_{\Omega_d} C(a)\,\nu(da),
\qquad
\tr\Sigma(\rho)=\int_{\Omega_d} S(a)\,\nu(da).
\]
Therefore the achievable exchangeable pairs are precisely the barycenters of probability measures on $\Gamma_d$, and hence they form the convex hull $\conv(\Gamma_d)$. This proves the formula for $F_d(C_0)$.

To prove existence of an optimizer, consider the equivalent optimization problem
\[
F_d(C_0)=\sup\Bigl\{\int_{\Omega_d} S(a)\,\nu(da):\ \nu\in\cP(\Omega_d),\ \int_{\Omega_d} C(a)\,\nu(da)\le C_0\Bigr\}.
\]
Let $(\nu_m)$ be a maximizing sequence. Since $B(a)\ge d$,
\[
C(a)=\frac{A(a)B(a)-d^2}{d(d-1)}\ge \frac{dA(a)-d^2}{d(d-1)}=\frac{A(a)-d}{d-1}.
\]
Hence, for every $M>0$, the level set
\[
\Omega_d(M):=\{a\in\Omega_d:\ C(a)\le M\}
\]
is compact: indeed $C(a)\le M$ implies $A(a)\le d+(d-1)M$, and since $A(a)\ge 1/a_d$, this gives
\[
a_d\ge \frac{1}{d+(d-1)M}>0.
\]
Moreover,
\[
\nu_m(\Omega_d\setminus \Omega_d(M))\le \frac{1}{M}\int_{\Omega_d} C(a)\,\nu_m(da)\le \frac{C_0}{M},
\]
so $(\nu_m)$ is tight. Because $\Omega_d$ is Polish, Prokhorov's theorem yields a subsequence, still denoted $(\nu_m)$, converging weakly to some $\nu_\star\in\cP(\Omega_d)$. Since $S$ is bounded and continuous on $\Omega_d$,
\[
\int_{\Omega_d} S(a)\,\nu_m(da)\longrightarrow \int_{\Omega_d} S(a)\,\nu_\star(da).
\]
Since $C$ is nonnegative and continuous, Portmanteau's theorem gives
\[
\int_{\Omega_d} C(a)\,\nu_\star(da)\le \liminf_{m\to\infty}\int_{\Omega_d} C(a)\,\nu_m(da)\le C_0.
\]
Thus $\nu_\star$ is feasible and optimal.

The map $F_d:[0,\infty)\to[0,d(d-1)]$ is finite, nondecreasing, and concave. Since $C_0>0$, choose a supergradient $\lambda\ge 0$ at $C_0$, so that
\[
F_d(C)\le F_d(C_0)+\lambda(C-C_0)
\qquad \forall\, C\ge 0.
\]
If $\lambda=0$, then for every $a\in\Omega_d$,
\[
S(a)\le F_d(C(a))\le F_d(C_0).
\]
Taking the supremum over $a$ gives $d(d-1)=\sup_{a\in\Omega_d}S(a)\le F_d(C_0)\le d(d-1)$, hence $F_d(C_0)=d(d-1)$. But $S(a)<d(d-1)$ for every $a\in\Omega_d$ because $a_d>0$, so no feasible probability measure can achieve expectation $d(d-1)$, a contradiction. Therefore $\lambda>0$. In particular, no optimizer can use strictly less than the full budget: if some feasible $\nu$ satisfied
\[
\int_{\Omega_d} S(a)\,\nu(da)=F_d(C_0)
\qquad\text{and}\qquad
\int_{\Omega_d} C(a)\,\nu(da)<C_0,
\]
then $F_d$ would be constant on a left-neighborhood of $C_0$, which would force $0$ to be a supergradient at $C_0$, contradicting the previous paragraph.

Define
\[
g_\lambda(a):=S(a)-\lambda C(a),\qquad a\in\Omega_d.
\]
Since $S$ is bounded and $\lambda>0$ while $C(a)\to\infty$ as $a_d\downarrow 0$, the function $g_\lambda$ attains its maximum on $\Omega_d$; let
\[
M_\lambda:=\argmax_{a\in\Omega_d} g_\lambda(a).
\]
For every $a\in\Omega_d$,
\[
S(a)\le F_d(C(a))\le F_d(C_0)+\lambda(C(a)-C_0),
\]
so
\[
g_\lambda(a)\le F_d(C_0)-\lambda C_0.
\]
On the other hand, optimality of $\nu_\star$ gives
\[
\int_{\Omega_d} g_\lambda(a)\,\nu_\star(da)
=F_d(C_0)-\lambda\int_{\Omega_d} C(a)\,\nu_\star(da)
\ge F_d(C_0)-\lambda C_0.
\]
Hence equality holds throughout. In particular,
\[
\int_{\Omega_d} g_\lambda(a)\,\nu_\star(da)=\sup_{a\in\Omega_d} g_\lambda(a)=F_d(C_0)-\lambda C_0,
\]
so $\nu_\star$ is supported on $M_\lambda$, and since $\lambda>0$ we also have
\[
\int_{\Omega_d} C(a)\,\nu_\star(da)=C_0.
\]
Thus the optimal point
\[
(C_0,F_d(C_0))
\]
lies in the convex hull of the subset
\[
\Gamma_d\cap \{(c,s): s-\lambda c = F_d(C_0)-\lambda C_0\}.
\]
By \cref{lem:planar-car}, it is a convex combination of at most three points of this subset. But all such points are collinear, since they lie on the same supporting line
\[
s-\lambda c = F_d(C_0)-\lambda C_0.
\]
Any point in the convex hull of collinear points lies on a line segment joining two of them. Therefore there exist $a,b\in\Omega_d$ and $t\in[0,1]$ such that
\[
(C_0,F_d(C_0))=t\,(C(a),S(a))+(1-t)\,(C(b),S(b)).
\]
Equivalently,
\[
F_d(C_0)=t\,S(a)+(1-t)\,S(b),
\qquad
t\,C(a)+(1-t)\,C(b)=C_0.
\]
Hence the two-orbit law
\[
\rho=t\,\rho_a+(1-t)\,\rho_b
\]
is feasible and optimal.
\end{proof}

\begin{corollary}[Concave-envelope interpretation]\label{cor:concave-envelope}
The map $C\mapsto F_d(C)$ is the upper concave envelope of the parametric curve $\Gamma_d$.
\end{corollary}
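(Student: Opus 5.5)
We read the corollary as saying that for every $C\ge 0$,
\[
F_d(C)=\widehat\Gamma_d(C):=\sup\{s:\ (c,s)\in\conv(\Gamma_d),\ c\le C\},
\]
and that this function is the least concave function on $[0,\infty)$ lying above $\Gamma_d$. Strictly, the concave envelope uses $c=C$ rather than $c\le C$. Since $F_d$ is nondecreasing, these agree, as shown in the first step below.

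The plan has three steps, all built on \cref{thm:two-orbit}.

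\textbf{Step 1: identify $F_d$ with the hull function.} \cref{thm:two-orbit} already gives $F_d(C)=\widehat\Gamma_d(C)$. To pass from $c\le C$ to $c=C$, we check that the two constraints define the same supremum. In one direction, $\{c=C\}\subseteq\{c\le C\}$. In the other, \cref{thm:two-orbit} says the supremum is attained at a point $(C,F_d(C))$ that already lies in $\conv(\Gamma_d)$. The case $C=0$ is handled by the point $(0,0)$, which is the image of $a=\one$.

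\textbf{Step 2: $F_d$ is a concave majorant of $\Gamma_d$.} Concavity is most easily seen at the level of laws. Take feasible exchangeable laws $\rho_1,\rho_2$ with budgets $C_1,C_2$. The mixture $t\rho_1+(1-t)\rho_2$ has $\chi_*$ exactly $tC_1+(1-t)C_2$, since $\chi_*$ is linear on exchangeable laws by \cref{lem:template-formulas}(iii). It also has trace $t\tr\Sigma(\rho_1)+(1-t)\tr\Sigma(\rho_2)$. Taking suprema gives concavity of $F_d$. For the majorant property, take any $a\in\Omega_d$. The orbit law $\rho_a$ is feasible at budget $C(a)$, so $S(a)\le F_d(C(a))$. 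Hence the graph of $F_d$ lies above $\Gamma_d$.

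\textbf{Step 3: $F_d$ is the least such majorant.} Let $G$ be any concave function with $G(C(a))\ge S(a)$ for all $a\in\Omega_d$. By \cref{thm:two-orbit}, for each $C$ there are $a,b,t$ with
\[
F_d(C)=tS(a)+(1-t)S(b),\qquad tC(a)+(1-t)C(b)=C.
\]
Then concavity of $G$ gives
\[
F_d(C)\le tG(C(a))+(1-t)G(C(b))\le G(C).
\]
So $F_d\le G$ everywhere.

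The only delicate point is the convention in Step 1. One must use that the optimizer saturates the budget, which \cref{thm:two-orbit} provides, rather than merely the inequality $c\le C_0$. Otherwise the proof is routine bookkeeping.
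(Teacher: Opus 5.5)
Your proposal is correct and takes essentially the paper's route: both derive the corollary directly from \cref{thm:two-orbit}, which identifies $F_d$ with the upper boundary of $\conv(\Gamma_d)$. The paper simply cites the standard fact that this upper boundary is the graph of the upper concave envelope, whereas you check it explicitly (saturation to pass from $c\le C$ to $c=C$, the majorant property via orbit laws, minimality via the two-point representation), which expands the paper's argument rather than replacing it.
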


\begin{proof}
The preceding theorem identifies the achievable region with the convex hull of $\Gamma_d$ and the frontier with its upper boundary. The upper boundary of a planar convex set is exactly the graph of a concave function, namely the upper concave envelope of the generating curve.
\end{proof}

\section{Universal low-budget optimality}\label{sec:asymptotic}

\subsection{Asymptotic minimax theorem}

We now consider the i.i.d.\ model: user inputs are i.i.d.\ from an unknown composition $\theta\in\Delta_d$, each input is privatized through a channel represented by $\rho\in\fM_d$, and the observer receives the shuffled outputs. Since the privatized messages are i.i.d.\ under this input model, the shuffled ordered sample has law $q_\theta^{\otimes n}$. Equivalently, the unordered sample (the empirical measure; for finite-output channels, the histogram) is a sufficient statistic for this i.i.d.\ product family, so minimax and Bayes analyses may be carried out in the product model $\{q_\theta^{\otimes n}:\theta\in\Delta_d\}$.

\begin{theorem}[Universal low-budget optimality among all estimators]\label{thm:main-design}
Assume $d\ge 3$, and let $C_n\downarrow 0$ with $nC_n\to\infty$. Then
\[
\inf_{\rho:\,\chi_*(\rho)\le C_n}\ \inf_{\widehat\theta}\ \sup_{\theta\in\Delta_d}
\E_{\theta,\rho}\|\widehat\theta-\theta\|_2^2
=
\frac{d-1}{nd}\,\frac{d+2\sqrt{d-1}}{C_n}\,(1+o(1)).
\]
Here the outer infimum is over mechanism design (choice of $\rho$), the inner infimum is over \emph{all} estimators $\widehat\theta$ (not only the canonical one), and the supremum is over all compositions.
The upper bound is achieved, for all sufficiently large $n$, by augmented randomized response with
\[
\lambda_*=\sqrt{d-1},\qquad p_n=C_n/C_*(d),
\]
and the canonical estimator.
\end{theorem}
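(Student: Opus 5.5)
The proof splits into an upper bound and a matching lower bound.

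\emph{Upper bound.} For all $n$ large enough that $C_n\le C_*(d)$, \cref{thm:finite-canonical-chi} applies directly. It shows that augmented randomized response with $\lambda_*=\sqrt{d-1}$ and $p_n=C_n/C_*(d)$, combined with the canonical estimator, has worst-case i.i.d.\ risk exactly
\[
\frac{(d-1)(d+2\sqrt{d-1})}{nd\,C_n}.
\]
So the left-hand side is at most this quantity with no $o(1)$ loss.

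\emph{Lower bound.} This is where the work lies: the estimator is now arbitrary, so the canonical risk formulas no longer apply. We use a Bayesian van Trees (multivariate Cram\'er--Rao) argument localized near the uniform composition. Parametrize $\theta(u)=\one/d+Hu$ and take a smooth prior $\pi_n$ on a ball $\{\|u\|\le r_n\}$ with $r_n\to0$ and $r_n^2 nC_n\to\infty$. Such a choice exists because $nC_n\to\infty$; for example $r_n=(nC_n)^{-1/4}$. Since $H$ is an isometry, the minimax risk dominates the Bayes risk in $u$. The multivariate van Trees inequality then gives
\[
\E\|\widehat u-u\|^2\ \ge\ \frac{(d-1)^2}{n\int \tr I_\rho(u)\,\pi_n(du)+J(\pi_n)},
\]
where $I_\rho(u)=\int xx^\top/(1+u^\top x)\,\rho(dx)$ is the one-user Fisher information and $J(\pi_n)=O(r_n^{-2})$ is the prior information. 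The form with $\tr$ in the denominator comes from taking the trace of the matrix van Trees bound and applying Cauchy--Schwarz in the form $\tr(M^{-1})\ge (d-1)^2/\tr M$.

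The key step is to bound $\tr I_\rho(u)$ uniformly over feasible $\rho$. At $u=0$ we have $\tr I_\rho(0)=\tr\Sigma(\rho)\le K_d\chi_*(\rho)\le K_dC_n$ by \cref{thm:trace-cap}, where $K_d=d(d-1)/(d+2\sqrt{d-1})$. Away from $0$ we need the perturbation estimate
\[
\tr I_\rho(u)\le (1+O(r_n))\,\tr\Sigma(\rho)+O(r_n)\,\chi\text{-type terms}\le K_dC_n(1+o(1))
\]
uniformly for $\|u\|\le r_n$. The difficulty is that $1/(1+u^\top x)$ is not uniformly close to $1$ on $\cX_d$, since $1+u^\top x$ can be small near boundary faces. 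I would handle this by writing
\[
\frac{1}{1+u^\top x}=1-\frac{u^\top x}{1+u^\top x},
\]
so that
\[
\tr I_\rho(u)-\tr\Sigma(\rho)=-\int\frac{\|x\|^2\,u^\top x}{1+u^\top x}\,\rho(dx).
\]
The integrand is dominated by $r_n\,\|x\|^3/(1+u^\top x)$. Since $u$ lies in the interior of $\cH$, $1+u^\top x$ is a convex combination of the coordinates $a_i(x)$ with weights near $1/d$, hence $1+u^\top x\ge a_{\max}(x)/(2d)$ for small $r_n$. Together with $\|x\|^2\le \sum_i a_i^2\lesssim a_{\max}^2$, the ratio $\|x\|^3/(1+u^\top x)$ is then bounded by a constant times $\|x\|^2$, giving an error of order $r_n\tr\Sigma(\rho)$. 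I would verify this domination first, as it is the main obstacle. If it fails, I would fall back on bounding by the pairwise $\chi^2$ integrands of \cref{def:budget} and using $\chi_*\le C_n$.

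Combining, the Bayes risk is at least
\[
\frac{(d-1)^2}{nK_dC_n(1+o(1))+O(r_n^{-2})}.
\]
Because $r_n^2nC_n\to\infty$, the prior term is $o(nC_n)$, and the bound equals the claimed constant times $(1+o(1))$. Since $\theta$ ranges over a shrinking neighborhood of the uniform composition, it stays inside $\Delta_d$, which completes the matching lower bound.
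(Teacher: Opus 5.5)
Your proposal is correct and follows the paper's proof essentially step for step: augmented randomized response for the upper bound; for the lower bound, a van Trees bound localized near the uniform composition with prior scale $r_n=(nC_n)^{-1/4}$, a uniform perturbation bound on $\tr I_\rho(u)$, and the trace cap of \cref{thm:trace-cap}. The only overcomplication is the perturbation step, where the boundary-face obstacle you anticipate does not arise. Since $\|x\|\le\sqrt{d(d-1)}$ on $\cX_d$ (\cref{lem:polytope-template}), one has $1+u^\top x\ge 1-r_n\sqrt{d(d-1)}$ uniformly, so $\tr I_\rho(u)\le(1+o(1))\tr\Sigma(\rho)$ follows at once; this is exactly the paper's argument, and your domination argument is also valid, just unnecessary.
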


\begin{proof}
\textbf{Upper bound.}
Let $\rho_n=\rho_{p_n,\lambda_*}$. By \cref{lem:aug-formulas},
\[
\Sigma(\rho_n)=s_n I_{d-1},
\qquad
s_n=\frac{dC_n}{d+2\sqrt{d-1}}.
\]
Using \cref{prop:iid-risk},
\[
R^{\mathrm{iid}}_\theta(\rho_n)=\frac{d-1}{n s_n}-\frac{\|h_\theta\|_2^2}{n}.
\]
Hence the worst case is at the uniform composition and
\[
\sup_{\theta\in\Delta_d}R^{\mathrm{iid}}_\theta(\rho_n)=\frac{d-1}{n s_n}=\frac{d-1}{nd}\,\frac{d+2\sqrt{d-1}}{C_n}.
\]
This proves the matching upper bound.

\textbf{Lower bound.}
Let $m:=d-1$. Because $0\in\operatorname{int}\cH$, there exists $\delta_d>0$ such that the cube $[-\delta_d,\delta_d]^m$ is contained in $\cH$. Choose
\[
r_n:=(nC_n)^{-1/4},
\]
so that $r_n\to 0$ and, for all large $n$, $[-r_n,r_n]^m\subset\cH$.
Parameterize locally by
\[
\theta(u):=\frac{\one}{d}+Hu,
\qquad u\in[-r_n,r_n]^m.
\]
Since $H$ is an isometry from $\R^{d-1}$ to $T_d$, it suffices to lower-bound the risk for the local coordinate $u$. Given an arbitrary estimator $\widehat\theta$, define $\widehat u:=H^\top\!\left(\widehat\theta-\frac{\one}{d}\right)$. Because $H^\top H=I_{d-1}$ and $HH^\top\preceq I_d$ (since $HH^\top$ is the orthogonal projection onto $T_d$),
\[
\|\widehat\theta-\theta(u)\|_2^2\ge \|H^\top(\widehat\theta-\theta(u))\|_2^2=\|\widehat u-u\|_2^2.
\]
Therefore any Bayes lower bound for estimating $u$ applies a fortiori to the original estimator $\widehat\theta$.

Fix a one-dimensional $C^1$ density $\varphi$ on $[-1,1]$ such that $\varphi(\pm 1)=0$ and
\[
J_0:=\int_{-1}^1 \frac{(\varphi'(t))^2}{\varphi(t)}\,dt<\infty.
\]
For concreteness one may take
\[
\varphi(t)=\frac{15}{16}(1-t^2)^2\,\mathbf 1_{[-1,1]}(t).
\]
Set
\[
\varphi_{r_n}(s):=r_n^{-1}\varphi(s/r_n),
\qquad
\pi_n(du):=\prod_{j=1}^m \varphi_{r_n}(u_j)\,du_j.
\]
Then
\[
J(\pi_n):=\sum_{j=1}^m \int \frac{(\partial_j \pi_n(u))^2}{\pi_n(u)}\,du=\frac{mJ_0}{r_n^2}=o(nC_n).
\]

Now fix any feasible $\rho$ with $\chi_*(\rho)\le C_n$. For the product model $q_u^{\otimes n}$, the one-user score in direction $u_j$ is
\[
\partial_{u_j}\log q_u(x)=\frac{x_j}{1+u^\top x}.
\]
Hence
\[
I_u(\rho)=n\int \frac{xx^\top}{1+u^\top x}\,\rho(dx).
\]
By \cref{lem:polytope-template}, every $x\in\cX_d$ satisfies $\|x\|\le R_d:=\sqrt{d(d-1)}$. Since $\|u\|\le \sqrt m\,r_n$ on the support of $\pi_n$,
\[
1+u^\top x\ge 1-R_d\sqrt m\,r_n=1-o(1),
\]
uniformly in $u$ and $x$. Therefore
\[
\tr I_u(\rho)\le \frac{n}{1-R_d\sqrt m\,r_n}\,\tr\Sigma(\rho)=n(1+o(1))\tr\Sigma(\rho).
\]
By \cref{thm:trace-cap},
\[
\tr\Sigma(\rho)\le \frac{d(d-1)}{d+2\sqrt{d-1}}\,C_n=:\alpha_d C_n.
\]
Thus
\[
\int \tr I_u(\rho)\,\pi_n(du)\le n\alpha_d C_n(1+o(1)).
\]

Apply the multivariate van Trees bound from \cref{lem:vantrees} in the appendix. For every estimator $\widehat u$,
\[
\int \E_u\|\widehat u-u\|_2^2\,\pi_n(du)
\ge
\frac{m^2}{\int \tr I_u(\rho)\,\pi_n(du)+J(\pi_n)}.
\]
Combining the previous two displays gives
\[
\int \E_u\|\widehat u-u\|_2^2\,\pi_n(du)
\ge
\frac{m^2}{n\alpha_d C_n(1+o(1))}
=\frac{d-1}{nd}\,\frac{d+2\sqrt{d-1}}{C_n}(1-o(1)).
\]
Because Bayes risk is bounded above by minimax risk and $\|\widehat\theta-\theta\|_2^2\ge\|\widehat u-u\|_2^2$, we conclude that every estimator under every feasible $\rho$ obeys the same lower bound. Together with the upper bound this proves the theorem.
\end{proof}

\section{Exact design under the raw local \texorpdfstring{$\varepsilon_0$}{epsilon0} cap}\label{sec:raw}

The optimizer class for this problem---subset selection---was identified by Wang, Blocki, Li, and Jha \cite{wang-ldp} and shown to be asymptotically order-optimal by Ye and Barg \cite{ye-barg}. Pan \cite{pan-strict} recently proved strict optimality in the pure local model. The present section does not claim a new identification of the optimizer class. Its contribution is the exact anchored canonical formulation: explicit finite-$n$ risk formulas under both the i.i.d.\ and fixed-composition models, and the integration of these formulas with the shuffle-privacy envelope theory developed in earlier sections.

\subsection{The ratio-capped template polytope}

\begin{proposition}[The raw local cap in template coordinates]\label{prop:raw-cap}
Fix $\lambda=e^{\varepsilon_0}\ge 1$. Let $\rho\in\fM_d$ and let $\nu=a_\#\rho$ be its template law. Then the channel $W_\rho$ is $\varepsilon_0$-LDP if and only if
\[
\nu(\cK_{d,\lambda})=1,
\qquad
\cK_{d,\lambda}:=\{a\in\cK_d:\ a_i\le \lambda a_j\ \text{for all }i,j\in[d]\}.
\]
\end{proposition}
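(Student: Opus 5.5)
The plan is to unwind the definitions through the representation of \cref{thm:representation}. By definition, $W_\rho(B\mid i)=\int_B(1+\gamma_i^\top x)\,\rho(dx)$, and by \cref{lem:polytope-template} we have $1+\gamma_i^\top x=a_i(x)$. Pushing forward by the affine bijection $a$, the rows become $W_\rho(\cdot\mid i)\cong a_i\,\nu(da)$ on $\cK_d$. Every row therefore has density $a_i$ with respect to the common reference measure $\nu$. I will read $\varepsilon_0$-LDP in the measure-theoretic form $W_\rho(B\mid i)\le\lambda\,W_\rho(B\mid j)$ for all Borel $B$ and all $i,j$. This is the standard definition, and for finitely supported $\rho$ it is the usual pointwise condition on output symbols.

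\emph{Sufficiency.} Assume $\nu(\cK_{d,\lambda})=1$. Then $a_i\le\lambda a_j$ holds $\nu$-a.s., and integrating over $a^{-1}(B)$ (equivalently over $B$ in $x$-coordinates) gives $W_\rho(B\mid i)\le\lambda W_\rho(B\mid j)$ immediately.

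\emph{Necessity.} Assume the LDP inequality holds for all $B$. Fix an ordered pair $(i,j)$ and take $B_{ij}:=\{a: a_i>\lambda a_j\}$. Then
\[
0\ \ge\ \int_{B_{ij}}(a_i-\lambda a_j)\,\nu(da),
\]
while the integrand is strictly positive on $B_{ij}$. Hence $\nu(B_{ij})=0$. Taking the union over the finitely many ordered pairs gives $\nu(\cK_d\setminus\cK_{d,\lambda})=0$, which is $\nu(\cK_{d,\lambda})=1$ because $\nu$ is supported on $\cK_d$.

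The only potential obstacle is bookkeeping: we must confirm that the densities $a_i$ are taken with respect to the same measure $\nu$ for every row, which follows from the row-average identity $\rho=\frac1d\sum_i W_\rho(\cdot\mid i)$ in \cref{thm:representation}(iii). We must also check that conditionally identical refinements do not affect LDP. They do not, since LDP is preserved under post-processing in both directions, and the reduction in \cref{thm:representation}(ii) is exactly such a merge. Degenerate points, where some $a_j=0$, are handled automatically: there $a_i\le\lambda a_j$ forces $a_i=0$ for all $i$, which is impossible because $\sum_i a_i=d$. So such points lie outside $\cK_{d,\lambda}$, consistent with LDP forbidding zero-probability outputs in one row but not in another. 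The case $\lambda=1$ ($\varepsilon_0=0$) is covered by the same argument, with $\cK_{d,1}=\{\one\}$.
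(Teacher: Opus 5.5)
Your proof is correct and takes the same approach as the paper: both identify the rows of $W_\rho$ as having densities $a_i$ with respect to the common template law $\nu$, so the pairwise likelihood ratio is $a_j/a_i$ and the $\varepsilon_0$-LDP cap reduces to $a_i\le\lambda a_j$ holding $\nu$-a.s. Your version is somewhat more careful than the paper's pointwise wording: the set $B_{ij}=\{a_i>\lambda a_j\}$ makes the necessity step rigorous for $\rho$ that are not finitely supported, and you handle templates with a zero coordinate explicitly.
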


\begin{proof}
For a template $a\in\cK_d$, the row-$i$ and row-$j$ output masses are $a_i$ and $a_j$ times the common average density. Thus
\[
\frac{dW_\rho(\cdot\mid j)}{dW_\rho(\cdot\mid i)}= \frac{a_j}{a_i}
\]
at every output point carrying template $a$. Therefore $W_\rho$ is $\varepsilon_0$-LDP if and only if
\[
e^{-\varepsilon_0}\le \frac{a_j}{a_i}\le e^{\varepsilon_0}
\qquad\text{for every }i,j
\]
on the support of $\nu$. Since $\lambda=e^{\varepsilon_0}$, this is equivalent to
\[
a_i\le \lambda a_j \qquad\forall\, i,j,
\]
that is, to $\nu(\cK_{d,\lambda})=1$.
\end{proof}

For $1\le s\le d-1$, define
\[
\beta_s(\lambda):=\frac{d}{d+s(\lambda-1)},
\qquad
\alpha_s(\lambda):=\lambda \beta_s(\lambda),
\]
and let $a^{(s)}_\lambda\in\cK_d$ be the sorted two-level template with $s$ coordinates equal to $\alpha_s(\lambda)$ and $d-s$ coordinates equal to $\beta_s(\lambda)$. The associated exchangeable orbit law is denoted by $\rho_{s,\lambda}$ and will be called the \emph{$s$-subset mechanism}. This mechanism class was introduced by Wang et al.\ \cite{wang-ldp}; see also \cite{ye-barg,pan-strict} for asymptotic and strict optimality results in the pure local model.

\begin{lemma}[Extreme templates under the ratio cap]\label{lem:raw-extreme}
Let $\lambda>1$. The extreme points of $\cK_{d,\lambda}$ are exactly the templates obtained, up to permutation, from $a^{(s)}_\lambda$ for some $s\in\{1,\dots,d-1\}$.
\end{lemma}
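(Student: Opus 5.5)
The plan is to use the standard vertex characterization of a polytope: a point is extreme iff the constraints active at it have rank equal to the dimension. Here $\cK_{d,\lambda}$ is a polytope in the $(d-1)$-dimensional affine hyperplane $\{\sum_i a_i=d\}$. It is cut out by the ratio constraints $a_i-\lambda a_j\le 0$ for $i\neq j$; nonnegativity is implied, since $a_i\le\lambda a_j$ for all pairs together with $\sum a=d$ forces all $a_i>0$.

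\emph{Step 1: the listed templates are extreme.} Take $a=a^{(s)}_\lambda$ with high value $\alpha$ on a set $S$ of size $s$ and low value $\beta=\alpha/\lambda$ on $S^c$. It belongs to $\cK_{d,\lambda}$ because every ratio $a_i/a_j$ lies in $\{1,\lambda,1/\lambda\}$. The active constraints are $a_i=\lambda a_j$ for all $i\in S$, $j\in S^c$. Suppose a direction $v$ with $\sum v=0$ preserves all of these. Then $v_i=\lambda v_j$ for every $i\in S$, $j\in S^c$. Because $S$ and $S^c$ are nonempty, this forces $v$ to be constant on $S$ (value $\lambda c$) and constant on $S^c$ (value $c$). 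The sum condition then gives $c(s\lambda+d-s)=0$, so $c=0$ and $v=0$. The active constraints therefore have full rank and $a$ is a vertex. Permuting coordinates preserves $\cK_{d,\lambda}$, so every permuted copy is also a vertex.

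\emph{Step 2: every extreme point has this form.} This is the main step. Let $a$ be extreme and set $m=\min_i a_i$, $M=\max_i a_i$. I claim $M=\lambda m$ and every coordinate lies in $\{m,M\}$.
\begin{enumerate}
\item First, $M>m$: the point $\one$ is the centroid of its permutation orbit, which lies in $\cK_{d,\lambda}$. Since $\lambda>1$ that orbit contains points other than $\one$, so $\one$ is not extreme.
\item Suppose $M<\lambda m$. Then no constraint is active, $a$ is interior relative to the hyperplane, and it cannot be extreme.
\item So $M=\lambda m$. Now suppose some middle set $I=\{i:m<a_i<M\}$ is nonempty. Perturb $a_i\mapsto a_i+t w_i$ for $i\in I$, with any nonzero $w$ satisfying $\sum_{i\in I}w_i=0$ when $|I|\ge2$.
\item If $|I|=1$, instead move the min-block, the max-block and the middle coordinate together. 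Scale the min and max blocks as $(1+t)m$ and $(1+t)M$, which keeps $M=\lambda m$, and compensate on the middle coordinate to keep the sum fixed.
\item For small $|t|$ the middle coordinates remain strictly inside $(m,M)$, so every ratio constraint stays satisfied. This exhibits $a$ as the midpoint of two feasible points, a contradiction.
\end{enumerate}
Hence $a$ is two-level with high value $\lambda$ times low value. Letting $s$ be the number of high coordinates, $1\le s\le d-1$, the constraint $\sum a=d$ gives $\beta=d/(d+s(\lambda-1))$. So $a$ is a permutation of $a^{(s)}_\lambda$.

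The obstacle I expect is the perturbation in Step 2 when $|I|=1$: the perturbation must keep the sum fixed without breaking the active constraints $M=\lambda m$. Scaling the min and max blocks jointly, as above, handles this. A cleaner alternative, which I would try first, is an active-rank count. Any active constraint $a_i=\lambda a_j$ forces $a_i=M$ and $a_j=m$. The active constraints therefore only link the max-block to the min-block, so any middle coordinate is free up to the sum constraint. This makes the rank deficient whenever $I\neq\emptyset$, unless the blocks can absorb the compensation through the joint scaling, which leaves all the active ratios intact.
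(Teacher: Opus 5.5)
Your proof is correct and follows essentially the same route as the paper: you rule out $M<\lambda m$ by a relative-interior perturbation and rule out a middle coordinate by the joint block scaling (adding $\varepsilon\lambda$ on the max-block and $\varepsilon$ on the min-block, compensated on a middle coordinate). That scaling works for every $|I|\ge 1$, so your case split is unnecessary; for extremality of the two-level templates you use the active-constraint rank criterion where the paper argues directly with a convex combination $a=tb+(1-t)c$, which is the same content. One small slip: the permutation orbit of $\one$ is just $\{\one\}$, so your item (1) is vacuous as written, but it is also redundant, since $M=m$ gives $M<\lambda m$ and is already covered by item (2).
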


\begin{proof}
Let $a\in\cK_{d,\lambda}$, and write
\[
M:=\max_i a_i,
\qquad
m:=\min_i a_i.
\]
Because $a\in\cK_{d,\lambda}$, one has $0<m\le M\le \lambda m$.

We first show that if $M<\lambda m$, then $a$ is not extreme. Choose indices $p,q$ such that $a_p=M$ and $a_q=m$. Because the inequality is strict, there exists $\varepsilon>0$ such that
\[
m-\varepsilon>0,
\qquad
M+\varepsilon<\lambda(m-\varepsilon).
\]
Define
\[
a^\pm:=a\pm \varepsilon(e_p-e_q).
\]
Then $\sum_i a_i^\pm=d$ and all coordinates remain positive. All coordinates other than $p$ and $q$ are unchanged, and they satisfy the ratio constraints strictly because $M<\lambda m$ implies $a_i/a_j<\lambda$ for all $i,j$ with the same strict slack. Since the perturbed coordinates converge back to $(M,m)$ as $\varepsilon\downarrow 0$, the same strict slack persists for all sufficiently small $\varepsilon>0$. Hence $a^\pm\in\cK_{d,\lambda}$ and
\[
a=\frac12(a^++a^-)
\]
with $a^+\neq a^-$. Thus $a$ is not extreme. Therefore every extreme point must satisfy
\[
M=\lambda m.
\]

Now suppose $M=\lambda m$ but there exists an index $r$ with
\[
m<a_r<M.
\]
Let
\[
H:=\{i:\ a_i=M\},
\qquad
L:=\{i:\ a_i=m\},
\]
and set $s:=|H|$, $t:=|L|$. Define a direction $v\in\R^d$ by
\[
v_i=
\begin{cases}
\lambda, & i\in H,\\
1, & i\in L,\\
-(s\lambda+t), & i=r,\\
0, & \text{otherwise}.
\end{cases}
\]
Then $\sum_i v_i=0$. For every active max-min constraint $a_i=\lambda a_j$ with $i\in H$ and $j\in L$,
\[
v_i-\lambda v_j=\lambda-\lambda\cdot 1=0.
\]
All remaining ratio constraints are strict, because every coordinate outside $H\cup L$ lies strictly between $m$ and $M$. Since the perturbation direction $v$ preserves the active equalities to first order and all other constraints hold with strict slack at $a$, continuity implies that for all sufficiently small $\varepsilon>0$, both
\[
a^\pm:=a\pm \varepsilon v
\]
remain in $\cK_{d,\lambda}$, and
\[
a=\frac12(a^++a^-)
\]
with $a^+\neq a^-$. Hence $a$ is not extreme.

We have shown that every extreme point has exactly two distinct coordinate values, say $M$ and $m$, and that they satisfy $M=\lambda m$. Let $s$ be the number of coordinates equal to $M$. The sum constraint gives
\[
sM+(d-s)m=d,
\]
hence
\[
m=\frac{d}{d+s(\lambda-1)}=\beta_s(\lambda),
\qquad
M=\lambda m=\alpha_s(\lambda).
\]
Therefore every extreme point is, up to permutation, one of the templates $a^{(s)}_\lambda$.

Conversely, let $a$ be such a two-level template, with high set $H$ of size $s$ and low set $L$ of size $d-s$. Suppose
\[
a=t b+(1-t)c
\qquad\text{with }0<t<1,\quad b,c\in \cK_{d,\lambda}.
\]
Fix $i\in H$ and $j\in L$. Then
\[
a_i=M=\lambda m=\lambda a_j.
\]
Since $b$ and $c$ are feasible,
\[
b_i\le \lambda b_j,
\qquad
c_i\le \lambda c_j.
\]
Therefore
\[
M=a_i=t b_i+(1-t)c_i\le \lambda\bigl(t b_j+(1-t)c_j\bigr)=\lambda a_j=M.
\]
All inequalities are equalities, so
\[
b_i=\lambda b_j,
\qquad
c_i=\lambda c_j
\]
for every $i\in H$ and $j\in L$. Hence all high coordinates of $b$ are equal to some $\lambda \beta$, all low coordinates to $\beta$, and similarly for $c$. The sum constraint forces
\[
\beta=\frac{d}{d+s(\lambda-1)}=m,
\]
so $b=a$. The same argument gives $c=a$. Thus $a$ is extreme.
\end{proof}

\begin{theorem}[Exact raw-cap trace maximization]\label{thm:raw-trace}
Let $\lambda=e^{\varepsilon_0}>1$, and for $1\le s\le d-1$ define
\[
T_{d,\lambda}(s):=\frac{d\,s(d-s)(\lambda-1)^2}{(d+s(\lambda-1))^2},
\qquad
T_d^*(\lambda):=\max_{1\le s\le d-1} T_{d,\lambda}(s).
\]
Then
\[
\sup_{\rho:\,W_\rho\ \varepsilon_0\text{-LDP}}\tr\Sigma(\rho)=T_d^*(\lambda).
\]
Let
\[
\cS_{d,\lambda}^*:=\argmax_{1\le s\le d-1} T_{d,\lambda}(s).
\]
For every $s_*\in \cS_{d,\lambda}^*$, the exchangeable $s_*$-subset mechanism $\rho_{s_*,\lambda}$ is optimal. More generally, an exchangeable law is optimal if and only if it is a convex combination of the laws $\{\rho_{s,\lambda}:s\in \cS_{d,\lambda}^*\}$; in particular, the extreme exchangeable optimizers are exactly the individual maximizing subset mechanisms. Finally,
\[
\cS_{d,\lambda}^*\subseteq \left\{\left\lfloor \frac{d}{\lambda+1}\right\rfloor,\left\lceil \frac{d}{\lambda+1}\right\rceil\right\}\cap \{1,\dots,d-1\}.
\]
\end{theorem}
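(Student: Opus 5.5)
The plan is to reduce everything to a linear functional on the compact polytope $\cK_{d,\lambda}$ and then invoke \cref{lem:raw-extreme}. By \cref{lem:template-formulas}(i), $\tr\Sigma(\rho)=\int (B(a)-d)\,\nu(da)$ with $\nu=a_\#\rho$. By \cref{prop:raw-cap}, the $\varepsilon_0$-LDP constraint is exactly $\nu(\cK_{d,\lambda})=1$. The mean-zero condition on $\rho$ translates to $\int a\,\nu(da)=\one$.

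\textbf{Upper bound.} The function $a\mapsto B(a)-d=\|a-\one\|_2^2$ is convex, so its maximum over the compact polytope $\cK_{d,\lambda}$ is attained at an extreme point. By \cref{lem:raw-extreme}, every extreme point is a permutation of some $a^{(s)}_\lambda$. The computation already carried out in the proof of \cref{lem:slope} (the two-level formula for $B-d$ with ratio $\lambda$) gives $B(a^{(s)}_\lambda)-d=T_{d,\lambda}(s)$. Hence $B(a)-d\le T_d^*(\lambda)$ pointwise on $\cK_{d,\lambda}$, and integrating yields $\tr\Sigma(\rho)\le T_d^*(\lambda)$.

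\textbf{Attainment.} For $s_*\in\cS^*_{d,\lambda}$, the orbit law $\rho_{s_*,\lambda}$ is exchangeable. Its mean is therefore zero, since it is a fixed vector of the permutation action on $T_d$, so the law lies in $\fM_d$. It is supported on permutations of $a^{(s_*)}_\lambda$, and its trace is $T_{d,\lambda}(s_*)=T_d^*(\lambda)$.

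\textbf{Characterization of exchangeable optimizers.} Equality in the integrated bound forces $\nu$-a.e.\ $a$ to satisfy $\|a-\one\|^2=T_d^*(\lambda)$. I will show that the maximizers of a strictly convex function over a polytope are extreme points. If $a$ were a nontrivial convex combination $tb+(1-t)c$ of points of $\cK_{d,\lambda}$, strict convexity would give $\|a-\one\|^2<\max(\|b-\one\|^2,\|c-\one\|^2)\le T_d^*$. So $\nu$ is supported on the permutation orbits of $a^{(s)}_\lambda$ with $s\in\cS^*_{d,\lambda}$. If $\rho$ is exchangeable, its mass on each orbit is uniform across that orbit, so $\rho=\sum_{s\in\cS^*}w_s\rho_{s,\lambda}$. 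The converse direction is immediate. Extreme exchangeable optimizers are then the single $\rho_{s,\lambda}$, because these orbit laws have disjoint supports and so cannot be decomposed nontrivially within the optimal set.

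\textbf{Location of the maximizing $s$.} I would treat $T_{d,\lambda}(s)$ as a function of real $s\in[0,d]$. With $\mu=\lambda-1$, the derivative of $s(d-s)/(d+s\mu)^2$ has numerator $(d-2s)(d+s\mu)-2\mu s(d-s)=d\bigl(d-s(\mu+2)\bigr)$. So the function is strictly increasing for $s<d/(\lambda+1)$ and strictly decreasing afterwards; it is unimodal with peak at $s_0=d/(\lambda+1)$. The integer maximizers therefore lie in $\{\lfloor s_0\rfloor,\lceil s_0\rceil\}$, intersected with $\{1,\dots,d-1\}$. When $s_0<1$, the maximizer is $s=1$, and $1=\lceil s_0\rceil$ provided $s_0>0$, which holds.

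\textbf{Main obstacle.} The main obstacle is the optimizer characterization. The step that needs care is the strict-convexity argument that restricts the support of $\nu$ to extreme points attaining the maximum; this is where the structure of \cref{lem:raw-extreme} is essential. The remaining steps are routine calculus.
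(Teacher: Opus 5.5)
Your proposal is correct and follows essentially the same route as the paper. Both arguments bound the convex integrand $B(a)-d=\|a-\one\|_2^2$ over the polytope $\cK_{d,\lambda}$ by its values at the extreme points given in \cref{lem:raw-extreme}, use strict convexity to confine optimal exchangeable laws to the maximizing subset orbits, and use the derivative numerator $d\bigl(d-(\lambda+1)s\bigr)$ to locate $s$. The one difference is cosmetic: the paper symmetrizes $\rho$ before bounding, whereas your pointwise bound shows this step is unnecessary for the upper bound.
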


\begin{proof}
Let $\rho$ be feasible. By \cref{prop:raw-cap}, its template law is supported on $\cK_{d,\lambda}$. Symmetrizing does not change $\tr\Sigma$ and preserves the support constraint, so by replacing $\rho$ with $\bar\rho$ we may assume that $\rho$ is exchangeable. Let $\nu$ be the corresponding law on the sorted chamber. Then
\[
\tr\Sigma(\rho)=\int (B(a)-d)\,\nu(da)\le \max_{a\in \cK_{d,\lambda}} (B(a)-d),
\]
with equality if and only if $\nu$ is supported on the set of maximizers of $B(a)-d$ over $\cK_{d,\lambda}$.

Since $B(a)-d$ is strictly convex on the convex polytope $\cK_{d,\lambda}$, every maximizer is an extreme point. By \cref{lem:raw-extreme}, each extreme point is one of the templates $a^{(s)}_\lambda$. For $a^{(s)}_\lambda$, a direct computation gives
\[
B(a^{(s)}_\lambda)-d
=
s\alpha_s(\lambda)^2+(d-s)\beta_s(\lambda)^2-d
=
\frac{d\,s(d-s)(\lambda-1)^2}{(d+s(\lambda-1))^2}
=
T_{d,\lambda}(s).
\]
Therefore
\[
\max_{a\in \cK_{d,\lambda}}(B(a)-d)=T_d^*(\lambda),
\]
which proves the exact formula for the optimum. The support characterization above shows that an exchangeable law is optimal exactly when its sorted-template law is supported on
\[
\{a^{(s)}_\lambda:s\in \cS_{d,\lambda}^*\},
\]
equivalently when it is a convex combination of the orbit laws $\rho_{s,\lambda}$ with $s\in \cS_{d,\lambda}^*$. The extreme exchangeable optimizers are therefore precisely the individual maximizing subset mechanisms.

It remains to locate the maximizing integer set. Consider the continuous function
\[
g_{d,\lambda}(s):=\frac{s(d-s)}{(d+s(\lambda-1))^2},
\qquad s\in [1,d-1].
\]
Since $T_{d,\lambda}(s)=d(\lambda-1)^2 g_{d,\lambda}(s)$, maximizing $T_{d,\lambda}$ is equivalent to maximizing $g_{d,\lambda}$. Differentiating,
\[
g'_{d,\lambda}(s)=\frac{d(d-(\lambda+1)s)}{(d+s(\lambda-1))^3}.
\]
Thus $g_{d,\lambda}$ is strictly increasing on
\[
\left[1,\frac{d}{\lambda+1}\right]
\]
and strictly decreasing on
\[
\left[\frac{d}{\lambda+1},d-1\right].
\]
Hence every maximizing integer lies among the two adjacent integers bracketing $d/(\lambda+1)$, namely
\[
\cS_{d,\lambda}^*\subseteq \left\{\left\lfloor \frac{d}{\lambda+1}\right\rfloor,\left\lceil \frac{d}{\lambda+1}\right\rceil\right\}\cap \{1,\dots,d-1\}.
\]
\end{proof}

\begin{theorem}[Exact canonical design under the raw local cap]\label{thm:raw-canonical}
Let $\lambda=e^{\varepsilon_0}>1$ and let $T_d^*(\lambda)$ be as in \cref{thm:raw-trace}. Then
\[
\inf_{\rho:\,W_\rho\ \varepsilon_0\text{-LDP}}
\sup_{\theta\in\Delta_d}
R^{\mathrm{iid}}_\theta(\rho)
=
\frac{(d-1)^2}{n\,T_d^*(\lambda)},
\]
and
\[
\inf_{\rho:\,W_\rho\ \varepsilon_0\text{-LDP}}
\sup_{\theta\in\Delta_d}
R^{\mathrm{fc}}_\theta(\rho)
=
\frac1n\left[
\frac{(d-1)^2}{T_d^*(\lambda)}
-\frac{d-1}{d}
\right].
\]
Let $\cS_{d,\lambda}^*$ be the maximizing set from \cref{thm:raw-trace}. Both values are attained by every exchangeable law whose sorted-template law is supported on $\{a^{(s)}_\lambda:s\in \cS_{d,\lambda}^*\}$; in particular every maximizing $s_*$-subset mechanism is optimal. Conversely, every extreme exchangeable optimizer is one of these subset-selection laws, and every exchangeable optimizer is their convex combination.
\end{theorem}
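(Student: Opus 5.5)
The plan follows the same route as \cref{thm:finite-canonical-chi}, with \cref{thm:raw-trace} replacing the trace cap. First I check that the raw local cap is preserved by symmetrization. By \cref{prop:raw-cap}, feasibility means $a_\#\rho(\cK_{d,\lambda})=1$. The set $\cK_{d,\lambda}$ is permutation invariant, and $a(P_\pi x)$ is a coordinate permutation of $a(x)$. Hence $\bar\rho$ is feasible whenever $\rho$ is. By \cref{cor:iid-symm} and \cref{thm:symmetrization}(ii), symmetrization does not increase the worst-case i.i.d.\ or fixed-composition risk (singular laws give $+\infty$ by \cref{rem:canonical-convention}). So both infima may be taken over exchangeable feasible laws.

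For exchangeable $\rho$, \cref{thm:symmetrization}(iii) gives $\Sigma(\rho)=\frac{\tr\Sigma(\rho)}{d-1}I_{d-1}$, so $\tr\Sigma(\rho)^{-1}=(d-1)^2/\tr\Sigma(\rho)$. \cref{prop:iid-risk} then gives the i.i.d.\ worst case $(d-1)^2/(n\tr\Sigma(\rho))$. For the fixed-composition risk, all vertex risks are equal by exchangeability, so \cref{prop:fixed-risk}(ii)--(iii) give $\frac1n[(d-1)^2/\tr\Sigma(\rho)-\frac{d-1}{d}]$. Both expressions are strictly decreasing functions of $\tr\Sigma(\rho)$. \cref{thm:raw-trace} bounds $\tr\Sigma(\rho)\le T_d^*(\lambda)$, which gives the two lower bounds.

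For attainment, take any exchangeable law whose sorted-template law is supported on $\{a^{(s)}_\lambda:s\in\cS^*_{d,\lambda}\}$. By \cref{thm:raw-trace} it is feasible and has trace $T_d^*(\lambda)>0$, so its covariance is invertible and isotropic and the bounds are equalities.

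For the converse, an exchangeable optimizer must have $\tr\Sigma(\rho)=T_d^*(\lambda)$, because the risk is strictly monotone in the trace. The support characterization in \cref{thm:raw-trace} then shows that such a law is a convex combination of the maximizing subset laws. Its extreme points are the individual laws $\rho_{s,\lambda}$ with $s\in\cS^*_{d,\lambda}$, since these orbit laws have distinct, non-decomposable sorted templates.

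The only point needing care is that the reduction to exchangeable laws uses the symmetrization inequalities, which hold for arbitrary $\rho$. The converse statement concerns only exchangeable optimizers, so no uniqueness claim about non-exchangeable optimizers is required.
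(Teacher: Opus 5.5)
Your proposal is correct and follows the paper's proof essentially step for step. You restrict to exchangeable laws via symmetrization, which preserves the ratio-cap support constraint of \cref{prop:raw-cap}; you write both worst-case risks as strictly decreasing functions of $\tr\Sigma(\rho)$ using isotropy; and you invoke \cref{thm:raw-trace} for both the bound and the characterization of exchangeable maximizers. Your extra checks only make explicit details the paper leaves implicit: the permutation invariance of $\cK_{d,\lambda}$, invertibility at attainment because $T_d^*(\lambda)>0$, and why the individual $\rho_{s,\lambda}$ are the extreme exchangeable optimizers.
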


\begin{proof}
By \cref{prop:raw-cap}, the raw local cap is preserved under symmetrization. Therefore \cref{thm:symmetrization,cor:iid-symm} allow us to restrict both optimization problems to exchangeable laws.

If $\rho$ is exchangeable, then \cref{thm:symmetrization}(iii) gives
\[
\Sigma(\rho)=\frac{\tr\Sigma(\rho)}{d-1}\,I_{d-1}.
\]
Hence \cref{prop:iid-risk} yields
\[
\sup_{\theta\in\Delta_d} R^{\mathrm{iid}}_\theta(\rho)
=
\frac{(d-1)^2}{n\,\tr\Sigma(\rho)},
\]
while \cref{prop:fixed-risk} gives
\[
\sup_{\theta\in\Delta_d} R^{\mathrm{fc}}_\theta(\rho)
=
\frac1n\left[
\frac{(d-1)^2}{\tr\Sigma(\rho)}
-\frac{d-1}{d}
\right].
\]
Both quantities are strictly decreasing functions of $\tr\Sigma(\rho)$, so minimizing the risks is equivalent to maximizing $\tr\Sigma(\rho)$ under the raw cap. Theorem~\ref{thm:raw-trace} identifies the maximum as $T_d^*(\lambda)$ and characterizes exactly which exchangeable laws attain it. This proves the claim.
\end{proof}

\begin{remark}[Scope of the optimizer characterization]
\cref{thm:raw-trace,thm:raw-canonical} characterize exchangeable optimizers. They do not classify all raw-cap optimizers; \cref{thm:symmetrization} only shows that an exchangeable optimizer always exists and suffices for the canonical risk problem.
\end{remark}

\begin{corollary}[Asymptotic minimax under a varying raw cap]\label{cor:raw-asymptotic}
Let $\lambda_n=e^{\varepsilon_{0,n}}>1$ and assume
\[
n\,T_d^*(\lambda_n)\longrightarrow\infty.
\]
Then
\[
\inf_{\rho:\,W_\rho\ \varepsilon_{0,n}\text{-LDP}}
\inf_{\widehat\theta}
\sup_{\theta\in\Delta_d}
\E_{\theta,\rho}\|\widehat\theta-\theta\|_2^2
=
\frac{(d-1)^2}{n\,T_d^*(\lambda_n)}\,(1+o(1)).
\]
\end{corollary}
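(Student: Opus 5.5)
The argument mirrors the proof of \cref{thm:main-design}: the upper bound comes from the exact canonical formula, and the lower bound from van Trees, with the $\chi_*$ trace cap replaced by the raw-cap trace maximum of \cref{thm:raw-trace}.

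\textbf{Upper bound.} For each $n$ pick $s_n\in\cS^*_{d,\lambda_n}$ and use the subset mechanism $\rho_{s_n,\lambda_n}$ with the canonical estimator. By \cref{thm:raw-canonical} its worst-case i.i.d.\ risk is exactly $(d-1)^2/(n\,T_d^*(\lambda_n))$. This already gives the upper bound with no $o(1)$ correction; in particular no assumption on $\lambda_n$ is needed for this direction.

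\textbf{Lower bound.} Let $m=d-1$ and parameterize $\theta(u)=\one/d+Hu$. Put the product prior $\pi_n$ built from $\varphi_{r_n}$ with $r_n:=(n\,T_d^*(\lambda_n))^{-1/4}$. Then $r_n\to0$, so $[-r_n,r_n]^m\subset\cH$ for large $n$, and
\[
J(\pi_n)=mJ_0/r_n^2=o\bigl(nT_d^*(\lambda_n)\bigr).
\]
As in \cref{thm:main-design}, reduce from $\widehat\theta$ to $\widehat u=H^\top(\widehat\theta-\one/d)$. For any $\varepsilon_{0,n}$-LDP law $\rho$, the Fisher information satisfies
\[
\tr I_u(\rho)\le n(1-R_d\sqrt m\,r_n)^{-1}\tr\Sigma(\rho),
\]
because $1+u^\top x\ge 1-R_d\sqrt m\,r_n$ uniformly on the prior support. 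Applying \cref{thm:raw-trace} gives $\tr\Sigma(\rho)\le T_d^*(\lambda_n)$. The multivariate van Trees bound (\cref{lem:vantrees}) then yields Bayes risk at least
\[
\frac{m^2}{nT_d^*(\lambda_n)(1+o(1))+o\bigl(nT_d^*(\lambda_n)\bigr)}=\frac{(d-1)^2}{nT_d^*(\lambda_n)}(1-o(1)).
\]
Since Bayes risk is at most minimax risk, this lower bound holds uniformly over all estimators and all feasible $\rho$.

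\textbf{Main obstacle.} The key check is that the only role of the budget in the lower bound is through $\tr\Sigma(\rho)\le T_d^*(\lambda_n)$. That bound holds for every feasible $\rho$, not only exchangeable ones: symmetrization preserves the raw cap by \cref{prop:raw-cap} and leaves the trace unchanged, and \cref{thm:raw-trace} is stated as a supremum over all feasible laws. The remaining step is to confirm that the error terms are $o(1)$ relative to the main term. This follows from $r_n\to0$ together with $J(\pi_n)/(nT_d^*)=m J_0 (nT_d^*)^{-1/2}\to0$, both of which hold because $nT_d^*(\lambda_n)\to\infty$. The argument uses no requirement that $\lambda_n$ stay bounded, nor that it tend to $1$.
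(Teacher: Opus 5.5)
Your proposal is correct and is essentially the paper's own proof. The upper bound is read off from \cref{thm:raw-canonical}. The lower bound reruns the van Trees argument of \cref{thm:main-design} with $r_n=(n\,T_d^*(\lambda_n))^{-1/4}$, replacing the $\chi_*$ trace cap by the bound $\tr\Sigma(\rho)\le T_d^*(\lambda_n)$ from \cref{thm:raw-trace}. Your extra checks, that this trace bound holds for non-exchangeable feasible $\rho$ and that $J(\pi_n)/(n\,T_d^*(\lambda_n))\to 0$, just spell out details the paper leaves implicit.
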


\begin{proof}
The upper bound is provided by \cref{thm:raw-canonical}. For the lower bound, repeat the proof of \cref{thm:main-design} with the trace-cap estimate replaced by the exact raw-cap estimate
\[
\tr\Sigma(\rho)\le T_d^*(\lambda_n),
\]
which follows from \cref{thm:raw-trace}. The same local prior with
\[
r_n:=\bigl(n\,T_d^*(\lambda_n)\bigr)^{-1/4}
\]
satisfies $r_n\to 0$ by assumption, and the remainder of the van Trees argument is unchanged.
\end{proof}

\subsection{Numerical illustration}

For the grid $d\in \{3,5,10,20\}$ and $\varepsilon_0\in \{0.5,1,2\}$, Table~\ref{tab:raw-phase} in Appendix~\ref{app:tables} reports the maximizing subset size, the exact raw-cap trace optimum $T_d^*(e^{\varepsilon_0})$, and the exact finite-$n$ canonical risk constants from \cref{thm:raw-canonical}. The actual i.i.d.\ and fixed-composition risks are the tabulated constants divided by $n$. Table~\ref{tab:raw-cap} compares the exact raw-cap trace maximum with the low-budget trace cap $K_d\,\chi_*(\rho)$ from Section~\ref{sec:tracecap} and a crude bound. The exact raw-cap trace lies noticeably below the low-budget $\chi_*$ cap once $\varepsilon_0$ is no longer small, whereas the crude bound is uniformly far too large.

\section{Discussion}\label{sec:discussion}

The anchored affine law now gives a complete exact language for three tightly connected questions:
\begin{enumerate}[label=\textup{(\roman*)}]
\item exact pairwise privacy envelopes under shuffling, together with a rigidity converse;
\item exact canonical design under the canonical pairwise $\chi_*$ budget, both at finite $n$ and in the low-budget asymptotic minimax regime;
\item exact canonical design under the raw local $\varepsilon_0$ cap, with an explicit subset-selection phase diagram.
\end{enumerate}

Within this scope the theory is closed. The anchored law $\rho$ is the exact channel carrier, projective fibers reduce pairwise privacy to scalar shadows, the covariance matrix $\Sigma(\rho)$ governs the mechanism-calibrated canonical estimator, and the template orbit curve $\Gamma_d$ governs the full $\chi_*$ frontier. The paper also resolves two structural questions left open in earlier drafts: the exact finite-$n$ canonical optimum under the $\chi_*$ budget is explicit, and the global $\chi_*$ frontier is always realized by a mixture of at most two exchangeable orbit laws. In particular, the two-orbit theorem closes the global $\chi_*$-frontier problem that remained open in \cite{shvets-growing} beyond the low-budget tangency point.

The raw local cap problem is now equally explicit at the canonical level. \emph{Augmented randomized response is not the raw-cap optimizer.} The correct optimizer is an exchangeable subset-selection law, with maximizing subset size determined by the phase set
\[
\cS_{d,\lambda}^*=\argmax_{1\le s\le d-1}\frac{d\,s(d-s)(\lambda-1)^2}{(d+s(\lambda-1))^2},
\qquad \lambda=e^{\varepsilon_0}.
\]
Thus the raw-cap and $\chi_*$-budget design problems are genuinely different, even though both become exact moment problems in the anchored gauge. The subset-selection mechanism class was introduced by Wang, Blocki, Li, and Jha \cite{wang-ldp}, and Ye and Barg \cite{ye-barg} showed that the optimal subset size is asymptotically one of the nearest integers to $d/(e^{\varepsilon_0}+1)$. The present contribution is the exact anchored canonical formulation---explicit finite-$n$ risk formulas, the trace-cap integration, and the full phase-diagram packaging---not the first identification of the optimizer class.

A separate comparison is with Takagi and Liew \cite{takagi-liew}. The two programs address overlapping conceptual territory---both seek low-dimensional mechanism summaries that control shuffled privacy---but differ in their main objects and results. Takagi--Liew work with blanket divergence, extract a one-parameter shuffle index, and obtain asymptotic upper and lower bands beyond pure local differential privacy, together with an FFT-based finite-$n$ accountant. The present paper works with anchored laws, exact finite-$n$ universal envelopes for all convex $f$-divergences, rigidity converses, and canonical design frontiers. Conversely, the present paper does not attempt to recover the broader non-pure-LDP asymptotic regime of \cite{takagi-liew} from the anchored gauge. The two programs address related privacy-summary questions at a different level of generality and with different technical objects.

Concurrently and independently, Pan \cite{pan-strict} proved strict optimality of subset-selection mechanisms for frequency estimation under local differential privacy from a communication-efficiency perspective, building on the Wang--Ye--Barg line \cite{wang-ldp,ye-barg}. Pan's strict optimality result in the pure local model directly implies that the optimizer class identified in Section~\ref{sec:raw} is not a new discovery; the present contribution is the exact anchored canonical formulation---explicit finite-$n$ risk formulas, the trace-cap integration, and the connection with the shuffle-privacy envelope theory developed in earlier sections.

The main limitation remains asymptotic. The anchored law and its projective fibers do \emph{not} subsume the author's fixed-composition non-Gaussian asymptotics from \cite{shvets-part2,shvets-part3}. For interior base points $h\in \operatorname{int}\cH$, \cref{prop:fiber-props} shows that the fiber $\eta_h$ is bounded. Therefore any asymptotic theory built only from the i.i.d.\ product family $q_h^{\otimes n}$ and the scaled fibers $Y_h/r_n$ is necessarily Gaussian in the interior. The Skellam and hybrid regimes of the trilogy must therefore live on a different carrier: the fixed-composition quotient geometry and its conditional likelihood field.

In summary, anchored laws are the right exact coordinate system for privacy and canonical design; the trilogy's quotient geometry remains the right coordinate system for the non-Gaussian fixed-composition asymptotics.

Several questions remain open.
\begin{enumerate}[label=\textup{(\arabic*)}]
\item Give an explicit closed form for the upper concave envelope of $\Gamma_d$ in the high-budget $\chi_*$ regime.
\item Extend the raw-cap and $\chi_*$ symmetrization arguments to fixed-message multi-message shuffle protocols.
\item Determine whether the exact finite-$n$ global minimax constant under the $\chi_*$ budget, over all estimators, coincides with the canonical finite-$n$ formula proved here.
\item Develop the asymptotic unification of Parts~I--III in the correct quotient geometry, rather than in the anchored gauge.
\end{enumerate}

\appendix

\section{A multivariate van Trees bound}\label{app:vantrees}

\begin{lemma}[Coordinatewise van Trees]\label{lem:vantrees}
Let $U=(U_1,\dots,U_p)$ have prior density $\pi$ on a rectangle $\prod_{j=1}^p[a_j,b_j]$, where $\pi$ is continuously differentiable, strictly positive in the interior, and vanishes on the boundary faces. Let $Y$ be an observation with conditional density $p_u(y)$, and let $\widehat U(Y)$ be any estimator. Assume the score functions are square-integrable and differentiation under the integral sign is justified. Then
\[
\int \E_u\|\widehat U-U\|_2^2\,\pi(u)\,du
\ge
\frac{p^2}{\int \tr I(u)\,\pi(u)\,du+J(\pi)},
\]
where
\[
I(u)=\E_u\bigl[(\nabla_u\log p_u(Y))(\nabla_u\log p_u(Y))^\top\bigr],
\qquad
J(\pi)=\int \frac{\|\nabla \pi(u)\|_2^2}{\pi(u)}\,du.
\]
\end{lemma}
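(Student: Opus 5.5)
The plan is to prove one scalar van Trees inequality per coordinate and then aggregate the $p$ inequalities by Cauchy--Schwarz. Throughout, write $\E$ for expectation under the joint law of $(U,Y)$, with density $\pi(u)p_u(y)$.

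\textbf{Coordinatewise bound.} Fix $j\in\{1,\dots,p\}$ and define
\[
\psi_j(u,y):=\partial_{u_j}\log\bigl(\pi(u)p_u(y)\bigr)=\partial_{u_j}\log p_u(y)+\partial_{u_j}\log\pi(u).
\]
The key identity is
\[
\E\bigl[(\widehat U_j-U_j)\,\psi_j\bigr]=1.
\]
To establish it, write the left side as $\int\!\!\int(\widehat U_j(y)-u_j)\,\partial_{u_j}\bigl(\pi(u)p_u(y)\bigr)\,du\,dy$ and integrate by parts in $u_j$ over $[a_j,b_j]$. The boundary terms vanish because $\pi$ vanishes on the faces $u_j=a_j,b_j$. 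The derivative $\partial_{u_j}(\widehat U_j(y)-u_j)=-1$ then produces $+\int\!\!\int\pi p_u\,du\,dy=1$. This step is where the assumed differentiability and interchange of integrals are used.

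Next I would compute the second moment of $\psi_j$. The cross term $\E\bigl[\partial_{u_j}\log p_U(Y)\,\partial_{u_j}\log\pi(U)\bigr]$ vanishes, since the score has conditional mean zero given $U=u$ (differentiate $\int p_u\,dy=1$). Hence
\[
\E\psi_j^2=\int I_{jj}(u)\pi(u)\,du+J_j,\qquad J_j:=\int\frac{(\partial_j\pi)^2}{\pi}\,du.
\]
Cauchy--Schwarz applied to the identity then gives
\[
1\le \E\bigl[(\widehat U_j-U_j)^2\bigr]\,\E\psi_j^2 .
\]

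\textbf{Aggregation.} Sum over $j$ and set $e_j:=\E(\widehat U_j-U_j)^2$ and $f_j:=\E\psi_j^2$, so that $e_jf_j\ge1$. By Cauchy--Schwarz,
\[
p=\sum_j 1\le\sum_j\sqrt{e_jf_j}\le\Bigl(\sum_j e_j\Bigr)^{1/2}\Bigl(\sum_j f_j\Bigr)^{1/2},
\]
which gives $\sum_j e_j\ge p^2/\sum_j f_j$. Finally, $\sum_j f_j=\int\tr I(u)\,\pi(u)\,du+J(\pi)$, so the bound is exactly the claimed inequality.

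\textbf{Main obstacle.} The only delicate point is rigor in the integration by parts. If $\E\|\widehat U\|^2=\infty$ the inequality is trivial, so I may assume $\widehat U$ is square-integrable. I would then use Fubini, justified by Cauchy--Schwarz and the square-integrability of the scores, to integrate in $u_j$ first for fixed $y$ and fixed remaining coordinates. Where $\pi=0$ on the boundary, the term $\partial_j\log\pi$ should be interpreted as $\partial_j\pi/\pi$ on $\{\pi>0\}$; the finiteness of $J(\pi)$ (the convention being that the bound is vacuous otherwise) is what makes $\psi_j$ square-integrable.
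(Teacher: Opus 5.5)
Your proposal is correct and matches the paper's proof. Both use the same per-coordinate identity $\E[(\widehat U_j-U_j)\psi_j]=1$, obtained by integrating by parts in $u_j$ with boundary terms vanishing because $\pi=0$ on the faces, then Cauchy--Schwarz, and the vanishing cross term from $\E[\partial_j\log p_U(Y)\mid U]=0$. The only difference is cosmetic: to aggregate, you apply Cauchy--Schwarz to $\sum_j\sqrt{e_jf_j}$, while the paper uses $e_j\ge 1/f_j$ followed by the harmonic--arithmetic mean inequality $\sum_j 1/f_j\ge p^2/\sum_j f_j$; the two are equivalent.
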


\begin{proof}
Fix a coordinate $j\in\{1,\dots,p\}$. Under the joint law $\pi(u)p_u(y)\,du\,dy$,
\[
\E\Bigl[(\widehat U_j-U_j)\bigl(\partial_j\log p_U(Y)+\partial_j\log \pi(U)\bigr)\Bigr]=1.
\]
Indeed,
\[
\begin{aligned}
&\int \!\!\int (\widehat u_j(y)-u_j)\bigl(\partial_j p_u(y)\,\pi(u)+p_u(y)\partial_j\pi(u)\bigr)\,dy\,du\\
&\qquad = -\int \!\!\int \partial_j(\widehat u_j(y)-u_j)\,p_u(y)\pi(u)\,dy\,du
=\int\!\!\int p_u(y)\pi(u)\,dy\,du=1,
\end{aligned}
\]
where integration by parts is justified by the boundary condition $\pi=0$ on the boundary faces.

By Cauchy--Schwarz,
\[
\E[(\widehat U_j-U_j)^2]\ge \frac{1}{\E[(S_j+Q_j)^2]},
\]
where
\[
S_j:=\partial_j\log p_U(Y),\qquad Q_j:=\partial_j\log \pi(U).
\]
Since $\E[S_j\mid U]=0$, the cross term vanishes and
\[
\E[(S_j+Q_j)^2]=\int I_{jj}(u)\,\pi(u)\,du+\int \frac{(\partial_j\pi(u))^2}{\pi(u)}\,du.
\]
Summing over $j$ and using the harmonic-arithmetic mean inequality,
\[
\sum_{j=1}^p \E[(\widehat U_j-U_j)^2]
\ge \sum_{j=1}^p \frac{1}{A_j+B_j}
\ge \frac{p^2}{\sum_j(A_j+B_j)},
\]
where
\[
A_j:=\int I_{jj}(u)\,\pi(u)\,du,
\qquad
B_j:=\int \frac{(\partial_j\pi(u))^2}{\pi(u)}\,du.
\]
Since $\sum_j A_j=\int \tr I(u)\,\pi(u)\,du$ and $\sum_j B_j=J(\pi)$, the result follows.
\end{proof}

\section{Binary randomized response and the endpoint law}\label{app:brr}

Binary randomized response with local parameter $\varepsilon_0$ is the two-row channel
\[
W_{\mathrm{BRR}}=
\begin{pmatrix}
\dfrac{e^{\varepsilon_0}}{1+e^{\varepsilon_0}} & \dfrac{1}{1+e^{\varepsilon_0}}\\[1.2ex]
\dfrac{1}{1+e^{\varepsilon_0}} & \dfrac{e^{\varepsilon_0}}{1+e^{\varepsilon_0}}
\end{pmatrix}.
\]
Under row $1$, the likelihood ratio of row $2$ against row $1$ takes the values $e^{-\varepsilon_0}$ and $e^{\varepsilon_0}$ with exactly the weights defining the endpoint law $\mu^\star$ from Section~\ref{sec:privacy}. This is the only fact from the binary case used in the proof of the universal envelope.

\section{Extreme points of moment sets and of \texorpdfstring{$\fM_d$}{Md}}\label{app:extreme}

The support bound below is classical; see Winkler \cite{winkler-moment} for general extreme-point results on moment sets. We include a short proof specialized to our setting for completeness.

\begin{theorem}[Support theorem for linear moment sets]\label{thm:moment-support}
Let $K$ be a separable metrizable space equipped with its Borel $\sigma$-field, let $f_1,\dots,f_m:K\to \R$ be measurable, and fix $c=(c_1,\dots,c_m)\in \R^m$. Define
\[
\cP_f(c):=\Bigl\{\nu\in \cP(K):\ \int |f_r|\,d\nu<\infty,\ \int f_r\,d\nu=c_r\ \text{for }r=1,\dots,m\Bigr\}.
\]
Then every extreme point of $\cP_f(c)$ is supported on at most $m+1$ points.

Conversely, if
\[
\nu=\sum_{k=1}^N p_k\,\delta_{x_k},
\qquad
1\le N\le m+1,
\qquad
p_k>0,
\]
and the vectors
\[
\bigl(1,f_1(x_k),\dots,f_m(x_k)\bigr)\in \R^{m+1},
\qquad k=1,\dots,N,
\]
are linearly independent, then $\nu$ is an extreme point of $\cP_f(c)$.
\end{theorem}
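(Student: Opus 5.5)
The plan has two halves: a support-reduction argument for the forward direction and a linear-algebra argument for the converse.

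\emph{Forward direction (at most $m+1$ support points).} Let $\nu$ be an extreme point of $\cP_f(c)$. We argue by contradiction: suppose $\nu$ is not supported on at most $m+1$ points. Then we can find $m+2$ pairwise disjoint Borel sets $E_1,\dots,E_{m+2}$, each of positive $\nu$-measure. This is possible because $K$ is separable metrizable, so any measure whose support contains $m+2$ distinct points can be split by small disjoint balls around them. If the support has fewer points but $\nu$ is not purely atomic, we can instead split a nonatomic part. For each $k$ let $\nu_k:=\nu(\cdot\cap E_k)/\nu(E_k)$, and form the vectors
\[
v_k:=\Bigl(1,\int f_1\,d\nu_k,\dots,\int f_m\,d\nu_k\Bigr)\in\R^{m+1},
\]
which are finite by integrability. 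Since there are $m+2$ of them, there are coefficients $c_k$, not all zero, with $\sum_k c_k v_k=0$. Now set
\[
\sigma:=\sum_k c_k\,\nu(E_k)\,\nu_k,
\]
a signed measure satisfying $|\sigma|\le C\nu$. Its total mass is zero and all its $f_r$-moments vanish. For small $t>0$ both $\nu\pm t\sigma$ are probability measures in $\cP_f(c)$, and they are distinct because $\sigma\neq0$. So $\nu$ is their midpoint, contradicting extremality.

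\emph{Converse.} Suppose $\nu=\frac12(\nu'+\nu'')$ with $\nu',\nu''\in\cP_f(c)$. Then $\nu',\nu''\le 2\nu$, so both are absolutely continuous with respect to $\nu$ and hence supported on $\{x_1,\dots,x_N\}$. Write $\nu'=\sum_k p'_k\delta_{x_k}$. The moment and mass constraints say
\[
\sum_k (p'_k-p_k)\bigl(1,f_1(x_k),\dots,f_m(x_k)\bigr)=0.
\]
By linear independence of these vectors, $p'=p$. Hence $\nu'=\nu''=\nu$, and $\nu$ is extreme.

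The main obstacle is the measure-theoretic step in the forward direction: producing $m+2$ disjoint positive-mass pieces whenever $\nu$ is not supported on at most $m+1$ points. I would handle it via the topological support, which is well defined on separable metrizable spaces. If the support has at least $m+2$ points, use small disjoint open neighbourhoods of them, each of which has positive mass. Otherwise $\nu$ is carried by finitely many atoms, and the support statement already holds.
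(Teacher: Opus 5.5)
Your proof is correct and follows essentially the same route as the paper. You take disjoint open neighbourhoods of $m+2$ support points and use a linear dependence among the $(m+1)$-dimensional mass/moment vectors to build a signed perturbation $\sigma$ with $|\sigma|\le C\nu$. For the converse, you note that any decomposition is carried by $\{x_1,\dots,x_N\}$, so linear independence pins down the weights; your normalised vectors and midpoint formulation are only cosmetic variants. One small tidy-up: the case ``support has fewer points but $\nu$ is not purely atomic'' is vacuous. On a separable metrizable space the topological support has full $\nu$-measure, because its complement is a countable union of open null sets, and this is exactly what your final paragraph relies on.
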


\begin{proof}
Suppose first that $\nu\in \cP_f(c)$ is extreme and that its support contains at least $m+2$ distinct points $x_1,\dots,x_{m+2}$. Because $K$ is metrizable, we may choose pairwise disjoint open neighborhoods $E_1,\dots,E_{m+2}$ of these points such that
\[
\nu(E_k)>0
\qquad (k=1,\dots,m+2).
\]
For each $k$, define the vector
\[
v_k:=\left(\nu(E_k),\ \int_{E_k} f_1\,d\nu,\ \dots,\ \int_{E_k} f_m\,d\nu\right)\in \R^{m+1}.
\]
Since there are $m+2$ vectors in $\R^{m+1}$, they are linearly dependent. Thus there exist real numbers $\alpha_1,\dots,\alpha_{m+2}$, not all zero, such that
\[
\sum_{k=1}^{m+2}\alpha_k v_k=0.
\]
Equivalently,
\[
\sum_{k=1}^{m+2}\alpha_k \nu(E_k)=0,
\qquad
\sum_{k=1}^{m+2}\alpha_k \int_{E_k} f_r\,d\nu=0
\quad (r=1,\dots,m).
\]

Define the signed measure
\[
\sigma:=\sum_{k=1}^{m+2}\alpha_k\,\nu|_{E_k}.
\]
Then
\[
\sigma(K)=0,
\qquad
\int f_r\,d\sigma=0
\quad (r=1,\dots,m).
\]
Choose
\[
\varepsilon:=\min_{\alpha_k\neq 0}\frac{1}{2|\alpha_k|}>0.
\]
Since the sets $E_k$ are disjoint, the measures
\[
\nu_\pm:=\nu\pm \varepsilon \sigma
\]
are both nonnegative. Indeed, on each $E_k$ the Radon--Nikodym factor is $1\pm \varepsilon\alpha_k\ge 1/2$, and outside $\bigcup_k E_k$ the measure is unchanged. Moreover,
\[
\nu_\pm(K)=1,
\qquad
\int f_r\,d\nu_\pm=c_r
\quad (r=1,\dots,m),
\]
so $\nu_\pm\in \cP_f(c)$. Since $\sigma\neq 0$, one has $\nu_+\neq \nu_-$, and
\[
\nu=\frac12(\nu_++\nu_-),
\]
contradicting extremality. Hence every extreme point has support size at most $m+1$.

Conversely, let
\[
\nu=\sum_{k=1}^N p_k\,\delta_{x_k}
\]
with linearly independent moment vectors as in the statement, and suppose
\[
\nu=t\nu_1+(1-t)\nu_2,
\qquad
0<t<1,
\qquad
\nu_1,\nu_2\in \cP_f(c).
\]
Since $\nu$ is supported on $\{x_1,\dots,x_N\}$, both $\nu_1$ and $\nu_2$ are also supported on this set: if $B$ is disjoint from $\{x_1,\dots,x_N\}$, then
\[
0=\nu(B)=t\nu_1(B)+(1-t)\nu_2(B)
\]
forces $\nu_1(B)=\nu_2(B)=0$.

Write
\[
\nu_\ell=\sum_{k=1}^N q_k^{(\ell)}\,\delta_{x_k},
\qquad \ell=1,2.
\]
The constraints $\nu_\ell\in \cP_f(c)$ are exactly
\[
\sum_{k=1}^N q_k^{(\ell)}=1,
\qquad
\sum_{k=1}^N q_k^{(\ell)} f_r(x_k)=c_r
\quad (r=1,\dots,m).
\]
Since the vectors $(1,f_1(x_k),\dots,f_m(x_k))$ are linearly independent, this linear system has a unique solution. But the weights $p_k$ solve it, so
\[
q_k^{(1)}=q_k^{(2)}=p_k
\qquad (k=1,\dots,N).
\]
Thus $\nu_1=\nu_2=\nu$, proving that $\nu$ is extreme.
\end{proof}

\begin{corollary}[Extreme points of $\fM_d$]\label{cor:extreme-Md}
A measure $\rho\in \fM_d$ is an extreme point of $\fM_d$ if and only if it is finitely supported on an affinely independent subset of $\cX_d$. In particular, every extreme point of $\fM_d$ has support size at most $d$.
\end{corollary}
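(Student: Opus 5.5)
We apply \cref{thm:moment-support} with $K=\cX_d$ (compact, hence separable metrizable), $m=d-1$, the coordinate functions $f_r(x)=x_r$ for $r=1,\dots,d-1$, and $c=0$. With these choices $\cP_f(0)=\fM_d$: integrability is automatic because $\cX_d$ is bounded (\cref{lem:polytope-template}). The moment vector attached to a point $x$ is then $(1,x)\in\R^d$. A finite family of such vectors $(1,x_k)$ is linearly independent exactly when the points $x_k$ are affinely independent. This standard equivalence is the whole bridge between the two statements.

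\emph{Necessity.} Let $\rho$ be extreme in $\fM_d$. The first half of \cref{thm:moment-support} gives that $\rho=\sum_{k=1}^N p_k\delta_{x_k}$ with $N\le d$, distinct $x_k$ and $p_k>0$. We still have to show that the $x_k$ are affinely independent. Suppose they are not. Then there exist coefficients $\alpha_k$, not all zero, with
\[
\sum_k\alpha_k=0,\qquad \sum_k\alpha_k x_k=0.
\]
Put $\sigma:=\sum_k\alpha_k\delta_{x_k}$ and $\rho_\pm:=\rho\pm\varepsilon\sigma$. For small $\varepsilon>0$ both $\rho_\pm$ are nonnegative probability measures on $\cX_d$ with mean zero, and they are distinct. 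Since $\rho=\tfrac12(\rho_++\rho_-)$, this contradicts extremality. This is the same perturbation argument already used in the proof of \cref{thm:moment-support}, now restricted to the atoms.

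\emph{Sufficiency.} Let $\rho=\sum_k p_k\delta_{x_k}\in\fM_d$ have affinely independent support with $p_k>0$. Then the vectors $(1,x_k)$ are linearly independent, so the converse half of \cref{thm:moment-support} applies directly and $\rho$ is extreme. The final clause follows because an affinely independent subset of $\R^{d-1}$ has at most $d$ points.

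There is no real obstacle here. The only care needed is in the necessity step: a support of size at most $d$ does not by itself rule out affine dependence (for example, collinear atoms), so the separate perturbation argument above is genuinely required.
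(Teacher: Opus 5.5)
Your proposal is correct and matches the paper's proof essentially step for step. In both, the support theorem (\cref{thm:moment-support}) with $f_r(x)=x_r$ gives finite support. The equivalence between linear independence of the vectors $(1,x_k)$ and affine independence of the $x_k$ then feeds the converse criterion for sufficiency, and necessity is handled by the same atom-level perturbation $\rho_\pm=\sum_k(p_k\pm\varepsilon c_k)\delta_{x_k}$, which the paper includes for the reason you point out.
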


\begin{proof}
Apply \cref{thm:moment-support} with $m=d-1$, $K=\cX_d$, and
\[
f_r(x)=x_r,\qquad r=1,\dots,d-1,
\]
together with the moment condition
\[
\int x\,\rho(dx)=0.
\]
The support bound becomes $d$ points. For a finitely supported measure
\[
\rho=\sum_{k=1}^N p_k\,\delta_{x_k},\qquad p_k>0,
\]
the vectors
\[
(1,f_1(x_k),\dots,f_{d-1}(x_k))=(1,x_k)\in \R^d
\]
are linearly independent exactly when the points $x_1,\dots,x_N$ are affinely independent. The converse criterion in \cref{thm:moment-support} therefore shows that affine independence implies extremality.

For necessity, suppose $\rho=\sum_{k=1}^N p_k\,\delta_{x_k}\in\fM_d$ with $p_k>0$ and the points $x_1,\dots,x_N$ are affinely dependent. Then there exist real numbers $c_1,\dots,c_N$, not all zero, such that
\[
\sum_{k=1}^N c_k=0,\qquad \sum_{k=1}^N c_k x_k=0.
\]
For all sufficiently small $\varepsilon>0$, the measures
\[
\rho_\pm:=\sum_{k=1}^N (p_k\pm\varepsilon c_k)\,\delta_{x_k}
\]
are distinct probability measures in $\fM_d$, and $\rho=\tfrac12(\rho_++\rho_-)$. Hence $\rho$ is not extreme. Therefore every extreme point of $\fM_d$ is supported on an affinely independent set.
\end{proof}

\section{Numerical tables}\label{app:tables}

Table~\ref{tab:raw-phase} reports the exact subset-selection phase diagram and canonical risk constants for a grid of alphabet sizes and local privacy levels. Table~\ref{tab:raw-cap} compares the exact raw-cap trace maximum with the low-budget trace cap
\[
K_d\,\chi_*(\rho),\qquad K_d:=\frac{d(d-1)}{d+2\sqrt{d-1}},
\]
and the crude bound from the trivial pointwise inequality $(B(a)-d)/(A(a)B(a)-d^2)\le 1$.

\begin{table}[ht]
\centering
\small
\caption{Exact subset-selection phase diagram and canonical risk constants. The listed values are the dimensionless products $nR_{\max}^{\mathrm{iid}}$ and $nR_{\max}^{\mathrm{fc}}$; the actual risks equal these constants divided by $n$. Numerical values are rounded to four decimal places.}
\label{tab:raw-phase}
\begin{tabular}{@{}cccccc@{}}
\toprule
$d$ & $\varepsilon_0$ & $\arg\max s$ & $T_d^*(e^{\varepsilon_0})$ & $nR_{\max}^{\mathrm{iid}}$ & $nR_{\max}^{\mathrm{fc}}$ \\
\midrule
3 & 0.5 & 1 & 0.1897 & 21.0899 & 20.4232 \\
3 & 1 & 1 & 0.7957 & 5.0268 & 4.3601 \\
3 & 2 & 1 & 2.7783 & 1.4397 & 0.7731 \\
\addlinespace
5 & 0.5 & 2 & 0.3184 & 50.2587 & 49.4587 \\
5 & 1 & 1 & 1.3083 & 12.2298 & 11.4298 \\
5 & 2 & 1 & 6.2940 & 2.5421 & 1.7421 \\
\addlinespace
10 & 0.5 & 4 & 0.6367 & 127.2172 & 126.3172 \\
10 & 1 & 3 & 2.6996 & 30.0041 & 29.1041 \\
10 & 2 & 1 & 13.6775 & 5.9221 & 5.0221 \\
\addlinespace
20 & 0.5 & 8 & 1.2734 & 283.4902 & 282.5402 \\
20 & 1 & 5 & 5.4176 & 66.6344 & 65.6844 \\
20 & 2 & 2 & 27.3551 & 13.1968 & 12.2468 \\
\bottomrule
\end{tabular}
\end{table}

\begin{table}[ht]
\centering
\small
\caption{Exact raw-cap trace maximum versus the low-budget trace cap and a crude bound, all evaluated at a maximizing subset-selection law. Numerical values are rounded to four decimal places.}
\label{tab:raw-cap}
\begin{tabular}{@{}cccccc@{}}
\toprule
$d$ & $\varepsilon_0$ & $\arg\max s$ & exact $T_d^*(e^{\varepsilon_0})$ & $K_d\chi_*$ & $d(d-1)\chi_*$ \\
\midrule
3 & 0.5 & 1 & 0.1897 & 0.1907 & 1.1118 \\
3 & 1 & 1 & 0.7957 & 0.8812 & 5.1358 \\
3 & 2 & 1 & 2.7783 & 5.0813 & 29.6160 \\
\addlinespace
5 & 0.5 & 2 & 0.3184 & 0.3579 & 3.2208 \\
5 & 1 & 1 & 1.3083 & 1.3359 & 12.0229 \\
5 & 2 & 1 & 6.2940 & 9.0427 & 81.3841 \\
\addlinespace
10 & 0.5 & 4 & 0.6367 & 0.8052 & 12.8832 \\
10 & 1 & 3 & 2.6996 & 3.4977 & 55.9634 \\
10 & 2 & 1 & 13.6775 & 15.9062 & 254.4990 \\
\addlinespace
20 & 0.5 & 8 & 1.2734 & 1.7944 & 51.5326 \\
20 & 1 & 5 & 5.4176 & 7.3780 & 211.8811 \\
20 & 2 & 2 & 27.3551 & 35.4483 & 1017.9961 \\
\bottomrule
\end{tabular}
\end{table}


\begin{thebibliography}{99}

\bibitem{shvets-part1}
A.~Shvets.
\newblock Universal shuffle asymptotics: Sharp privacy analysis in the Gaussian regime.
\newblock \texttt{arXiv:2602.09029}, 2026.

\bibitem{shvets-part2}
A.~Shvets.
\newblock Universal shuffle asymptotics, Part~II: Non-Gaussian limits for shuffle privacy --- Poisson, Skellam, and compound-Poisson regimes.
\newblock \texttt{arXiv:2603.10073}, 2026.

\bibitem{shvets-part3}
A.~Shvets.
\newblock Universal shuffle asymptotics, Part~III: Dominant-block quotient geometry and hybrid Gaussian--compound-Poisson limits in finite-alphabet shuffle privacy.
\newblock \texttt{arXiv:2603.13407}, 2026.

\bibitem{shvets-growing}
A.~Shvets.
\newblock Growing alphabets do not automatically amplify shuffle privacy: obstruction, estimation bounds, and optimal mechanism design.
\newblock \texttt{arXiv:2603.18080}, 2026.

\bibitem{balle-blanket}
B.~Balle, J.~Bell, A.~Gasc\'on, and K.~Nissim.
\newblock The privacy blanket of the shuffle model.
\newblock In \emph{Advances in Cryptology --- CRYPTO 2019}, volume 11693 of \emph{Lecture Notes in Computer Science}, pages 638--667. Springer, 2019.

\bibitem{feldman-clones}
V.~Feldman, A.~McMillan, and K.~Talwar.
\newblock Hiding among the clones: A simple and nearly optimal analysis of privacy amplification by shuffling.
\newblock In \emph{Proceedings of the 62nd IEEE Symposium on Foundations of Computer Science}, pages 954--964. IEEE, 2021.

\bibitem{feldman-stronger}
V.~Feldman, A.~McMillan, and K.~Talwar.
\newblock Stronger privacy amplification by shuffling for R\'enyi and approximate differential privacy.
\newblock In \emph{Proceedings of the 2023 Annual ACM-SIAM Symposium on Discrete Algorithms}, pages 4966--4981. SIAM, 2023.

\bibitem{girgis-rdp}
A.~M. Girgis, D.~Data, S.~Diggavi, A.~T. Suresh, and P.~Kairouz.
\newblock On the R\'enyi differential privacy of the shuffle model.
\newblock In \emph{Proceedings of the 2021 ACM SIGSAC Conference on Computer and Communications Security}, pages 2321--2341. ACM, 2021.

\bibitem{koskela-shuffle}
A.~Koskela, A.~Heikkil\"a, and A.~Honkela.
\newblock Numerical accounting in the shuffle model of differential privacy.
\newblock \emph{Transactions on Machine Learning Research}, 2023.

\bibitem{wang-unified}
S.~Wang, Y.~Peng, J.~Li, Z.~Wen, Z.~Li, S.~Yu, D.~Wang, and W.~Yang.
\newblock Privacy amplification via shuffling: Unified, simplified, and tightened.
\newblock \emph{Proceedings of the VLDB Endowment}, 17(8):1870--1883, 2024.

\bibitem{erlingsson-shuffle}
\'U.~Erlingsson, V.~Feldman, I.~Mironov, A.~Raghunathan, K.~Talwar, and A.~Thakurta.
\newblock Amplification by shuffling: From local to central differential privacy via anonymity.
\newblock In \emph{Proceedings of the Thirtieth Annual ACM-SIAM Symposium on Discrete Algorithms}, pages 2468--2479. SIAM, 2019.

\bibitem{cheu-shuffle}
A.~Cheu, A.~Smith, J.~Ullman, D.~Zeber, and M.~Zhilyaev.
\newblock Distributed differential privacy via shuffling.
\newblock In \emph{Advances in Cryptology --- EUROCRYPT 2019, Part~I}, volume 11476 of \emph{Lecture Notes in Computer Science}, pages 375--403. Springer, 2019.

\bibitem{takagi-liew}
S.~Takagi and S.~P. Liew.
\newblock Analysis of shuffling beyond pure local differential privacy.
\newblock \texttt{arXiv:2601.19154}, 2026. Accepted to PODS 2026.

\bibitem{pan-strict}
M.~Pan.
\newblock Strict optimality of frequency estimation under local differential privacy.
\newblock \texttt{arXiv:2603.11523}, 2026.

\bibitem{strassen}
V.~Strassen.
\newblock The existence of probability measures with given marginals.
\newblock \emph{Annals of Mathematical Statistics}, 36(2):423--439, 1965.

\bibitem{kov}
P.~Kairouz, S.~Oh, and P.~Viswanath.
\newblock Extremal mechanisms for local differential privacy.
\newblock \emph{Journal of Machine Learning Research}, 17(17):1--51, 2016.

\bibitem{wang-ldp}
T.~Wang, J.~Blocki, N.~Li, and S.~Jha.
\newblock Locally differentially private protocols for frequency estimation.
\newblock In \emph{Proceedings of the 26th USENIX Security Symposium}, pages 729--745. USENIX, 2017.

\bibitem{ye-barg}
M.~Ye and A.~Barg.
\newblock Optimal schemes for discrete distribution estimation under locally differential privacy.
\newblock \emph{IEEE Transactions on Information Theory}, 64(8):5662--5676, 2018.

\bibitem{vantrees}
H.~L. Van Trees.
\newblock \emph{Detection, Estimation, and Modulation Theory, Part~I}.
\newblock Wiley, 1968.

\bibitem{torgersen}
E.~Torgersen.
\newblock \emph{Comparison of Statistical Experiments}.
\newblock Cambridge University Press, 1991.

\bibitem{winkler-moment}
G.~Winkler.
\newblock Extreme points of moment sets.
\newblock \emph{Mathematics of Operations Research}, 13(4):581--587, 1988.

\bibitem{blackwell-comparison}
D.~Blackwell.
\newblock Equivalent comparisons of experiments.
\newblock \emph{Annals of Mathematical Statistics}, 24(2):265--272, 1953.

\end{thebibliography}
\end{document}